\documentclass[mnsc,nonblindrev]{informs3}

\OneAndAHalfSpacedXI

\usepackage{natbib}
 \bibpunct[, ]{(}{)}{,}{a}{}{,}%
 %
 %
 %
 %
 %
\setcitestyle{authoryear}

 \usepackage{enumitem}
 
 \TheoremsNumberedThrough 
\EquationsNumberedThrough

\usepackage{booktabs} 
\usepackage[ruled]{algorithm2e} 
\usepackage{subcaption}
\usepackage[framemethod=tikz]{mdframed}
\usepackage{subfiles}
\usepackage{pstricks}
\usepackage{bm}

\SetAlFnt{\small}
\SetAlCapFnt{\small}
\SetAlCapNameFnt{\small}
\SetAlCapHSkip{0pt}
\IncMargin{-\parindent}

\newcolumntype{M}[1]{>{\centering\arraybackslash}m{#1}}
\newcolumntype{N}{@{}m{0pt}@{}}

\newcommand{\I}[1]{\mathbb{I}\Big(#1\Big)}
\renewcommand{\P}[1]{\mathbb{P}\left(#1\right)}
\newcommand{\E}[1]{\mathbb{E}\left(#1\right)}
\newcommand{\set}[1]{\mathcal{#1}}
\newcommand{\don}{s}
\newcommand{\Don}{\mathcal{S}}
\newcommand{\V}{\mathcal{V}}
\newcommand{\D}{S}
\newcommand{\J}{J}
\newcommand{\rtwo}{\hat{r}}
\newcommand{\Jtwo}{\hat{J}}
\newcommand{\Jhat}{\tilde{J}}
\newcommand{\aone}{\alpha_1}
\newcommand{\atwo}{\alpha_2}
\newcommand{\y}{\tilde{x}}
\newcommand{\Z}{\zeta}
\newcommand{\cvar}{c}
\renewcommand{\h}{h}
\newcommand{\revcolor}{}
\newcommand{\sdnpcolor}{}
\newcommand{\FWstep}{{\revcolor m}}
\newcommand{\bigval}{n}
\newcommand{\bigvalvol}{n}

\begin{document}
\TITLE{Online Policies for Efficient Volunteer Crowdsourcing}
\MANUSCRIPTNO{MS-RMA-20-01950}


\RUNAUTHOR{Manshadi and Rodilitz}
	
\RUNTITLE{Efficient Volunteer Crowdsourcing}

\ARTICLEAUTHORS{%
\AUTHOR{Vahideh Manshadi}
\AFF{Yale School of Management, New Haven, CT, \EMAIL{vahideh.manshadi@yale.edu}}
\AUTHOR{Scott Rodilitz}
\AFF{Yale School of Management, New Haven, CT, \EMAIL{scott.rodilitz@yale.edu}}
} 

\ABSTRACT{
Nonprofit crowdsourcing platforms such as food recovery organizations  rely on volunteers to perform time-sensitive tasks. Thus, their success 
crucially depends on efficient volunteer utilization and engagement. {To encourage volunteers to complete a task, platforms use nudging mechanisms to notify a subset of volunteers with the hope that at least one of them responds positively. However, since excessive notifications may reduce volunteer engagement,
the platform faces a trade-off between notifying more volunteers for the current task and saving them for future ones.}  
Motivated by these applications, we introduce the online volunteer notification problem,  a generalization of online stochastic bipartite matching where tasks arrive following a known time-varying distribution over task types. Upon arrival of a task, the platform notifies a subset of volunteers with the objective of minimizing the number of missed tasks. To capture each volunteer's adverse reaction to excessive notifications, we assume that a notification triggers a random period of inactivity, during which she will ignore all notifications. However, if a volunteer is active and notified, she will perform the task with a given pair-specific {match} probability that captures her preference for the task. 
{\sdnpcolor We develop an online randomized policy that achieves a constant-factor guarantee close to the upper bound we establish for the performance of any online policy. Our policy as well as hardness results are parameterized by the minimum discrete hazard rate of the inter-activity time distribution.
The design of our policy relies on sparsifying an ex-ante feasible solution by solving a sequence of dynamic programs.}
Further, in collaboration with Food Rescue U.S., a volunteer-based food recovery platform, we demonstrate  the effectiveness of our policy by testing it on the platform's data from various locations across the U.S.
}

\KEYWORDS{nonprofit crowdsourcing, volunteer management, notification fatigue, online platforms, competitive analysis}

\maketitle

\section{Introduction}
\label{sec:intro}

Volunteers in the U.S. provide around $8$ billion hours of free labor annually. However, roughly $30\%$ of volunteers become disengaged  the following year, representing a loss of approximately $\$70$ billion in economic value as well as a significant challenge for the sustainability of organizations relying on volunteerism \citep{nationalservice2015, independentsector2018}.
Lack of retention partially stems from overutilization as well as the mismatch between a volunteer's preferences and the opportunities presented to her \citep{ locke2003hold, brudney2009ain}. 
The emergence of online volunteer crowdsourcing platforms presents a unique opportunity to 
design data-driven volunteer management tools that cater to volunteers' {heterogeneous} preferences. In the present work, we move toward this goal by taking {an algorithmic} approach to designing nudging mechanisms commonly used to encourage volunteers to perform tasks.

This work is motivated by our collaboration with {Food Rescue U.S. (FRUS), a nonprofit platform that recovers food from businesses and donates it to local agencies by crowdsourcing the transportation to volunteers.} 
In the following, we provide background on FRUS and highlight the challenge 
it faces when making volunteer nudging decisions. Further, we offer insights into volunteer behavior by analyzing FRUS data from different locations. Then, we list a summary of our contributions.

{\bf FRUS: A Crowdsourcing Platform for Food Recovery:} 
FRUS is a leading online platform that simultaneously addresses the societal problems of food waste and hunger. Over 60 million tons of food go to waste in the U.S. each year, while in 2018, 37 million people—including 11 million children—lived in food-insecure households \citep{food_waste, food_insecurity}. This mismatch is driven in part by the cost of last-mile transportation required to recover perishable donated food from local restaurants and grocery stores. 
FRUS has empowered donors by connecting them to local agencies and enabling free delivery through its dedicated volunteer base. Currently, it operates in tens of locations across different states, and so far it has recovered over 50 million pounds of food. On FRUS, a volunteering task---which is referred to as a {\em rescue}---involves transporting a prearranged, perishable food donation from a donor to  a local agency. Scheduled donations are often recurring and they are posted on the FRUS app in advance. While around $78\%$ of rescues are claimed organically by volunteers before the day of the rescue, around $22\%$ remain unclaimed on the last day.\footnote{Here, by organic, we mean volunteers sign up for those rescues without the platform's involvement.} 
In that case, to encourage volunteers to claim the rescue, FRUS {notifies} a subset of volunteers with the hope that at least one of them responds positively.
However, based on our interviews with the platform's local managers, FRUS faces a challenge when deciding whom to notify: on the one hand, it aims to minimize the probability of a missed rescue---which is achievable by notifying more volunteers.\footnote{Based on FRUS data, a missed rescue increases the probability of donor dropout by a factor of more than 2.5.} On the other hand, it wants to avoid excessive notifications because that may reduce volunteer engagement.\footnote{FRUS's current practice in {many} locations is to notify a volunteer at most once a week. Further, FRUS is hesitant to demand prompt responses from volunteers, which renders the option of sequentially notifying volunteers impractical. \label{footnote:currentpractice}}  

{Understanding volunteer behavior can help resolve the aforementioned trade-off}: if volunteers have preferences for certain rescues, then FRUS should mainly notify them for those tasks. Our analysis of two years of data indeed indicates that volunteer preferences are fairly consistent. To highlight this, in Figure \ref{fig:pcatotal} we visualize the first three principal components for characteristics of rescues completed by the most active volunteers in two FRUS locations. Each color represents a different volunteer, and the size of each circle is proportional to the frequency with which the volunteer completes a rescue of that type. 
For instance, more than 90\% of the rescues completed by the red volunteer in Location (a), as shown in Figure \ref{fig:PCA1}, are clustered within a cube whose volume is less than one tenth of the PCA component range.
As evident from these plots, volunteers tend to claim rescues that have similar characteristics, reflecting their geographical and time preferences.

Our interviews and empirical findings raise a key question that motivates our work: facing such volunteer behavior, how should a volunteer-based online platform, such as FRUS, design an effective {notification system} for  time-sensitive tasks? 

\begin{figure}
 \centering
 \begin{subfigure}[b]{0.48\textwidth}
 \includegraphics[width=\textwidth]{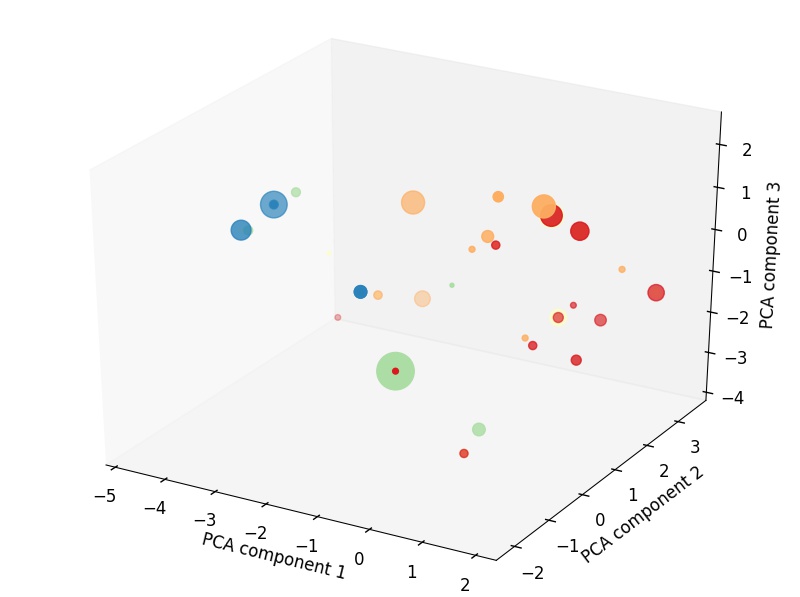}
  \caption{ }\label{fig:PCA1}
 \end{subfigure}
 \begin{subfigure}[b]{0.48\textwidth}
 \centering
 \includegraphics[width= \textwidth]{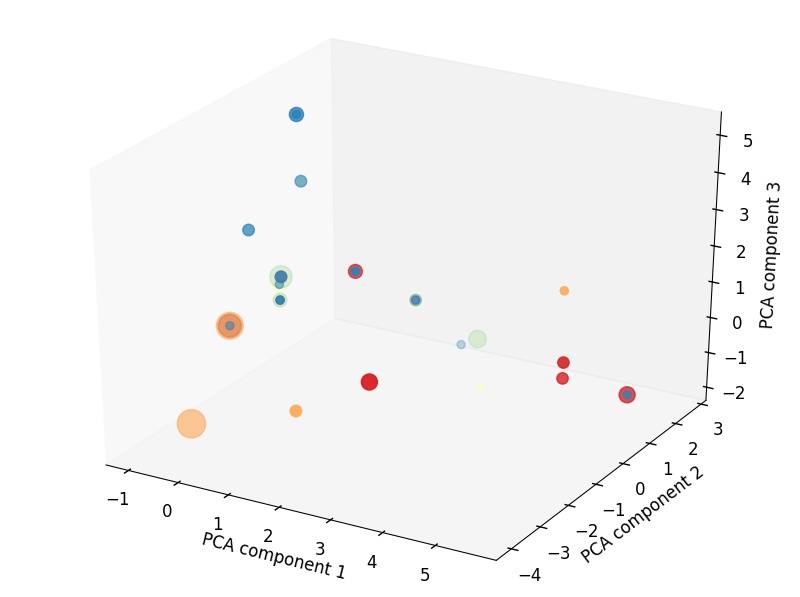}
 \caption{ }\label{fig:PCA2}
\end{subfigure}
\caption{Figure \ref{fig:PCA1} shows the first three principal components (PCs) for characteristics of rescues completed by the five most active volunteers
 in Location (a). Each color represents a different volunteer and the size of each circle is proportional to the frequency with which the volunteer completes a rescue with those PCs.
Figure \ref{fig:PCA2} shows the same plot for Location (b).}
\label{fig:pcatotal}
\end{figure}

{\bf Summary of Contributions:} Motivated by our collaboration with FRUS,
 we (i) introduce  the online volunteer notification problem which captures key features of volunteer  labor consistent with the literature, (ii) develop {\sdnpcolor an online randomized policy that achieves a constant-factor guarantee for the online volunteer notification problem,} (iii) establish an upper bound on the performance of any online policy, and (iv) demonstrate the effectiveness of our {\sdnpcolor policy by testing it} on FRUS's data from various locations across the U.S.

{\bf Modeling the Platform's Notification Problem:} We introduce the {\em online volunteer notification problem} to model a platform's notification decisions when utilizing volunteers to complete time-sensitive tasks. There are three main considerations that the platform should take into account: (i) volunteers' response to a notification is uncertain, (ii) the platform cannot expect volunteers to respond promptly, and (iii) if notified excessively, volunteers may suffer from notification fatigue. 
To include all of these considerations in our model, we assume that when each task arrives, the platform simultaneously notifies a {\em subset} of volunteers in the hope that at least one responds positively. To model a volunteer's adverse reaction toward excessive notifications, we assume that a volunteer can be in one of two possible states: {\em active} or {\em inactive}. In the former state, the volunteer pays attention to the platform’s notifications {and responds positively with her task-specific match probability, whereas in the latter state she ignores all notifications.} After being notified, an active volunteer will transition to the inactive state for a random inter-activity period. 
These three modeling considerations are aligned with the literature on volunteer management (see, e.g., \citealt{ata2016dynamic} and \citealt{dickerson2019blood}), where volunteers---unlike paid employees---cannot be \emph{required} to respond and typically go through periods of inactivity (as a consequence of utilization). Further, as mentioned in Footnote \ref{footnote:currentpractice}, in various locations, FRUS's managers follow the strategy of notifying a volunteer at most once per week and do not require prompt responses from volunteers. This practice fits within our modeling framework by setting the inter-activity period to be deterministic and equal to 7 days.

Because these platforms usually require the recurring completion of similar tasks, they can use historical data to predict their future last-minute needs. For instance, FRUS usually receives donations from the same source on a weekly basis. We model this by assuming that tasks belong to a given set of types and they arrive according to a (time-varying) distribution.
The platform makes online decisions aiming to maximize the number of completed tasks  knowing the arrival rates, match probabilities, and the inter-activity time distribution, but without observing the state of each volunteer.

Optimally resolving the trade-off between notifying a volunteer about the current task or saving her for the future depends jointly on the state of {\em all} volunteers. Consequently, determining the optimal notification strategy in the online volunteer notification problem is intractable due to the curse of dimensionality of the underlying dynamic program \emph{even when volunteers' states are known}. 
Thus, even for the special case of deterministic inter-activity periods the problem remains intractable.
In light of this challenge, we design {\sdnpcolor an online policy that can be computed in polynomial time,
and we prove a constant-factor guarantee on its performance. 
}

{\bf Developing an Online Policy:}  
{\sdnpcolor We develop a randomized policy based on an ex ante fractional solution that can be computed in polynomial time. 
In order to assess the performance of our policy,}
we use a {linear program benchmark} 
 whose optimal value serves as an upper bound on the value of a clairvoyant solution which knows the sequence of arrivals a priori as well as the state of volunteers at each time (see Program (LP),  Proposition \ref{prop:LP}, and Definition \ref{def:compratio}).
We remark that the platform's objective---{maximizing the number of completed tasks}---{\em jointly}  depends on the response of all volunteers and exhibits diminishing returns. For example, if the platform notifies two active volunteers $v$ and $u$ about a task of type $\don$, then the probability of completion would be $[1 - (1- p_{v,\don}) (1 - p_{u,\don})]$ where $p_{v,\don}$ and  $p_{u,\don}$ are the match probabilities of the pairs $(v,\don)$ and $(u,\don)$, respectively. {This} objective function presents two challenges: (i) an ex ante solution based on upper bounding such an objective  function by a piecewise linear one  {can be} ineffective in practice, and (ii) jointly analyzing volunteers' contribution for an online policy while keeping track of the joint distribution of their states (active or inactive) is prohibitively difficult, even in the special case of deterministic inter-activity times. We address the former by computing ex ante solutions that ``better'' approximate the true objective function as opposed to only relying on the LP solution (see Programs \eqref{MRG} and \eqref{SQ} and Proposition \ref{prop:fxstar}). We overcome the latter by assuming an artificial priority among volunteers which allows us to decouple their contributions (see Definition \ref{def:priority} and Lemma \ref{lem:falg}).

Attempting to follow the fractional ex ante solution can result in poor performance since, under such a policy, volunteers can become inactive at inopportune times (see Section \ref{subsec:upperboundexante} and Proposition \ref{prop:ubexante}). 
{\sdnpcolor Therefore, in the design of our policy, we modify the ex ante solution to account for inactivity while guaranteeing a constant-factor competitive ratio. Our policy,} the {\em sparse notification (SN) policy}, relies on {solving a sequence of dynamic programs (DPs)---one for each volunteer---to resolve the trade-off between notifying a volunteer now and saving her for future tasks. We solve the DPs in order of volunteers' artificial priorities, and each subsequent DP is formulated based on the previous solutions} (see Algorithm \ref{alg:two} and the preceding discussion).


{\sdnpcolor Our SN policy is parameterized by the minimum discrete hazard rate (MDHR) of the inter-activity time distribution, which serves as a sufficient condition for the level of  ``activeness'' of volunteers (see Definition \ref{def:hazard} and the following discussion).
We analyze the competitive ratio of our policy as a function of the MDHR. Our analysis relies on {decomposing the problem} into individual contributions based on our (artificial) priority scheme. We crucially use the dual-fitting framework of \citet{alaei2012online}, and our analysis relies on formulating a linear program along with its dual to place a lower bound on the optimal value of each volunteer's  DP (see Section \ref{subsec:alg2}).}\footnote{{\revcolor In Appendix \ref{app:sdnp}, we design and analyze a second policy, the {\em scaled-down notification (SDN) policy}, which achieves the same competitive ratio as the SN policy
(see Theorem \ref{thm:alg1}) 
using nearly identical computation time 
(see Remark \ref{remark:compcomplex}) 
by properly scaling down the notification probability prescribed by the ex-ante solution. However, the SN policy achieves significantly better performance than the SDN policy in the FRUS setting (see Figure \ref{fig:numericssdn}) and can perform nearly twice as well in certain instances (see the discussion in Appendix \ref{app:sdnp}). 
}}

{\bf {Upper Bound on Online Policies}:} In order to gain insight into the limitation of  online policies when compared to our benchmark, we develop  an upper bound on the achievable performance of any online policy, even policies which cannot be computed in polynomial time. {\sdnpcolor Like our policy, the upper bound is} parameterized by the MDHR (see Theorem \ref{thm:hardness}).
{As a consequence,} the gap between the achievable upper bound and our lower bound (attained through our {\sdnpcolor policy}) depends on the MDHR (see Figure \ref{fig:hardness}). When the MDHR is small, the gap is fairly small; however, the gap grows as the MDHR increases. 
Our upper bound relies on analyzing {\revcolor three instances} and is relatively tight when the MDHR is small.

{\bf Testing on FRUS Data:} 
In order to illustrate the effectiveness of our modeling approach and our {\sdnpcolor policy in practice, we  evaluate the performance of our SN policy by testing it} on FRUS's data from different locations. In Section \ref{sec:data}, we describe how we estimate model primitives and construct problem instances. Then we numerically show the superior performance of our {\sdnpcolor SN policy} when compared to different benchmarks, including strategies that resemble the current practice at different locations. Further, we present numerical results that demonstrate the robustness of our {\sdnpcolor policy} in the presence of small misspecifications of the model primitives, i.e., the arrival rates and the match probabilities (see  Appendix \ref{app:robustness} and Table \ref{table:robustness}).

While our collaboration with FRUS motivated us to introduce and study the online volunteer notification problem, our framework can be applied well beyond FRUS to a wide range of settings. Thousands of other nonprofits make use of platforms like DialMyCalls to send instantaneous notifications to their volunteer base.\footnote{Social network platforms such as Facebook have also been utilized as exemplified in the context of blood donation \citep{dickerson2019blood}.} Similar to FRUS, these nonprofits face the challenge of striking the right balance between notifying enough interested volunteers to fulfill an immediate need while avoiding excessive notification. Our framework and data-driven approach can be utilized in customizing such online notification systems. Moving beyond volunteer crowdsourcing, the negative impact of excessive notifications and marketing fatigue have been well-documented in marketing and social network engagement \citep{sinha2007over, cheng2010information, byers2012groupon}. In these applications, similar tensions arise between maximizing an immediate payoff (such as short-term engagement) and limiting notifications \citep{borgs2010game, lin2017overview, cao2019sequential}. As such, our general framework can also be applied to managing notification fatigue in the aforementioned contexts.

The rest of the paper is organized as follows. In Section \ref{sec:lit}, we review the related literature. In Section \ref{sec:model}, we formally introduce the online volunteer notification problem as well as the benchmark and the measure of competitive ratio.  
Section \ref{sec:algos} is the main algorithmic section of the paper and is devoted to describing and analyzing our {\sdnpcolor online policy}. 
In Section \ref{sec:hardness}, we present our upper bound on the achievable competitive ratio of 
any online policy as well as an upper bound on the performance of following the ex ante solution. 
In Section \ref{sec:data}, we revisit the FRUS application and {\sdnpcolor test our policy} on the platform’s data from various locations. Section \ref{sec:conclude} concludes the paper.

\section{Related Work}
\label{sec:lit}
Our work relates to and contributes to several streams of literature.

{\bf Volunteer Operations and Staffing}:
Due to the differences between volunteer and traditional labor as highlighted in \citet{sampson2006optimization}, managing a volunteer workforce provides unique challenges and opportunities that have been studied in the literature using various methodologies \citep{gordon2004improving, falasca2012optimization, lacetera2014rewarding,  sonmez2016improving, ata2016dynamic, dickerson2019blood, urrea2019volunteer, lo2021commitment, ata2021dynamic}. One key operational challenge is the uncertainty in both volunteer labor supply and demand. Using an elegant queuing model, \citet{ata2016dynamic} studies the problem of volunteer staffing with an application to gleaning organizations. Our approach to modeling volunteer behavior (specifically, assuming that notifying an active volunteer triggers a random period of inactivity) bears {some resemblance to} the modeling approach taken in \citet{ata2016dynamic}.

In a novel recent  work, \citet{dickerson2019blood} studies the problem of matching blood donors to donation centers, assuming that donors have preferences (over centers) and constraints on the frequency of receiving notifications. Using a stochastic matching policy, they demonstrate strong numerical performance relative to various benchmarks. 
{There are some similarities between our modeling approach and the approach used in \citet{dickerson2019blood}, but we highlight  {three} key differences.} (i) While their work focuses on the numerical evaluation of policies, we theoretically analyze the performance of our {\sdnpcolor policy} and provide an upper bound on the performance achievable by any online policy {(see Theorems \ref{thm:alg2} and \ref{thm:hardness})}.(ii) We model volunteers' adverse reactions to excessive notifications in a general form by considering arbitrary inter-activity time distributions. (iii) We parameterize our achievable upper and lower bounds by the minimum discrete hazard rate of that distribution.

{\bf  Crowdsourcing Platforms}:
Reflecting the growth of online technologies, there is a burgeoning literature on the operations of crowdsourcing platforms. 
Examples of {such }work include \citet{karger2014budget} for task crowdsourcing; \citet{ hu2015product} and  \citet{alaei2016dynamic} for crowdfunding;  \citet{Ilan} and \citet{Jacob} for crowdsourcing in transportation and urban mobility; and \citet{ali, chris, nikhil}, and \citet{papanastasiou2018crowdsourcing} for information crowdsourcing.
Our work adds to the growing collection of papers that focus specifically on nonprofit crowdsourcing platforms, with applications as varied as educational crowdfunding \citep{song2018matching}, disaster response \citep{han2019harnessing}, and smallholder supply chains \citep{de2019crowdsourcing}. {Nonprofits often {cannot rely on} monetary incentives; in such settings, successful crowdsourcing relies on efficient utilization and engagement of participants.} We contribute to this literature by  designing online policies for effectively notifying volunteers while avoiding  overutilization.

{\bf Online Matching and Prophet Inequalities}:
Abstracting away from the motivating application, our work is related to the stream of papers on online stochastic matching and prophet inequalities. Given the scope of this literature, we highlight only recent advances and kindly refer the interested reader to \citet{mehta2013online} for an informative survey. A standard approach is to design online policies based on an offline solution (see, e.g., \citealt{feldman2009online, OnlineWeighted, manshadi2012online, jaillet2014online, wang2018online}, and \citealt{stein2019advance})
and to compare the performance of these policies to a benchmark such as the clairvoyant solution described in \citet{golrezaei2014real}. Our work builds on this approach by applying techniques from prophet matching inequalities and the magician's problem \citep{alaei2012online, alaei2014bayesian}.
Further, while the classic setting for online matching focuses on bipartite graphs in which one side is static and the other side arrives online, a stream of recent papers (motivated by various applications) study dynamic matching problems in non-bipartite graphs \citep{Mypaper2,   ItaiAmin, Mypaper1} or in bipartite graphs where both sides arrive/leave  over time \citep{Vijay, aouad2019dynamic, TwoSidedProphet, Chiwei}.
Our setup also deviates from the classic online bipartite matching setting: although volunteers do not arrive online, they can be in two states (active or inactive) which can be viewed as arrival/departure. In contrast with the aforementioned papers, which all have an exogenous arrival/departure dynamic, volunteers' states in our setting are endogenously determined.

Our work  also contributes to a growing literature on online allocation of reusable resources. 
In a novel setting, \citet{adam} studies pricing of reusable resources and shows that static pricing achieves surprisingly good performance. {\revcolor Our work complements their approach by considering matching in a setting without prices.}
{\revcolor Closest to our framework are the innovative papers of \citet{dickerson2018allocation}, \citet{Rad2019}, \citet{gong2019online}, and \citet{rusmevichientong2020dynamic}. The former designs an adaptive policy to address an online stochastic matching problem in a setting with unit-capacity resources, while the latter three papers focus on online policies for resource allocation and assortment planning. We highlight three key 
ways in which our setting differs from these four papers.} (i) In our work, the platform's objective function is non-linear. Despite that, we only consider offline solutions that can be computed in polynomial time. {\revcolor In contrast, the four papers listed above either consider linear objectives or rely on an oracle {to solve an assortment optimization problem}.}
(ii) Volunteers---which represent the resources in our setting---can become unavailable without being matched (i.e., just through notification). (iii) We develop parameterized lower and upper bounds based on the minimum discrete hazard rate of the usage duration. {This} approach enables us to gain insight into the impact of characteristics of the usage duration distribution on the achievable bounds.

These crucial differences present new technical challenges which require us to develop new ideas 
in the design and analysis of our {\sdnpcolor policy} as well as in  setting a benchmark and computing an ex ante solution.
Our SN policy relies on solving individual-level DPs in order to sparsify an ex ante solution. The techniques used in the design and analysis of our SN policy build on ideas in \citet{alaei2012online}, \citet{alaei2014bayesian}, and \citet{Rad2019}. 
Further, 
our results rely on a primal-dual analysis, which is a powerful technique that has been used in other operational  problems (see, e.g., \citealt{ Jiawei}, \citealt{Gonzalo}, and \citealt{Levi2}). {\revcolor Additionally, in Appendix \ref{app:sdnp}, we design a second policy which scales down an ex ante solution by building on the approach of \citet{ma2018improvements} and \citet{dickerson2018allocation}.}




\section{Model}
\label{sec:model}

In this section, we formally introduce the 
\textit{online volunteer notification problem} that a volunteer-based crowdsourcing platform faces when deciding whom to notify for a task. 
{As part of} the problem definition, we highlight the platform's objective as well as the trade-off it faces due to the volunteers' adverse reactions to excessive notifications and the uncertainty in future tasks.  
Further, we define the measure of competitive ratio and establish a benchmark against which we compare the performance of any online policy.

The online volunteer notification problem  consists of a set of volunteers, denoted by $\set{V}$, and a set of task types, denoted by $\Don$.\footnote{For FRUS, a task represents a scheduled rescue (food donation) which has not been claimed in advance.} Volunteers (resp. task types) are indexed from $1$ to $|\V| = V$ (resp. $|\Don| = \D$). Over $T$ time steps, the platform solicits volunteers to complete a sequence of tasks. In particular, in each time step $t$, a task of type $\don$ arrives with known probability $\lambda_{\don,t}$. We assume at most one task arrives in each time step. Said differently, we assume $\sum_{\don = 1}^{\D} \lambda_{\don,t} \leq 1$ and with probability $1- \sum_{\don = 1}^{\D} \lambda_{\don,t} := \lambda_{0,t}$, no task arrives. Arrivals are assumed to be independent across time periods, but not \emph{within} each time period since at most one task arrives per period.

Whenever a task arrives, the platform 
{\revcolor must make an immediate and irrevocable decision about which volunteers to notify (if any), due to the time-sensitive nature of the tasks.}\footnote{{\revcolor In settings where decisions do not need to be made immediately, the benchmark which we establish continues to hold as does the lower bound achieved by our policy. However, additional strategies can be considered to take advantage of batching. See \citet{ItaiAmin} and \cite{feng2020batching} for two such examples.}}
Excessively notifying a volunteer may lead her to suffer from notification fatigue. To model this behavior in a general form, we assume that a volunteer can be in two possible states:  {\em active} or  {\em inactive}. In the former state, the volunteer pays attention to the platform's notifications, whereas in the latter state, she is inattentive {\revcolor and unaffected by additional notifications}. Initially, each volunteer is active.\footnote{{\revcolor In Section \ref{sec:conclude}, we discuss how our results extend when volunteers are not initially active.}} However, after {\revcolor an active volunteer is} notified
she transitions to the inactive state {\revcolor (regardless of whether or not she responds positively to the notification, as described below)}, and she will only become active again in $Z > 0$  periods, where  $Z$ is independently drawn from a known inter-activity time distribution denoted by $g(\cdot)$. Mathematically, $\P{Z = \tau} = g(\tau)$.

Before proceeding, we point out that similar modeling assumptions have been made in previous work. In particular, \citet{ata2016dynamic} models volunteer staffing for gleaning and assumes once a volunteer is utilized, she will go into a random repose period governed by an exponential distribution. Similarly, \citet{dickerson2019blood} focuses on blood donation and puts a constraint on the frequency with which a volunteer  can be notified, which is equivalent to assuming a deterministic inter-activity time. The latter strategy is also practiced in {many} FRUS locations.

To capture the minimum rate at which volunteers transition from inactive to active, we define the minimum discrete hazard rate of the inter-activity time distribution as follows:

\begin{definition}[Minimum Discrete Hazard Rate]
\label{def:hazard}
For a probability distribution $g(\cdot)$, the \emph{minimum discrete hazard rate} (MDHR) is given by $q = \min_{\tau \in \mathbb{N}} \frac{g(\tau)}{1-G(\tau-1)}$, {where $G(\cdot)$ denotes the corresponding CDF.}\footnote{By convention, if the fraction is $\frac{0}{0}$, we define it to be equal to 1.} {\revcolor We note that $q$ must be in the interval $[0,1]$.}
\end{definition}

{\revcolor If the inter-activity time distribution has an MDHR of $q$, then each volunteer will be active in each period with probability \emph{at least} $q$, regardless of the notification decisions made in the past. {Thus, we would intuitively expect that the cost of making a ``bad'' online decision diminishes as $q$ increases.
As we will show later,  both our lower bound (achieved by our policy) and our upper bound are increasing in $q$, which is aligned with this intuition (see Figure \ref{fig:hardness}).}}
We further highlight that a large value of $q$ is a sufficient condition to ensure that volunteers' activity level is high. For example, if $g(\cdot)$ is a geometric distribution, $q$ is the same as its success probability. {However, a small value of $q$ does not imply inactive volunteers:} if $g(2) = 1$, i.e., if the inter-activity times are deterministic and equal to 2 periods, then $q=0$ but volunteers are quite active.

If an  active volunteer $v$ is notified about a task of type $\don$, she will respond with match probability $p_{v,\don}$, independently from all other volunteers. Thus the arriving task is completed if at least one notified volunteer responds.\footnote{{For the remainder of the paper, when we say a volunteer ``responds'' we mean that the volunteer responds \emph{positively}, i.e., she is willing to complete the task.}} 
If a task of type $\don$ arrives at time $t$ and if the subset of notified and active volunteers is given by $\set{U}$, then the task will be completed with probability $1 - \prod_{v \in \set{U}} (1-p_{v,\don})$. We highlight that this probability is monotone and submodular with respect to the set $\set{U}$.
In Section \ref{sec:data}, we describe how $p_{v,\don}$ can be estimated accurately in the FRUS setting by using historical data.

As mentioned earlier, all volunteers are initially active. The platform knows the arrival rates $\lambda_{\don, t}$, the match probabilities $p_{v, \don}$, and the inter-activity time distribution $g(\cdot)$, but it does not observe volunteers' states. For any instance {$\set{I}$} of the online volunteer notification problem {where} $\set{I} = \big(\{\lambda_{\don,t}:\don \in [\D], t \in [T]\}, \{p_{v,\don}:v \in [V], \don \in [\D]\}, g \big)$,\footnote{For ease of notation, for any $a \in \mathbb{N}$, we use $[a]$ to refer to the set $\{1, 2, \dots, a\}$.} the platform's goal is to employ an online policy that maximizes the expected number of completed tasks. {\revcolor {This problem is a generalization of online stochastic bipartite matching, and it is intractable to solve optimally.
Indeed, even in the special case with no reusability and no subset selection, it is PSPACE hard to design a $(1-\epsilon)$-approximation of the optimal online policy \citep{papadimitriou2021online}.}}

Consequently, in order to evaluate an online policy, we
compare its performance to that of a clairvoyant solution that knows the entire sequence of arrivals in advance as well as volunteers' states in each period. However, {the clairvoyant solution} does not know \textit{before} notifying a volunteer how long her period of inactivity will be.
Two observations enable us to upper bound the clairvoyant solution with a polynomially-solvable program. First, note that if the clairvoyant solution notifies a subset of volunteers $\set{U}$ about a task of type $\don$, the probability of completing that task is 
\begin{align*}
    1 - \prod_{v \in \set{U}}(1-p_{v,\don}) \leq \min\Big\{ \sum_{v \in \set{U}}p_{v,\don} , 1\Big\}
\end{align*}

In words, we can upper bound the success probability of a subset $\set{U}$ with a piecewise-linear function that is the minimum of the expected total number of volunteer responses and $1$. {Second, recall that inactive volunteers ignore notifications. Thus, we can assume that the clairvoyant solution only notifies active volunteers, who will then become inactive for a random amount of time according to the inter-activity time distribution.} 
As a consequence, we can upper bound the clairvoyant solution via 
the program below,
which we denote by (LP).

\medskip
\begin{mdframed}
\begin{align}
\mathbf{LP}_\set{I} = \text{max}_{\mathbf{x}} \quad \quad \quad & \sum_{t=1}^T \sum_{\don=1}^{\D}   \lambda_{\don,t} \min\Big\{ \sum_{v=1}^V x_{v,\don,t}p_{v,\don}, 1\Big\}& \tag*{(LP)$^*$} \label{LP} \\
\text{s.t.} \ \ \ \qquad \quad 
    &0\leq  x_{v,\don,t}\leq 1  &\forall v, \don, t \label{eq:lpcon1} \\
    &1 \geq  \sum_{\tau =1}^t \sum_{\don=1}^{\D} \lambda_{\don,\tau} x_{v,\don,\tau} (1-G(t-\tau))  &\forall v, t \label{eq:lpcon2}
\end{align}
\rule{15.6cm}{.7pt}
\smallskip
\footnotesize{
\noindent $^*$ With a slight abuse of terminology, we refer to this program with a piecewise-linear objective as (LP). To express the above program as a linear program, we would replace $\min\{ \sum_{v=1}^V x_{v,\don,t}p_{v,\don}, 1 \}$ with $\sum_{v=1}^V x_{v,\don,t}p_{v,\don}$ in the objective. Then, we would add one constraint for each $\don, t$ pair ensuring that $\sum_{v=1}^V x_{v,\don,t}p_{v,\don} \leq 1$.} 
\end{mdframed}
\medskip

The decision variables $x_{v,\don,t}$ represent the probability of notifying volunteer $v$ when a task of type $\don$ arrives at time $t$. 
Constraint \eqref{eq:lpcon1} ensures that $x_{v,\don,t}$ is a valid probability. 
Constraint \eqref{eq:lpcon2} places limits on the frequency with which volunteers can be notified according to the inter-activity {\revcolor time} distribution. In particular, note that constraint \eqref{eq:lpcon2} can be written as
\begin{align*}
\sum_{\tau =1}^t \sum_{\don=1}^{\D} \lambda_{\don,\tau} x_{v,\don,\tau} \leq  1 + \sum_{\tau =1}^t \sum_{\don=1}^{\D} \lambda_{\don,\tau} x_{v,\don,\tau} G(t-\tau)
\end{align*}
The left hand side represents the expected total number of times a volunteer has been notified. {\revcolor The right hand side represents the {volunteer's initial active state plus the expected number of times the volunteer transitions from inactive to active.}}
Recall that any volunteer $v$ is initially active. 
Thus, the clairvoyant solution can notify her once. However, it will not notify  volunteer $v$ for the second time until she returns to the active state. Repeating this for all subsequent notifications up to time $t$ shows that
{\revcolor the notifications sent by the clairvoyant solution must respect the inter-activity time distribution in expectation, i.e.,}
the clairvoyant solution must meet constraint \eqref{eq:lpcon2}. 
We highlight that no online policy can achieve the optimal objective of (LP), even for instances with deterministic inter-activity times.
For ease of reference, in the following, we define the set of all feasible solutions to (LP). {Such a definition proves helpful in the rest of the paper.}
\begin{definition}[Feasible Set]
\label{def:P}
For any  $\mathbf{x} \in \mathbb{R}^{V \times \D \times T}$,
$\mathbf{x} \in \set{P}$ if {and only if} it satisfies
constraints \eqref{eq:lpcon1} and \eqref{eq:lpcon2}. 
\end{definition}

 The following proposition, which we prove in Appendix \ref{proof:LP}, establishes the relationship between the clairvoyant solution and $\mathbf{LP}_\set{I}$:

\begin{proposition}[Upper Bound on the Clairvoyant Solution] \label{prop:LP}
{For any instance $\set{I}$ of the online volunteer notification problem, $\mathbf{LP}_\set{I}$  is an upper bound on its clairvoyant solution}.
\end{proposition}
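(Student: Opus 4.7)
The plan is to extract a feasible solution $\mathbf{x}$ to (LP) from the clairvoyant's (possibly randomized and history-dependent) policy, and then to show that the LP value at $\mathbf{x}$ upper bounds its expected number of completed tasks. Fix any clairvoyant policy $\pi^\star$, and for each $(v,\don,t)$ with $\lambda_{\don,t}>0$ set $x_{v,\don,t}$ equal to the probability, under $\pi^\star$ and the joint arrival law, that $\pi^\star$ notifies $v$ at time $t$ conditional on a task of type $\don$ arriving at $t$ (set $x_{v,\don,t}=0$ otherwise). Thus $\lambda_{\don,t}\,x_{v,\don,t}$ is the unconditional probability of the joint event, and \eqref{eq:lpcon1} holds automatically because $x_{v,\don,t}\in[0,1]$. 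What remains is to verify \eqref{eq:lpcon2} and to dominate the objective.

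For \eqref{eq:lpcon2}, I first argue that without loss of optimality $\pi^\star$ only notifies active volunteers, since notifying an inactive one cannot change any response yet potentially wastes nothing gained. Under this restriction, for any $v$ and any $t$, at most one notification to $v$ issued in $[1,t]$ can still be ``in effect'' at $t$: if two were, the later one would have been sent while $v$ was still inactive from the earlier one. Letting $N_{v,\tau}$ be the indicator that $\pi^\star$ notifies $v$ at $\tau$ and $Z_\tau$ the associated inter-activity draw (independent of everything up to time $\tau$ since it is revealed only upon notification), this gives the almost-sure pointwise inequality $\sum_{\tau=1}^{t} N_{v,\tau}\,\mathbf{1}\{Z_\tau>t-\tau\}\le 1$. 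Taking expectations and using independence of $Z_\tau$ from $N_{v,\tau}$ yields $\sum_{\tau=1}^{t}\sum_{\don=1}^{\D}\lambda_{\don,\tau}\,x_{v,\don,\tau}\,(1-G(t-\tau))\le 1$, which is exactly \eqref{eq:lpcon2}. Hence $\mathbf{x}\in\set{P}$.

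For the objective, fix $(\don,t)$ and let $\set{U}_{\don,t}$ denote the random subset of notified volunteers under $\pi^\star$ conditional on a type-$\don$ task arriving at $t$. The conditional completion probability is $\E{1-\prod_{v\in\set{U}_{\don,t}}(1-p_{v,\don})}$. For every realization of $\set{U}_{\don,t}$, the pointwise bound $1-\prod_{v\in\set{U}_{\don,t}}(1-p_{v,\don})\le \min\{\sum_{v\in\set{U}_{\don,t}} p_{v,\don},\,1\}$ holds, since the left side is simultaneously at most the expected number of responses and at most $1$. Because $y\mapsto\min\{y,1\}$ is concave, Jensen's inequality then gives $\E{\min\{\sum_{v\in\set{U}_{\don,t}} p_{v,\don},\,1\}}\le \min\{\sum_{v=1}^{V} x_{v,\don,t}\,p_{v,\don},\,1\}$. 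Multiplying by $\lambda_{\don,t}$ and summing over $(\don,t)$ shows that the LP objective at $\mathbf{x}$ dominates $\pi^\star$'s expected number of completed tasks, which establishes the proposition.

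The main subtlety is the ``at most one in-effect notification'' step: it relies on (a) restricting $\pi^\star$ WLOG to notify only active volunteers, and (b) the independence of $Z_\tau$ from the notification history up to $\tau$, which is what lets the expectation of the product factor into $x_{v,\don,\tau}\,(1-G(t-\tau))$. The objective step is routine once one observes that the pointwise bound $1-\prod(1-p)\le \min\{\sum p,\,1\}$ must be applied \emph{before} taking the expectation; reversing the order would require convexity of $\min\{\cdot,1\}$, which fails.
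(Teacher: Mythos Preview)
Your proof is correct and follows essentially the same approach as the paper: both construct the same conditional notification probabilities as a feasible LP solution, both verify constraint~\eqref{eq:lpcon2} via the ``at most one notification in effect at time $t$'' sample-path argument combined with independence of $Z_\tau$ from the notification history up to $\tau$, and both dominate the objective via the piecewise-linear bound. The only cosmetic difference is in the objective step: the paper bounds the completion probability separately by $1$ and by $\sum_v x_{v,\don,t}\,p_{v,\don}$ and then takes the minimum, whereas you apply the pointwise bound first and then invoke Jensen's inequality on the concave map $y\mapsto\min\{y,1\}$; these are equivalent elementary arguments.
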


In light of Proposition \ref{prop:LP}, we use $\mathbf{LP}_\set{I}$ as a benchmark against which we compare the performance of any policy. Consequently, we define the competitive  ratio of an online policy as follows:\footnote{{\revcolor In the same spirit as \citet{golrezaei2014real, Rad2019}, and \citet{ma2020dynamic}, we define the competitive ratio relative to a Bayesian expected linear program benchmark as opposed to the exact clairvoyant solution.}}

 \begin{definition}[Competitive Ratio]
 \label{def:compratio}
 An online policy is $c$-competitive for the online volunteer notification problem if for any instance $\set{I}$, we have: $\mathbf{POL}_{\set{I}} \geq c \mathbf{LP}_{\set{I}}$, where $\mathbf{POL}_{\set{I}}$ represents the expected number of completed tasks by the online policy for instance ${\set{I}}$.
 \end{definition}

We will use the competitive ratio as a way to quantify the performance of an online policy. {\sdnpcolor For our SN policy} (presented in the following section), the competitive ratio is parameterized by the MDHR, $q$, and it improves as $q$ increases.

\section{Policy Design and Analysis}
\label{sec:algos}

In this section, we present and analyze {\sdnpcolor our SN policy for the online volunteer notification problem. This policy is randomized and relies on a fractional solution that} we compute ex ante using the instance primitives. Thus, we begin this section by introducing the ex ante solution in Section \ref{subsec:offline}. We then proceed to describe {\sdnpcolor our policy and analyze its competitive ratio in Section \ref{subsec:alg2}.}

\subsection{Ex Ante Solution}
\label{subsec:offline}
As stated in Section \ref{sec:intro}, {\sdnpcolor our online policy relies} on an ex ante solution which we  denote  by $\mathbf{x^*} \in [0,1]^{V \times \D \times T}$.  Given our benchmark, we focus our attention on solutions that are feasible in (LP), i.e., $\mathbf{x^*} \in \set{P}$ (see Definition \ref{def:P}). Clearly, $\mathbf{x^*_{LP}}$---the solution to (LP) in Section \ref{sec:model}---is a potential ex ante solution. However, in practice, such a solution {can} prove ineffective because it does not take into account the diminishing returns of notifying an additional volunteer about a task. As a result, it may ignore some tasks while notifying an excessive number of volunteers about others {(e.g., see the discussion and examples in Appendix \ref{ex:xstarcandidates})}.

{\revcolor Suppose that volunteers will always be active as long as the notifications sent to them respect the inter-activity time distribution in expectation, as given by constraint \eqref{eq:lpcon2}. In other words, as long as} 
$\mathbf{x} \in \set{P}$, suppose volunteers are always active when notified. Then if we notify each volunteer independently according to $\mathbf{x}$, the expected number of completed tasks would be:\footnote{Since a task can only be completed if one arrives, we limit all sums to task types indexed from $1$ to $\D$.}
\begin{align}
\label{eq:fdef1}
f(\mathbf{x}):= \sum_{t=1}^T \sum_{\don=1}^{\D} \lambda_{\don,t} \Big(1 - \prod_{v=1}^{V} (1- x_{v,\don,t}p_{v,\don}) \Big).
\end{align}
Because $\mathbf{x^*_{LP}}$ is the optimal solution of a piecewise-linear objective, it ignores the submodularity in  $f(\mathbf{x})$.\footnote{We remark that we design our online {\sdnpcolor policy such that it achieves} a constant factor of $f(x)$ {as defined in \eqref{eq:fdef1}.}}
In light of this intuition, we introduce two other candidates that can be computed in polynomial time. First, we aim to find the feasible point that maximizes $f(\cdot)$. We denote this optimization problem by \eqref{MRG} which stands for {\em Always Active}. Even though \eqref{MRG} is $NP$-hard \citep{bian2017guaranteed},  
simple polynomial-time algorithms such as the variant of the Frank-Wolfe algorithm described below (proposed in \citealt{bian2017guaranteed}) work well in practice. The algorithm iteratively maximizes a linearization of $f(\mathbf{x})$ and returns an average of feasible solutions, which therefore must be feasible. We denote the output of this algorithm by $\mathbf{x^*_{AA}}$ and use it as another candidate for the ex ante solution.

\medskip

\noindent
\begin{minipage}
{.23\linewidth}
\begin{mdframed}
\vspace{15.0pt}
\begin{equation}
  \max_{\mathbf{x} \in \mathcal{P}} f(\mathbf{x}) \label{MRG} \tag{AA}
\end{equation}
\vspace{15.0pt}
\end{mdframed}
\end{minipage}%
\begin{minipage}{.75\linewidth}
\begin{mdframed}
\underline{Approximating \ref{MRG} via Frank-Wolfe variant with step size $1/\FWstep$:}
\smallskip
\begin{enumerate}
\item Set $\mathbf{x}^0 = \mathbf{0}$.
    \item \textbf{For} $i$ from $1$ to $\FWstep$:
    \begin{enumerate}
        \item Solve $\mathbf{y}^i =\text{argmax}_{\mathbf{x} \in \set{P}} \langle \mathbf{x}, \nabla f(\mathbf{x}^{i-1}) \rangle$
        \item Set $\mathbf{x}^i = \mathbf{x}^{i-1} +\frac{1}{\FWstep} \mathbf{y}^i$
    \end{enumerate}
\item Return $\mathbf{x}^\FWstep$
\end{enumerate}
\end{mdframed}
\end{minipage}

\medskip

Note that the expected number of completed tasks, as defined in \eqref{eq:fdef1}, jointly depends on the contributions of all volunteers. 
This property makes optimizing such an objective challenging. Further, when assessing any online policy in this setting, jointly analyzing volunteers’ contributions  while keeping track of the joint distribution of their states (active or inactive) is prohibitively difficult.
We overcome this challenge by defining the following artificial priority scheme among volunteers which enables us to ``decouple'' the contributions of volunteers and find our last candidate for the ex ante solution.

\begin{definition}[Index-Based Priority Scheme]
\label{def:priority}
Under the index-based priority scheme, if multiple volunteers respond to a notification, the one with the smallest index completes the task.\footnote{Note that this priority scheme is without loss of generality, since in the online volunteer notification problem, \emph{all} active volunteers who receive a notification become inactive.}
\end{definition}

Following the index-based priority scheme allows us to define individual contributions for
each volunteer as shown in the following lemma (proven in Appendix \ref{proof:falg}).
\begin{lemma}[Volunteer Priority-Based Contributions]
\label{lem:falg}
For any $\mathbf{x} \in [0,1]^{V \times \D \times T}$, $f(\mathbf{x}) = \sum_{v = 1}^{V} f_v(\mathbf{x})$
 where $f(\cdot)$ is defined in \eqref{eq:fdef1} and
 \begin{align}
 \label{eq:decouple}
 f_v(\mathbf{x}) := \sum_{t=1}^T \sum_{\don=1}^{\D} \lambda_{\don,t} \Big(\prod_{u < v}(1-p_{u, \don} x_{u, \don, t})\Big)  p_{v, \don} x_{v, \don, t}.
\end{align}
\end{lemma}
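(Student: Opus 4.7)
The plan is to prove the identity task-by-task and arrival-time-by-arrival-time, reducing the claim to a clean one-dimensional telescoping identity, and then sum over $\don$ and $t$ weighted by the arrival probabilities $\lambda_{\don,t}$.

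First I would fix an arbitrary task type $\don$ and time $t$, and let $a_v := x_{v,\don,t}\, p_{v,\don}$ for $v \in [V]$. With this notation, the contribution of $(\don,t)$ to $f(\mathbf{x})$ is $\lambda_{\don,t}\bigl(1 - \prod_{v=1}^V (1-a_v)\bigr)$, while its contribution to $\sum_{v=1}^V f_v(\mathbf{x})$ is $\lambda_{\don,t}\sum_{v=1}^V \Bigl(\prod_{u<v}(1-a_u)\Bigr) a_v$. So it suffices to show, for any reals $a_1,\dots,a_V$, the identity
\begin{equation*}
\sum_{v=1}^V a_v \prod_{u<v}(1-a_u) \;=\; 1 - \prod_{v=1}^V (1-a_v).
\end{equation*}

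Next I would verify this identity by a one-line telescoping argument: write $a_v = 1 - (1-a_v)$, so each summand equals $\prod_{u<v}(1-a_u) - \prod_{u\leq v}(1-a_u)$, and the sum collapses to $\prod_{u<1}(1-a_u) - \prod_{u\leq V}(1-a_u) = 1 - \prod_{v=1}^V(1-a_v)$ (using the empty-product convention $\prod_{u<1}(1-a_u) = 1$). Alternatively, one can give a probabilistic interpretation: treating $a_v \in [0,1]$ as the probability that volunteer $v$ independently responds, the left-hand side is the probability that some volunteer responds and the smallest-indexed such volunteer is assigned, while the right-hand side is the probability that at least one volunteer responds; these are equal because the events ``volunteer $v$ is the smallest-indexed responder'' over $v = 1,\dots,V$ partition the event ``at least one volunteer responds.''

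Finally, I would multiply the pointwise identity by $\lambda_{\don,t}$ and sum over $\don \in [\D]$ and $t \in [T]$ to conclude $f(\mathbf{x}) = \sum_{v=1}^V f_v(\mathbf{x})$. I do not anticipate any real obstacle: the only substantive step is the telescoping identity, which is elementary. The conceptual content of the lemma lies not in the calculation itself but in the observation that the index-based priority scheme of Definition \ref{def:priority} provides a way of attributing each completed task to a unique volunteer, thereby decoupling the joint submodular objective into a sum of per-volunteer contributions; this is what will be leveraged in the subsequent analysis of the SN policy.
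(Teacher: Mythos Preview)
Your proof is correct and essentially equivalent to the paper's, though the presentation differs slightly: the paper proves the identity by induction on $V$ applied to the full sum $f(\mathbf{x})$, peeling off the $(V{+}1)$-th volunteer and invoking the inductive hypothesis, whereas you fix $(\don,t)$ first and then use the telescoping rewrite $a_v\prod_{u<v}(1-a_u) = \prod_{u<v}(1-a_u) - \prod_{u\le v}(1-a_u)$. Your telescoping argument is just the unrolled version of the paper's induction, so the two are the same proof at heart; your version is a bit more direct and also makes the connection to the index-based priority scheme explicit via the probabilistic interpretation, which is a nice touch since that interpretation is precisely what is exploited downstream.
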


Once again, suppose for a moment that volunteers are always active. Then for any $v \in [V]$, the term $\left(\prod_{u < v}(1-p_{u, \don} x_{u, \don, t})\right)  p_{v, \don} x_{v, \don, t}$ in \eqref{eq:decouple} 
represents the probability that under the index-based priority scheme, volunteer $v$ is the lowest-indexed volunteer to respond positively to a notification about a task of type $\don$ at time $t$.
Further, this term only depends on the fractional solution of volunteers with lower index than $v$. Thus, if we treat $x_{u, \don, t}$ as fixed for $1 \leq u <v$, then $\left(\prod_{u < v}(1-p_{u, \don} x_{u, \don, t})\right)  p_{v, \don} x_{v, \don, t}$ is linear in $x_{v, \don, t}$. In light of these observations, we define our last candidate as the solution of a  series of linear programs in which volunteers sequentially maximize their individual contributions in the order of their priority. This is summarized in the program \eqref{SQ}.

The separate but sequential nature of these programs leads to an efficiently-computable solution which takes into account the diminishing returns from notifying multiple volunteers.  To be specific, for a given volunteer $v$, the program \eqref{SQ} uses the solutions from previous iterations denoted by $x^{SQ}_{u,\don,t}$ for $u \in [v-1]$, $\don \in [\D]$, and $t \in [T]$.
As a result, the objective of \eqref{SQ} is a linear function of its decision variables, i.e., the $x_{v, \don, t}$ variables. Thus, \eqref{SQ} is a linear program, and its objective incorporates the externalities  imposed by  lower-indexed volunteers. We denote the solution to these $V$ sequential linear programs as $\mathbf{x^*_{SQ}}$.\footnote{The vector $\mathbf{x^*_{SQ}}$ consists of variables $x^{SQ}_{u,\don,t}$ for $u \in [V]$, $\don \in [\D]$, and $t \in [T]$.} Finally, we remark that the above decoupling idea proves helpful in both the design and analysis of our online {\sdnpcolor policy}. 

\medskip
\begin{mdframed}
\textbf{For} $v$ from $1$ to $V$:
\begin{align}
\max_{\{x_{v,\don,t}:\don \in [\D], t \in [T]\}} \ \  \quad \quad & \sum_{t=1}^T \sum_{\don=1}^{\D}  \lambda_{\don,t} \left(\prod_{u < v}(1-p_{u, \don} x^{SQ}_{u, \don, t})\right)  p_{v, \don} x_{v, \don, t}& \tag{SQ-$v$} \label{SQ} \\
\text{subject to}  \qquad \ \
   0 \leq &x_{v,\don,t} \leq 1 &\forall \don, t \nonumber \\ 
   1 \geq & \sum_{\tau =1}^t \sum_{\don=1}^{\D} \lambda_{\don,\tau} x_{v,\don,\tau} (1-G(t-\tau))  &\forall t \nonumber
\end{align}
\end{mdframed}
\medskip
Having three candidates, we define \begin{equation}
    \mathbf{x^*}:= \text{argmax}_{\mathbf{x} \in \{\mathbf{x^*_{LP}}, \mathbf{x^*_{AA}}, \mathbf{x^*_{SQ}} \}} f(\mathbf{x}) \label{eq:xstar}
\end{equation} 
The following proposition, which we prove in Appendix \ref{proof:fxstar}, establishes a lower bound on $f(\mathbf{x^*})$ based on the benchmark $\mathbf{LP}$. 

\begin{proposition}[{Lower Bound on Ex Ante Solution}] \label{prop:fxstar}
For $\mathbf{x^*}$ defined in \eqref{eq:xstar},  $$f(\mathbf{x^*})  \geq (1-\frac{1}{e})\mathbf{LP}.$$
\end{proposition}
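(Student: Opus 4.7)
The plan is to prove the bound for $\mathbf{x^*_{LP}}$ alone, which immediately gives the result since $f(\mathbf{x^*}) \geq f(\mathbf{x^*_{LP}})$ by the definition of $\mathbf{x^*}$ in \eqref{eq:xstar} (note that $\mathbf{x^*_{LP}}$ lies in $\set{P}$ by construction). This reduces the proposition to the claim $f(\mathbf{x^*_{LP}}) \geq (1-\tfrac{1}{e})\mathbf{LP}$, which I would prove term-by-term: for each $(t,\don)$ pair I would show that the $f$-contribution is at least $(1-\tfrac{1}{e})$ times the LP-contribution, and then sum with the weights $\lambda_{\don,t}$.

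The heart of the argument is a two-step inequality, both standard. Fix $(t,\don)$ and write $a_v := x^{LP}_{v,\don,t} p_{v,\don}$; constraint \eqref{eq:lpcon1} together with $p_{v,\don} \in [0,1]$ ensures $a_v \in [0,1]$, so $1 - a_v \leq e^{-a_v}$ and hence $1 - \prod_v (1-a_v) \geq 1 - \exp(-\sum_v a_v)$. The second inequality is $1 - e^{-s} \geq (1-\tfrac{1}{e})\min\{s,1\}$ for all $s \geq 0$, which on $[0,1]$ follows from concavity of $1-e^{-s}$ together with the matching endpoint values at $s=0$ and $s=1$, and on $[1,\infty)$ follows because the left side is at least $1-\tfrac{1}{e}$. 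Chaining the two yields
\begin{align*}
1 - \prod_{v=1}^V \bigl(1 - x^{LP}_{v,\don,t}\, p_{v,\don}\bigr) \;\geq\; \Bigl(1-\tfrac{1}{e}\Bigr)\,\min\Bigl\{\sum_{v=1}^V x^{LP}_{v,\don,t}\, p_{v,\don},\; 1\Bigr\},
\end{align*}
and multiplying by $\lambda_{\don,t}$ and summing over $t$ and $\don$ gives $f(\mathbf{x^*_{LP}}) \geq (1-\tfrac{1}{e})\mathbf{LP}$, matching the objective form of (LP) exactly.

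I do not expect a real technical obstacle here: the proof is essentially the classical $(1-\tfrac{1}{e})$ relaxation argument applied termwise, and it does not use $\mathbf{x^*_{AA}}$ or $\mathbf{x^*_{SQ}}$ at all. The conceptual point worth flagging in the writeup is \emph{why} the proposition is nonetheless stated for the three-way argmax: although $\mathbf{x^*_{LP}}$ already suffices for the worst-case guarantee, it optimizes a piecewise-linear surrogate that ignores the diminishing-returns structure of $f$, so $\mathbf{x^*_{AA}}$ and $\mathbf{x^*_{SQ}}$ (which directly respect the submodularity of $f$ via \eqref{MRG} and \eqref{SQ}) are what actually drive the practical improvements visible in the FRUS numerics, while leaving the $(1-\tfrac{1}{e})$ bound untouched.
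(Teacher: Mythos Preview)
Your proposal is correct and arrives at the same termwise inequality as the paper, namely
\[
1-\prod_{v=1}^{V}(1-x_{v,\don,t}p_{v,\don}) \;\geq\; (1-\tfrac{1}{e})\,\min\Bigl\{\sum_{v=1}^{V} x_{v,\don,t}p_{v,\don},\,1\Bigr\},
\]
but the route is different. The paper fixes $\sum_v a_v = c$ and solves the constrained maximization of $\prod_v(1-a_v)$ by induction on $V$, showing the maximum is $(1-c/n)^n$ (an AM--GM style argument), and then separately proves $\frac{1-(1-c/n)^n}{c}\geq 1-\tfrac1e$ by monotonicity in $n$ and $c$. Your argument replaces both steps with the single bound $1-a_v \leq e^{-a_v}$, which immediately collapses the product to $e^{-\sum_v a_v}$ and reduces everything to the scalar inequality $1-e^{-s}\geq(1-\tfrac1e)\min\{s,1\}$. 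This is shorter and more elementary; the paper's optimization argument, on the other hand, makes explicit that the worst case is attained when the $a_v$'s are equal and identifies the exact extremal value for each finite $V$, which is slightly more informative even though it is not needed for the proposition.
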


The above worst case ratio is achieved by the ratio of $f(\mathbf{x^*_{LP}})$ to $\mathbf{LP}$, and it is tight. However, we stress that $\mathbf{x^*_{AA}}$ and  $\mathbf{x^*_{SQ}}$ can provide significant improvements. A simple example illustrating this point can be found in Appendix \ref{ex:xstarcandidates}, along with an example demonstrating that $f(\mathbf{x^*_{LP}})$ can be strictly greater than $f(\mathbf{x^*_{AA}})$ and $f(\mathbf{x^*_{SQ}})$. 
These examples show that none of the three solutions is universally dominant (or dominated); as such, we define $\mathbf{x^*}$ to be the maximum of the three.

We conclude this section by noting that an online policy which directly follows $\mathbf{x^*}$ (i.e., a policy that at time $t$, upon arrival of $\don$, notifies volunteer $v$ independently with probability $x^*_{v,\don,t}$) achieves a competitive ratio of at most $q$, as shown in Proposition \ref{prop:ubexante} in Section \ref{subsec:upperboundexante}. This hardness result stems from the fact that $\mathbf{x^*}$ ``respects'' the inactivity period of volunteers only in expectation. Consequently, under a policy of directly following $\mathbf{x^*}$, it is possible that volunteers are inactive when high-value tasks (e.g. tasks where the match probability is close to $1$) arrive because they were notified earlier (according to $\mathbf{x^*}$) for low-value tasks. {\sdnpcolor Therefore, we develop a policy based on a sparsification of the ex ante solution, which we describe and analyze in the subsequent section.}

\subsection{Sparse Notification Policy}
\label{subsec:alg2}

{\revcolor Before we present the sparse notification (SN) policy---which earns its name by sparsifying the ex ante solution---momentarily consider a simpler policy which proportionally scales down the ex ante solution. Though intuitive, such a policy 
relies exclusively on the ex ante solution to resolve the trade-off between the immediate reward of notifying a volunteer and saving her for a future arrival.
Rather than considering each decision individually, it adjusts the ex ante solution on an aggregate level, which can be suboptimal: even in the last period $T$, such a policy {follows a scaled-down version of $\mathbf{x^*}$}
  despite getting no benefit from saving a volunteer for a future arrival.}

{To more accurately resolve this trade-off, in designing the SN policy, we utilize} the ex ante solution and the index-based priority scheme (see Definition \ref{def:priority}) {to} formulate a sequence of one-dimensional DPs whose optimal value will serve as a lower bound on the contribution of each volunteer according to her priority (as shown in Lemma \ref{lem:contributionsalg2}). The solution of these DPs is a sparsified version of the ex ante solution $\mathbf{x^*}$. Namely, let us denote $\mathbf{\y}$ as the solution of the sequence of DPs. For any $v$, $\don$, and $t$, $\tilde{x}_{v,\don, t}$ is either $0$ or $x^*_{v,\don, t}$.
Equipped with $\mathbf{\y}$, {which we compute in advance,} the SN policy probabilistically follows $\mathbf{\y}$ in the online phase. Our DP formulation and its analysis builds on the framework developed in  \citet{alaei2012online} and \citet{alaei2014bayesian}, which is also used in \citet{Rad2019}. 

Next we describe the DP formulation. Consider volunteer $v \in [V]$ and suppose we have already solved the first $(v-1)$ DPs. Thus we have $\{\tilde{x}_{u,\don,t}: u \in [v-1], \don \in [\D], t \in [T]\}$. Let us denote the value-to-go of the DP at time $t$ by $J_{v,t}$. As mentioned above, we define the DP such that 
 $J_{v,t}$ is  a lower bound on the expected number of tasks that volunteer $v$ will complete between $t$ and $T$ when active at time $t$ under the SN policy and the index-based priority scheme (we prove this assertion in Lemma \ref{lem:contributionsalg2}). Clearly $J_{v,T+1} =0$. To specify $J_{v,t}$ for $t \in [T]$, we first define $v$'s reward at time $t$ when active and notified about a task of type $\don$ as follows: 
\begin{equation}
r_{v, \don, t} := p_{v,\don} \prod_{u =1}^{v -1}(1-\y_{u,\don,t}p_{u,\don})\footnote{We emphasize that this is not the actual reward, i.e., it is not the probability that volunteer $v$ completes a task of type $\don$ under the index-based priority scheme. However, it is a lower bound, as shown in the proof of Lemma \ref{lem:contributionsalg2}.} \label{eq:rewards}
\end{equation}
{The only two actions available when a task of type $\don$ arrives at time $t$ are} to notify $v$ with probability $x^*_{v,\don,t}$ or to not notify $v$. Thus when deciding on the optimal action, we compare the (current and future) reward of notifying $v$ now to the reward of saving her for the next period. Formally,
\begin{align}
\label{eq:DP:sol}
    \tilde{x}_{v, \don, t} = x^*_{v, \don, t} \I{r_{v,\don,t} + \sum_{\tau = t+1}^T g(\tau - t)\J_{v, \tau} \geq \J_{v,t+1}}
\end{align}
The term in the indicator on the left hand side is the reward of notifying $v$ in the current period $t$, which consists of two parts: (i) the immediate reward we get from notifying $v$---which will make her inactive for $Z$ periods---and (ii) the future reward once she becomes active again. The right hand side within the indicator simply represents the reward when $v$ is not notified {and remains active in period $t+1$}. Given \eqref{eq:rewards}, \eqref{eq:DP:sol}, {and $J_{v, T+1} = 0$}, we can iteratively compute $\{J_{v,t}; t \in [T] \}$ as follows:\footnote{{To compute the value-to-go at time $t$, we must sum over \emph{all} possible arrivals, including tasks of type $0$ (i.e., no task). By convention, we set variables associated with tasks of type $0$ (e.g. $x_{v,0,t})$ to be $0$.}}
\begin{align}
    J_{v,t} = \sum_{\don=0}^{\D} \lambda_{\don, t}\Big((1-\y_{v, \don, t})\J_{v,t+1} + \y_{v,\don,t}\Big( r_{v,\don,t} + \sum_{\tau = t+1}^{T} g(\tau - t)\J_{v, \tau}\Big) \Big) \label{eq:jv}
\end{align}
The formal definition of our policy is presented in Algorithm \ref{alg:two}. In the rest of this section, we analyze the competitive ratio of the SN policy. Our main result is the following theorem:

\begin{algorithm}
	\textbf{Offline Phase}:
	    \begin{enumerate}
	        \item Compute $\mathbf{x^*}$ according to \eqref{eq:xstar}
	    \item \textbf{For} all $v \in [V]$:
	    \begin{enumerate}
	        \item \textbf{For} all $\don \in [\D]$ and $t \in [T]$, {compute $r_{v,\don, t}$ according to \eqref{eq:rewards}}
	        \item Set $\J_{v,T+1} = 0$
	        \item \textbf{For} $t =T$ to $t = 1$ :
	        \begin{enumerate}
	        	\item \textbf{For} all $\don \in [\D]$, {compute $\tilde{x}_{v, \don, t}$ according to \eqref{eq:DP:sol}}
	            \item {Compute $J_{v,t}$ according to \eqref{eq:jv}}
	        \end{enumerate}
	    \end{enumerate}
	\end{enumerate}
	
	\textbf{Online Phase}:
	\begin{enumerate}
	\item \textbf{For} $t$ from $1$ to $T$:
	\begin{enumerate}
	\item \textbf{If} a task of type $\don$ arrives in time $t$, then:
	\begin{enumerate}
	    \item {\revcolor \textbf{For} $v \in [V]$:}
	\begin{itemize}
	\item {Notify $v$ with probability $\y_{v,\don, t}$}
	\end{itemize}
	\end{enumerate}
	\end{enumerate}
	\end{enumerate}
	\caption{Sparse Notification (SN) Policy}
	\label{alg:two}
\end{algorithm}

\begin{theorem}[Competitive Ratio of the Sparse Notification Policy] \label{thm:alg2}
Suppose that the MDHR of the inter-activity time distribution is $q$. Then the sparse notification policy, defined in Algorithm \ref{alg:two}, is $\frac{1}{2-q}(1-\frac{1}{e})$-competitive.
\end{theorem}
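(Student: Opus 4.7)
The plan is to establish the bound in three stages and then combine them. Stage one decomposes the SN policy's expected objective as a sum of per-volunteer contributions using the index-based priority scheme of Definition \ref{def:priority}. Stage two shows that each per-volunteer contribution under SN is at least the DP value $J_{v,1}$. Stage three uses a linear-programming dual-fitting argument to prove the central inequality $J_{v,1} \geq \frac{1}{2-q} f_v(\mathbf{x^*})$. Summing over $v$, invoking Lemma \ref{lem:falg} to aggregate $\sum_v f_v(\mathbf{x^*}) = f(\mathbf{x^*})$, and finally applying Proposition \ref{prop:fxstar} then yields $\mathbf{POL}_{\set{I}} \geq \tfrac{1}{2-q}(1-\tfrac{1}{e})\mathbf{LP}_{\set{I}}$.

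For stage one, under the index-based priority scheme the event ``volunteer $v$ is the lowest-indexed active, notified, responding volunteer'' partitions completed tasks across $v$, so $\mathbf{POL}_{\set{I}}$ decomposes cleanly by volunteer. For stage two, fix volunteer $v$ and observe that the SN policy's notifications to $v$ exactly follow the DP: if $v$ is active and a type-$\don$ task arrives at $t$, she is notified with probability $\y_{v,\don,t}$, after which she enters an inactivity spell drawn from $g(\cdot)$. Thus the DP dynamics match the policy's dynamics restricted to $v$, starting from $v$ being initially active (as stipulated). The reward $r_{v,\don,t} = p_{v,\don}\prod_{u<v}(1-\y_{u,\don,t}p_{u,\don})$ treats lower-indexed volunteers as if they were always active when it is their turn to be notified; since in reality those volunteers may be inactive, the probability that none of them responds is at least $\prod_{u<v}(1-\y_{u,\don,t}p_{u,\don})$. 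Consequently $r_{v,\don,t}$ underestimates the true per-notification probability that $v$ is the lowest-indexed responder, so $J_{v,1}$ is a valid lower bound on $v$'s contribution.

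Stage three is the main obstacle. I would formulate a linear programming relaxation of volunteer $v$'s DP with variables $z_{\don,t}$ denoting the probability of notifying $v$ for a type-$\don$ arrival at time $t$, objective $\sum_{t,\don}\lambda_{\don,t}z_{\don,t}r_{v,\don,t}$, box constraints $0 \leq z_{\don,t} \leq x^*_{v,\don,t}$, and inactivity-budget constraints analogous to \eqref{eq:lpcon2}. Setting $z = x^*$ is feasible and gives objective $f_v(\mathbf{x^*})$, so the LP optimum is at least $f_v(\mathbf{x^*})$. The DP in \eqref{eq:DP:sol}--\eqref{eq:jv} solves a stochastic version of this problem in which the constraint on notifications is enforced sample-path-wise through the inactivity transitions. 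Building on the dual-fitting framework of \citet{alaei2012online} and \citet{Rad2019}, I would construct dual variables from the DP value function---$\alpha_t$ reflecting the marginal value of having $v$ active at time $t$ (i.e., differences $J_{v,t}-J_{v,t+1}$), and $\beta_{\don,t}$ absorbing any remaining slack between $\lambda_{\don,t}r_{v,\don,t}$ and $\alpha_t$---and verify dual feasibility. The MDHR $q$ enters at exactly this step: because each unit of ``used'' activity budget is refreshed within one period with probability at least $q$, the opportunity cost of a notification can be amortized against a factor $(2-q)$ of the DP's gain, i.e., the dual objective bounds the LP optimum by at most $(2-q)J_{v,1}$. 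Rearranging gives $J_{v,1}\geq \frac{1}{2-q}f_v(\mathbf{x^*})$.

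Combining the stages, we obtain
\begin{equation*}
\mathbf{POL}_{\set{I}} \;\geq\; \sum_{v=1}^{V} J_{v,1} \;\geq\; \frac{1}{2-q}\sum_{v=1}^{V} f_v(\mathbf{x^*}) \;=\; \frac{1}{2-q}f(\mathbf{x^*}) \;\geq\; \frac{1}{2-q}\Bigl(1-\frac{1}{e}\Bigr)\mathbf{LP}_{\set{I}},
\end{equation*}
where the equalities and inequalities use stage two, stage three, Lemma \ref{lem:falg}, and Proposition \ref{prop:fxstar} respectively. The technical crux is constructing the dual and tracking precisely how $q$ enters; the decomposition and DP-to-policy correspondence are comparatively mechanical once the priority scheme is in place.
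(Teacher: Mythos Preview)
Your three-stage decomposition matches the paper's proof exactly: stage two is Lemma \ref{lem:contributionsalg2} and stage three is Lemma \ref{lemma:factorrevealingLP}, combined via Lemma \ref{lem:falg} and Proposition \ref{prop:fxstar} just as you describe. The only substantive difference is in the LP you propose for stage three. You set up a primal over notification variables $z_{\don,t}$ with box and budget constraints and then try to build a feasible dual from the DP values $J_{v,t}$. The paper instead uses a factor-revealing LP (called (J-LP)) whose \emph{variables} are $\bar{J}_{v,t}$ and $\bar{r}_{v,\don,t}$: it minimizes $\bar{J}_{v,1}$ over all reward profiles satisfying the DP recursion inequalities together with the normalization $\sum_{t,\don}\lambda_{\don,t}\bar{r}_{v,\don,t}x^*_{v,\don,t}=c$. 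Its dual has multipliers $\alpha_t,\gamma_t,\mu$, and the paper exhibits an explicit feasible dual solution with $\mu=\gamma_t=\tfrac{1}{2-q}$ and $\alpha_t$ defined recursively so that all constraints are tight; the MDHR $q$ enters precisely when checking $\alpha_t\geq 0$, via the same manipulation $1-G(t-t')=(1-G(t-t'-1))(1-\tfrac{g(t-t')}{1-G(t-t'-1)})\leq(1-q)(1-G(t-t'-1))$ followed by constraint \eqref{eq:lpcon2}. Your $z$-LP route is in the same Alaei dual-fitting spirit and may well work, but the paper's formulation sidesteps the need to relate $J_{v,1}$ to an LP \emph{relaxation} of the DP (which would go the wrong direction) by instead asking for the worst-case rewards---this makes the weak-duality inequality point exactly where you need it.
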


{\revcolor Theorem \ref{thm:alg2} implies that the SN policy is $\frac{1}{2}(1-\frac{1}{e})$ competitive, regardless of the inter-activity time distribution, which can be shown by taking an infimum over all $q \in [0,1]$. The competitive ratio of the SN policy improves as $q$ increases, even though the policy does not directly make use of $q$ in its design.} The proof of Theorem \ref{thm:alg2} consists of two main lemmas. First, in the following lemma, we lower bound the contribution of each volunteer $v$ by $J_{v,1}$: 

\begin{lemma}[Volunteer Priority-Based Contribution under the SN Policy]
\label{lem:contributionsalg2}
Under the index-based priority scheme (in Definition \ref{def:priority}) and the SN policy, the contribution of volunteer $v \in [V]$, i.e., the expected number of tasks she completes, is at least $J_{v,1}$, where 
$J_{v,1}$ is defined in \eqref{eq:jv}.
\end{lemma}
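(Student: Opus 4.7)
The plan is to prove the lemma by backward induction on $t$, showing that $J_{v,t}$ is a lower bound on the conditional expected number of tasks that $v$ completes from time $t$ through $T$ given the history up to $t$, on the event that $v$ is active at $t$. Since every volunteer is initially active, evaluating at $t=1$ yields the stated contribution bound.

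More formally, let $\mathcal{F}_t$ be the $\sigma$-algebra generated by arrivals and SN-policy randomization over periods $1,\dots,t-1$, so the active/inactive state of every volunteer at $t$ is $\mathcal{F}_t$-measurable. Define $\hat{C}_{v,t}$ as the $\mathcal{F}_t$-conditional expected number of tasks completed by $v$ during $[t,T]$ under the SN policy and the index-based priority scheme, on the event that $v$ is active at $t$. I would prove $\hat{C}_{v,t}\geq J_{v,t}$ almost surely on this event; the base case $t=T+1$ is immediate since $\hat{C}_{v,T+1}=0=J_{v,T+1}$.

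For the inductive step, fix $t$ and assume the bound at every $\tau>t$. Conditional on $\mathcal{F}_t$ (with $v$ active) and on the arrival of task type $s$ at time $t$ (probability $\lambda_{\don,t}$), the SN policy's independent notification draws notify $v$ with probability $\tilde{x}_{v,\don,t}$. If $v$ is not notified, she remains active at $t+1$ and the inductive hypothesis gives expected continuation at least $J_{v,t+1}$. If $v$ is notified, the conditional probability that she actually completes the task---under the index-based priority, $v$ must respond and no lower-indexed volunteer may respond---equals
\[
p_{v,\don}\prod_{u<v}\bigl(1-A_u\,\tilde{x}_{u,\don,t}\,p_{u,\don}\bigr),
\]
where $A_u\in\{0,1\}$ indicates whether $u$ is active at $t$. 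Because $A_u\leq 1$, each factor is at least $(1-\tilde{x}_{u,\don,t}\,p_{u,\don})$, so this probability is at least $r_{v,\don,t}$. After notification, $v$ becomes inactive for $Z\sim g$ periods and reactivates at $t+Z$; by the inductive hypothesis her conditional expected contribution upon reactivation at $\tau\leq T$ is at least $J_{v,\tau}$ (and $0$ if $t+Z>T$). Averaging over $Z$, the notification coin flip, and the arrival type $\don$ reproduces precisely the recursion \eqref{eq:jv}, yielding $\hat{C}_{v,t}\geq J_{v,t}$. Taking expectations and specializing to $t=1$ finishes the proof.

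The main delicate point is that the active states of volunteers $u<v$ at time $t$ are correlated with the history of SN decisions, so the immediate reward to $v$ when notified is random rather than deterministic. The key step that bypasses this difficulty is the pathwise inequality $1-A_u\,\tilde{x}_{u,\don,t}\,p_{u,\don}\geq 1-\tilde{x}_{u,\don,t}\,p_{u,\don}$, which allows us to replace the random quantity with the deterministic lower bound $r_{v,\don,t}$ without any further independence assumption on the states of other volunteers. Once this replacement is justified, the rest of the argument is a routine decomposition exploiting the tower property and the Markov structure of volunteer $v$'s own active/inactive dynamics.
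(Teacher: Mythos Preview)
Your proposal is correct and follows essentially the same approach as the paper: backward induction on the value-to-go together with the observation that, conditional on $v$ being active and notified, the probability that no lower-indexed volunteer responds is at least $\prod_{u<v}(1-\tilde{x}_{u,\don,t}p_{u,\don})$, yielding the per-stage lower bound $r_{v,\don,t}$. The paper separates this into two parts (first bounding the immediate reward, then running the induction), whereas you fold both into one inductive step and make the conditioning on $\mathcal{F}_t$ explicit via the pathwise bound $1-A_u\tilde{x}_{u,\don,t}p_{u,\don}\geq 1-\tilde{x}_{u,\don,t}p_{u,\don}$; this is a cosmetic difference, not a substantive one.
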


\begin{proof}{Proof:}
The proof of Lemma \ref{lem:contributionsalg2} consists of two parts. \textbf{Part (i):} First, we prove that $r_{v, \don, t}$ is a lower bound on the probability that a volunteer $v \in [V]$ completes a task of type $\don \in [\D]$ when it arrives at time $t \in [T]$, conditional on being notified and active, under the SN policy and the index-based priority scheme. {\textbf{Part (ii):} Then we show that {under such a policy and priority scheme,} the expected number of tasks volunteer $v$ will complete between $t$ and $T$ when active at $t$ must be at least $J_{v,t}$.}
 
\textbf{Proof of Part (i):}
Without loss of generality, we focus on a particular arrival $\don$ at a particular time $t$. {Let us define 
$\rtwo_{v, \don , t}$ 
as the probability that volunteer $v$ completes a task of type $\don$ when it arrives at time $t$, conditional on $v$ being notified and active at time $t$, under the SN policy and the index-based priority scheme. We will show that $r_{v,\don , t}$ as defined in \eqref{eq:rewards} is a lower bound on 
$\rtwo_{v, \don , t}$.
} 
When notified and active, a volunteer $v \in [V]$ responds with probability $p_{v,\don}$. Any other lower-indexed volunteer $u \in [v-1]$ is notified with probability $\y_{u, \don, t}$ under the SN policy. If active, she will respond with probability $p_{u,\don}$. Since these are both independent from $v$'s response, the probability that $u$ responds conditional on $v$ responding must be less than $\y_{u, \don, t}p_{u, \don}$. {Repeating this argument jointly for all $u < v$, we see that $\rtwo_{v, \don , t}$, i.e., the probability} that $v$ completes the task when active and notified---which happens when she is the lowest indexed volunteer to respond---must be at least $p_{v,\don}\prod_{u =1}^{v-1}(1-\y_{u, \don, t}p_{u, \don})$. Noting that this is equivalent to the definition of $r_{v,\don, t}$ completes the first part of the proof, namely that $\rtwo_{v, \don , t} \geq r_{v, \don , t} $.

\textbf{Proof of Part (ii):}
{Let us define $\Jtwo_{v,t}$ as the expected number of tasks $v$ will complete between $t$ and $T$ when active at $t$ under the SN policy and the index-based priority scheme.
 We will show via total backward induction that $\Jtwo_{v,t} \geq J_{v,t}$. Clearly, this is true with equality for $\Jtwo_{v,T+1} = 0$. Now suppose that this inductive hypothesis holds for all $t \geq k+1$. We will show that for $t=k$, $\Jtwo_{v,k} \geq J_{v,k}$. By construction, we have
 \begin{align}
 \Jtwo_{v,k} = \sum_{\don=0}^{\D} \lambda_{\don, k}\left((1-\y_{v, \don, k})\Jtwo_{v,k+1} + \y_{v,\don,k}( \rtwo_{v,\don,k} + \sum_{\tau = k+1}^{T} g(\tau - k)\Jtwo_{v, \tau}) \right) \nonumber
 \end{align}

In words, the expected number of tasks $v$ will complete between $t$ and $T$ when active at $t$ under the SN policy and the index-based priority scheme can be computed in the following way: fixing an arrival of a task of type $\don$ at time $k$, if $v$ is not notified, she will complete $\Jtwo_{v,k+1}$ tasks (in expectation) in the future. If $v$ is notified, she will complete this task with probability $\rtwo_{v, \don , k}$. The expected number of tasks she will complete in the future is the expected number of tasks she will complete after becoming active again, as given by the term $\sum_{\tau = k+1}^{T} g(\tau - k)\Jtwo_{v, \tau}$. 
Summing over all task types $s \in [S] \cup \{0\}$, we get $\Jtwo_{v,k}$. Using our inductive hypothesis and part (i) of the proof, we have 
 \begin{align}
 \Jtwo_{v,k} \geq \sum_{\don=0}^{\D} \lambda_{\don, k}\left((1-\y_{v, \don, k})\J_{v,k+1} + \y_{v,\don,k}( r_{v,\don,k} + \sum_{\tau = k+1}^{T} g(\tau - k)\J_{v, \tau}) \right) \label{eq:contribalg2eq1}
 \end{align}
 Noting that the right hand side of \eqref{eq:contribalg2eq1} is exactly the definition of $J_{v,k}$ according to \eqref{eq:jv}, we have shown $\Jtwo_{v,k} \geq J_{v,k}$, which completes the proof by induction. 
 
This implies that the expected number of tasks volunteer $v$ will complete between $1$ and $T$ when active at period $1$ under the SN policy and the index-based priority scheme must be at least $J_{v,1}$. Because $v$ is by definition active in period $1$, we have completed the proof of Lemma \ref{lem:contributionsalg2}.}
\Halmos
\end{proof}

 


The second main step in the proof of Theorem \ref{thm:alg2} is to compare $J_{v,1}$ to the benchmark $\mathbf{LP}$. In order to do so, we follow the dual-fitting approach of  \citet{alaei2012online}. {\revcolor In particular, given the inter-activity time distribution, we set up a linear program to find the ``worst'' possible combination of per-stage rewards (denoted by the decision variables $\bar{\mathbf{r}}=\{{\revcolor \bar{r}}_{v,\don,t}:\don \in [\D], t \in[T]\}$) that give rise to the minimum possible value of the initial value-to-go of the DP (denoted by decision variable ${\revcolor \bar{J}}_{v,1}$). }
In the LP formulation, the first two sets of constraints follow from the DP definition. Note that the value of ${\revcolor \bar{J}}_{v,1}$ will crucially depend on the values of per-stage rewards through $\sum_{t=1}^T \sum_{\don=0}^{\D} \lambda_{\don,t} {\revcolor \bar{r}}_{v, \don, t} x^*_{v,\don,t}$, e.g., if ${\revcolor \bar{r}}_{v,\don,t} = 0$ for all $v \in [V]$, $\don \in [\D]$, and $t \in [T]$, then ${\revcolor \bar{J}}_{v,1} = 0$. This motivates the final constraint, which provides a constant against which we can compare ${\revcolor \bar{J}}_{v,1}$.
Finding the optimal solution to this LP proves to be difficult. Instead we find a feasible solution to its dual (the LP and its dual are presented in Table \ref{table:lpdual}). The following lemma uses this dual solution to establish a lower bound on the initial value-to-go of the DP, regardless of the per-stage rewards.

\begin{table}
  \caption{The linear and dual programs used to provide a lower bound on $J_{v,1}$ for all $v \in [V]$.}
{\small $$\begin{array}{|clc|clc|}
\hline
&&&&&\\
\multicolumn{3}{|c|}{\text{(J-LP) uses variables } \mathbf{{\revcolor \bar{J}}}=\{{\revcolor \bar{J}}_{v,t}:t \in[T]\} }&\multicolumn{3}{c|}{\text{(Dual) uses variables } \bm{\alpha}=\{\alpha_t \geq 0 : t \in [T]\}, }  \\
\multicolumn{3}{|c|}{\text{and } \mathbf{{\revcolor \bar{r}}}=\{{\revcolor \bar{r}}_{v,\don,t}:\don \in [\D], t \in[T]\}. }&\multicolumn{3}{c|}{\bm{\gamma}=\{\gamma_t \geq 0 : t \in [T]\}, \text{ and } \mu.}  \\
&&&&&\\
\hline
&&&&&\\
\ \ \min_{\mathbf{{\revcolor \bar{J}}}, \mathbf{{\revcolor \bar{r}}}} & {\revcolor \bar{J}}_{v,1} & \text{ (J-LP) }&\ \ \max_{\bm{\alpha}, \bm{\gamma}, \mu}&c\mu&\text{ (Dual) }\\
&&&&& \\
\text{s.t.} &&&\text{s.t.}&&\\
&&&&& \\
{\revcolor \bar{J}}_{v,t}  \geq& {\revcolor \bar{J}}_{v,t+1}&\ (\alpha_t) \ &\gamma_1\leq&1-\alpha_1&\ ({\revcolor \bar{J}}_{v,1}) \ 
\\
&&&&& \\
{\revcolor \bar{J}}_{v,t}\geq& {\revcolor \bar{J}}_{v, t+1} + \sum_{\don=0}^{\D} \lambda_{\don,t} x^*_{v,\don,t} &&\gamma_t\leq&\gamma_{t-1}+\alpha_{t-1}-\alpha_t-& \\
&[{\revcolor \bar{r}}_{v,\don, t}-{\revcolor \bar{J}}_{v,t+1}+&&&\gamma_{t-1}\sum_{\don=0}^{\D} \lambda_{\don,t-1}x^*_{v,\don,t-1}+& \\
&\sum_{\tau = t+1}^{T} g(\tau - t){\revcolor \bar{J}}_{v, \tau})]&\ (\gamma_t) \ &&\sum_{t' = 1}^{t-1}\gamma_{t'} \sum_{\don=0}^{\D} \lambda_{\don,t'} x^*_{v,\don,t'}g(t-t')&\ ({\revcolor \bar{J}}_{v,2:T}) \ \\
&&&&& \\
\ \ c =& \sum_{t=1}^T \sum_{\don=0}^{\D} \lambda_{\don,t} {\revcolor \bar{r}}_{v, \don, t} x^*_{v,\don,t}&\ (\mu) \ &
\gamma_t\geq& \mu&\ (\mathbf{{\revcolor \bar{r}}}) \ 
\\
&&&&&\\
\hline
\end{array}$$}
\label{table:lpdual}
\end{table}

\begin{lemma}[Lower Bounding the Dynamic Program]
\label{lemma:factorrevealingLP}
Under the index-based priority scheme (see Definition \ref{def:priority}), for any $\mathbf{x} \in \set{P}$ and volunteer $v \in [V]$, we have $J_{v,1} \geq  \frac{1}{2-q}f_v(\mathbf{x})$
where $f_v(\mathbf{x})$ is defined in \eqref{eq:decouple}. 
\end{lemma}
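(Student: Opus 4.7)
The plan is a dual-fitting argument based on the linear program (J-LP) in Table~\ref{table:lpdual}. Throughout, I take the LP's coefficients to be the generic $\mathbf{x} \in \set{P}$ from the lemma (in place of the algorithm's particular $\mathbf{x^*}$), and I interpret $J_{v,1}$ as the initial DP value-to-go when $\mathbf{x}$ is used as the ex ante solution. The strategy has three pieces: (i) verify that the DP quantities form a feasible primal solution so that LP duality yields $J_{v,1} \geq c\mu$ for any feasible dual; (ii) lower-bound $c$ by $f_v(\mathbf{x})$; and (iii) exhibit a feasible dual with $\mu = \frac{1}{2-q}$, which is the only non-routine step.

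For (i), I would verify that $(\{J_{v,t}\}_t, \{r_{v,\don,t}\})$ together with $c := \sum_t \sum_\don \lambda_{\don,t} r_{v,\don,t} x_{v,\don,t}$ satisfies (J-LP): the first constraint $J_{v,t} \geq J_{v,t+1}$ is immediate because both DP actions in \eqref{eq:jv} yield contribution at least $J_{v,t+1}$, and the second follows from the DP recursion combined with $\max(z,0) \geq z$ applied to the notify-vs-wait comparison inside the indicator in \eqref{eq:DP:sol}. For (ii), since $\tilde{x}_{u,\don,t} \in \{0, x_{u,\don,t}\}$, the definition in \eqref{eq:rewards} gives $r_{v,\don,t} \geq p_{v,\don}\prod_{u<v}(1-x_{u,\don,t}p_{u,\don})$, so $c \geq f_v(\mathbf{x})$ by \eqref{eq:decouple}.

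The substantive step is (iii). I would set $\gamma_t = \mu := \frac{1}{2-q}$ for every $t$ (meeting $\gamma_t \geq \mu$ with equality) and $\alpha_t := (1-\mu) - \mu B_t$, where $a_\tau := \sum_\don \lambda_{\don,\tau} x_{v,\don,\tau}$ and $B_t := \sum_{\tau < t} a_\tau(1-G(t-\tau))$ (so $B_1 = 0$). A routine telescoping using $B_t - B_{t-1} = a_{t-1} - \sum_{\tau<t} a_\tau g(t-\tau)$ verifies both the first and second dual constraints with equality. The only real work is checking $\alpha_t \geq 0$, i.e., $B_t \leq 1-q$, and this is precisely where MDHR enters: MDHR $\geq q$ implies $1-G(\tau) \leq (1-q)(1-G(\tau-1))$ for every $\tau \geq 1$, so termwise application gives
\[
B_t \;\leq\; (1-q)\sum_{\tau<t} a_\tau (1-G(t-\tau-1)) \;=\; (1-q)(B_{t-1}+a_{t-1}) \;\leq\; 1-q,
\]
where the final inequality is constraint \eqref{eq:lpcon2} at time $t-1$, namely $a_{t-1}+B_{t-1} \leq 1$, which holds because $\mathbf{x} \in \set{P}$. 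Combining (i)--(iii) yields $J_{v,1} \geq c\mu \geq \frac{1}{2-q} f_v(\mathbf{x})$. The main obstacle is conceptual rather than computational: the constant-$\gamma$ dual on its own only recovers $\mu = 1/2$ from feasibility alone, and invoking the MDHR contraction once per step saves an extra factor of $(1-q)$ in $B_t$, which is exactly what lifts the ratio from $\tfrac{1}{2}$ to $\tfrac{1}{2-q}$.
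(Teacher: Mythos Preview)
Your proposal is correct and follows essentially the same dual-fitting argument as the paper: both plug the DP quantities into (J-LP), construct the identical dual solution $\gamma_t=\mu=\tfrac{1}{2-q}$ with $\alpha_t=(1-\mu)-\mu\sum_{\tau<t}a_\tau(1-G(t-\tau))$, and certify $\alpha_t\geq 0$ via the MDHR inequality $1-G(\tau)\leq(1-q)(1-G(\tau-1))$ combined with constraint~\eqref{eq:lpcon2} at time $t-1$. Your presentation is slightly more compact (giving the closed form for $\alpha_t$ and the bound $B_t\leq(1-q)(B_{t-1}+a_{t-1})$ directly rather than deriving them through the recursion), but the content matches the paper's chain \eqref{eq:flp0}--\eqref{eq:flp4} step for step.
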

\begin{proof}{Proof:}
First we show that
 for a particular volunteer $v \in [V]$, the solution to (J-LP) is a lower bound on the initial value-to-go ${\revcolor \bar{J}}_{v,1}$ which occurs when the per stage rewards are given by ${\revcolor \bar{r}}_{v, \don, t}$. To see this, we show that the first two sets of constraints in (J-LP) come from the iterative definition of the value-to-go, as given in equation \eqref{eq:jv}: \begin{align}
    {\revcolor \bar{J}}_{v,t} &= \sum_{\don=0}^{\D} \lambda_{\don, t} \max_{x_{v,\don,t} \in \{0, x^*_{v, \don, t}\}} \{ (1-x_{v,\don, t}){\revcolor \bar{J}}_{v,t+1} + x_{v,\don, t}({\revcolor \bar{r}}_{v,\don,t} + \sum_{\tau = t+1}^{T} g(\tau - t){\revcolor \bar{J}}_{v, \tau}) \} \nonumber \\
    &\geq \max\{{\revcolor \bar{J}}_{v,t+1}, \sum_{\don=0}^{\D} \Big[\lambda_{\don, t}(1-x^*_{v,\don, t}){\revcolor \bar{J}}_{v,t+1}+ x^*_{v,\don, t}({\revcolor \bar{r}}_{v,\don,t} + \sum_{\tau = t+1}^{T} g(\tau - t){\revcolor \bar{J}}_{v, \tau})\Big]\} \label{eq:LP:const:s1} \\
    &=\max\{{\revcolor \bar{J}}_{v,t+1}, {\revcolor \bar{J}}_{v,t+1}+\sum_{\don=0}^{\D} \lambda_{\don, t} x^*_{v,\don, t}({\revcolor \bar{r}}_{v,\don,t} -{\revcolor \bar{J}}_{v,t+1} + \sum_{\tau = t+1}^{T} g(\tau - t){\revcolor \bar{J}}_{v, \tau})\} \label{eq:LP:const:s2}
\end{align}

In the above, the first equality follows from the definition of $\y_{v, \don, k}$. The inequality in \eqref{eq:LP:const:s1} follows from setting the values of $x_{v, \don, k}$ to their extremes, i.e., $x_{v, \don, k} = 0$ (which gives the first term inside the $\max$) and $x_{v, \don, t} = x^*_{v, \don, t}$ (which gives the second term inside the $\max$).  The last equality is a result of simplifying the  second term inside the $\max$. Note that \eqref{eq:LP:const:s2} implies that the first two constraints in (J-LP) must hold. The final constraint in (J-LP) {scales} the per-stage rewards while allowing for the ``worst'' possible combination. Together there are $2T + 1$ constraints, which will become the dual variables identified by the labels $\bm{\alpha}=\{\alpha_t \geq 0 : t \in [T]\}$, 
$\bm{\gamma}=\{\gamma_t \geq 0 : t \in [T]\}$, and $\mu$, respectively, 
in Table \ref{table:lpdual}. This leads to the dual program in (Dual).

Next, we show that the following is 
a feasible solution to this dual problem: $\mu = \frac{1}{2-q}$ and all constraints are tight, i.e. $\gamma_t = \mu$ for all $t \in [T]$, $\alpha_1 = 1-\mu$, and for $t \geq 2$, 
\begin{align}
    \alpha_t &= \alpha_{t-1} +\gamma_{t-1}-\gamma_t - \gamma_{t-1}\sum_{\don=0}^{\D} \lambda_{\don,t-1}x^*_{v,\don,t-1} + \sum_{t' =1}^{t-1}\gamma_{t'} \sum_{\don=0}^{\D} \lambda_{\don,t'} x^*_{v,\don,t'}g(t-t') \nonumber \\
    &=\alpha_{t-1} - \mu \left( \sum_{\don=0}^{\D} \lambda_{\don,t-1}x^*_{v,\don,t-1} -  \sum_{t' =1}^{t-1} \sum_{\don=0}^{\D} \lambda_{\don,t'} x^*_{v,\don,t'}g(t-t') \right) \label{eq:flp0} \\
    &= \alpha_1 - \mu \sum_{t' =1}^{t-1} \sum_{\don=0}^{\D} \lambda_{\don,t'} x^*_{v,\don,t'}(1-G(t-t')) \label{eq:flp1}\\
    &=\alpha_1 - \mu \sum_{t' =1}^{t-1} \sum_{\don=0}^{\D} \lambda_{\don,t'} x^*_{v,\don,t'}(1-G(t-1-t'))(1-\frac{g(t-t')}{1-G(t-1-t')})
    \label{eq:flp2} 
     \\
    &\geq  \alpha_1 - \mu \sum_{t' =1}^{t-1} \sum_{\don=0}^{\D} \lambda_{\don,t'} x^*_{v,\don,t'}(1-G(t-1-t'))(1-q) \label{eq:flp3}\\
    &\geq \alpha_1 - (1-q)\mu \label{eq:flp4} \\
    &= 1 - (2-q)\mu \nonumber
\end{align}
Equality in Line \eqref{eq:flp0} follows from plugging in $\gamma_{t} = \mu$ for all $t$.
Line \eqref{eq:flp1} comes from recursively plugging in the definition for $\alpha_{t-1}$ and rearranging terms. 
Line \eqref{eq:flp2} is a result of re-writing $(1-G(t-t'))$ as $(1-G(t-t'-1) - g(t-t'))$ and then factoring out $(1-G(t-t'-1))$.\footnote{{If $1-G(t - t'-1) = 0$, then we must also have $g(t-t') = 0$. Thus, in Line \eqref{eq:flp2}, we preserve the equality by following our convention that if the fraction is $\frac{0}{0}$, we define it to be equal to 1.}}
Line \eqref{eq:flp3} comes from applying the definition of the minimum hazard rate. Line \eqref{eq:flp4} uses the fact that $\mathbf{x^*} \in \set{P}$, which means it must satisfy constraint \eqref{eq:lpcon2} of (LP) at time $t-1$.

Finally, note that in this proposed solution, all the dual variables are non-negative: $0 \leq q \leq 1$, which ensures $\mu = \frac{1}{2-q} \in [1/2,1]$. Thus $\gamma_t = \mu \geq 0$, and $\alpha_1 \geq 0$.
As $\mu = \frac{1}{2-q}$, we must have $\alpha_t \geq 0$. 
{Since all constraints are tight and for all $t \in [T]$, $\alpha_t \geq 0$ and 
$\gamma_t \geq 0$}, this solution is feasible in (Dual). Therefore, by weak duality, we have 
\begin{align}
    {\revcolor \bar{J}}_{v,1} \geq \frac{1}{2-q} c = \frac{1}{2-q} \sum_{t=1}^T \sum_{\don=0}^{\D} \lambda_{\don,t} {\revcolor \bar{r}}_{v, \don, t} x^*_{v,\don,t}
    \label{eq:LP:last}
\end{align}
{\revcolor This lower bound holds for the worst-case combination of per-stage rewards $\mathbf{{\revcolor \bar{r}}}$, meaning that for any set of rewards $\mathbf{r}$, we must similarly have ${J}_{v,1} \geq \frac{1}{2-q} \sum_{t=1}^T \sum_{\don=0}^{\D} \lambda_{\don,t} {r}_{v, \don, t} x^*_{v,\don,t}$.} We finish the proof by showing that $r_{v,\don,t}$ (as defined in \eqref{eq:rewards}) is at least $\left(\prod_{u < v}(1-p_{u, \don} x^*_{u, \don, t})\right)  p_{v, \don}$. To see this, note that the term $\left(\prod_{u < v}(1-p_{u, \don} x_{u, \don, t})\right)  p_{v, \don}$ is decreasing in $x_{u,\don,t}$. Since $\y_{u, \don, t} \leq x^*_{u,\don,t}$, this implies $r_{v,\don,t} \geq \left(\prod_{u < v}(1-p_{u, \don} x^*_{u, \don, t})\right)  p_{v, \don}$. Plugging this back into \eqref{eq:LP:last}, we have:
$$J_{v,1} \geq \frac{1}{2-q} \sum_{t=1}^T \sum_{\don=0}^{\D} \lambda_{\don,t} \left(\prod_{u < v}(1-p_{u, \don} x^*_{u, \don, t})\right)  p_{v, \don} x^*_{v, \don, t} = {\revcolor \frac{1}{2-q}}f_v(\mathbf{x^*}),$$ which completes the proof of Lemma \ref{lemma:factorrevealingLP}.
\Halmos \end{proof}

Based on Lemma \ref{lemma:factorrevealingLP}, we know that each volunteer completes at least $\frac{1}{2-q}f_v(\mathbf{x^*})$ tasks in expectation. By linearity of expectations and Lemma \ref{lem:falg}, the expected total number of tasks completed by volunteers must be at least $\frac{1}{2-q}f(\mathbf{x^*})$. Since $f(\mathbf{x^*}) \geq (1-\frac{1}{e})\mathbf{LP}_\set{I}$ (see Proposition \ref{prop:fxstar}), it immediately follows that the SN policy is  $\frac{1}{2-q}(1-\frac{1}{e})$-competitive. This completes the proof of Theorem \ref{thm:alg2}. \Halmos

\section{Upper Bounds on Competitive Ratio}
\label{sec:hardness}

In this section, we provide upper bounds on the achievable performance of various policies for the online volunteer notification problem. We begin in Section \ref{subsec:upperboundany} by upper-bounding the competitive ratio of any online policy. Then, in Section \ref{subsec:upperboundexante}, we place an upper bound on the competitive ratio of the specific policy of directly following the ex ante solution as defined in \eqref{eq:xstar}.

\subsection{Upper Bound for Any Online Policy} 
\label{subsec:upperboundany}

 Like the lower bound achieved by our policies in Section \ref{sec:algos}, the upper bound we establish for any online policy is parameterized by the MDHR of the inter-activity time distribution, $q$. We highlight that our upper bound applies to \emph{all} online policies, even those that cannot be computed in polynomial time. The main result of this section is the following theorem: 

\begin{theorem}[Upper Bound on Achievable Competitive Ratio]
\label{thm:hardness}
Suppose the MDHR of the inter-activity time distribution is $q$ where $q \in [1/16,1] \cup \{1/n,  n \in \mathbb{N} \} \cup \{0 \} $.   Then no online policy can achieve a competitive ratio greater than $\kappa$, where for $q>0$
\begin{equation}
\label{eq:kappa}
    \kappa = \min \Big\{\frac{1}{2-q}, 1 + q - \frac{q(1-q)}{\log(\frac{1}{1-q})(1+q)}(1-e^{-1}) \Big \}
\end{equation}
and for $q = 0$, we have {\revcolor $\kappa = 0.334$.}\footnote{We remark that the condition imposed on  $q$ when $0 < q < 1/16$ is added for ease of presentation of the theorem statement as well as its proof. Relaxing the aforementioned condition amounts to modifying the second term in $\kappa$ by rounding any $q$ up to the closest $\hat{q} \in \{1/n,  n \in \mathbb{N} \}$ and slightly modifying  the instance in the proof. We omit these details for the sake of brevity. }
\end{theorem}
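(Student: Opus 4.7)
The plan is to establish the upper bound by exhibiting three explicit families of instances (each parameterized by $q$) and showing that on each of them, the value of any online policy is at most the claimed ratio times the value of $\mathbf{LP}_\set{I}$. For each instance I would (i) choose the inter-activity distribution $g(\cdot)$ so that its MDHR equals exactly $q$ (e.g.\ geometric with parameter $q$ for the interior cases, or deterministic with $g(\tau)=1$ for a large $\tau$ in the $q=0$ case), (ii) compute $\mathbf{LP}_\set{I}$ in closed form (or a tractable upper bound on it), and (iii) bound the maximum expected completions of any online policy via a dynamic-programming argument over the state of the volunteer(s).

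\textbf{Instance 1 (gives the $\frac{1}{2-q}$ bound).} I would use a small instance—one volunteer with match probability $1$, a short horizon, and arrival rates carefully tuned—in which the ex ante LP may ``spread'' a notification across two periods because constraint \eqref{eq:lpcon2} only requires the second-period use to be scaled by $1-G(\cdot)$, whereas any online policy must commit in period one without knowing whether the future arrival will materialize. A geometric$(q)$ inter-activity distribution makes $1-G(1)=1-q$ and gives the LP an effective capacity of roughly $1+(1-q)\cdot(\text{future arrival})$, while an online policy obtains at most $1+q\cdot(\text{future arrival})$ in expectation. Balancing the arrival probabilities so that both branches of the online policy's ``notify now vs.\ save'' decision yield the same value isolates the ratio $\frac{1}{2-q}$.

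\textbf{Instance 2 (gives the second branch with $\log\bigl(\tfrac{1}{1-q}\bigr)$ and $1-e^{-1}$).} Here I would take a ``balanced'' large instance with $n$ identical volunteers, $n$ arrival periods, a single task type, and small match probability $p \to 0$ so that completions are approximately linear. Choosing $g(\cdot)$ geometric$(q)$, the LP can notify each volunteer with (nearly) total mass $1/(1-G(T-t))$ summed to about $1/\log\!\bigl(\tfrac{1}{1-q}\bigr)$ notifications per slot, saturating the per-slot ``cap of $1$'' constraint. Any online policy, by contrast, faces the classical matching-style tension: when many volunteers would have to share a notified slot, the submodular term $1-\prod(1-p\,x)$ costs roughly a factor of $1-e^{-1}$ compared to the linear upper bound. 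Carefully letting $n\to\infty$ and keeping the fraction of ``productive'' slots pinned by the geometric return dynamics produces the precise prefactor $\frac{q(1-q)}{\log(1/(1-q))(1+q)}$ in front of $1-e^{-1}$; this is where the restriction $q \in [1/16,1]\cup\{1/n\}$ enters, because those $q$ values let the horizon and notification mass be chosen exactly to match the expression.

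\textbf{Instance 3 ($q=0$ bound of $0.334$).} When $q=0$, the second bound above degenerates (the limit is strictly worse than $1/3$), so a separate construction is needed. I would take $g(\tau)=1$ for a single $\tau$ larger than the horizon, so that once a volunteer is notified she is effectively lost. This reduces the problem to an online stochastic bipartite matching hardness, into which I would embed a three-arrival instance (this is where the ``three instances'' count in the paper becomes literal) whose tight online-versus-LP ratio is $0.334$. The main obstacle across all three cases is the third step: writing a closed-form expression for the \emph{best} online policy's value. For Instances 1 and 3 this is straightforward enumeration of threshold policies; for Instance 2 it requires a concentration/asymptotic argument to justify replacing the online policy's DP value by the prophet-style $1-e^{-1}$ factor, and calibrating the horizon $T$ and notification mass so that the resulting ratio matches the stated $\kappa$ exactly rather than merely up to $o(1)$. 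The theorem follows by taking the minimum of the three bounds and observing that on each instance no online policy can exceed the ratio computed.
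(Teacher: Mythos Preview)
Your high-level structure---three instances, one per branch of $\kappa$---matches the paper, but the execution for Instances~2 and~3 has real gaps.

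\textbf{Instance 3 ($q=0$).} Your plan to take the inter-activity length larger than the horizon is a wrong turn: that collapses the problem to ordinary (non-reusable) online stochastic matching, and no ``three-arrival instance'' there yields a ratio of $0.334$ against the LP benchmark. The paper's construction is essentially the opposite: it takes the deterministic inter-activity length to be $n$ with horizon $T=n^2$, so volunteers \emph{do} return many times. The analysis then breaks the horizon into renewal intervals of expected length $2n$, argues (via a relaxed ``pseudoactive'' coupling) that each such interval reduces to an $n$-period one-shot problem, and solves that subproblem in closed form to get value $2-\tfrac{1+2\sqrt{e-1}}{e}<0.668$. With roughly $n/2$ intervals this gives $0.334\,n$, while $\mathbf{LP}\ge n$. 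Reusability is essential, not a complication to be removed.

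\textbf{Instance 2.} Your sketch has $n$ periods, $p\to 0$, and attributes the $1-e^{-1}$ to the familiar submodular loss ``many volunteers sharing a slot.'' That is not where the expression comes from. The paper takes $n=1/q$ volunteers, $T=n^2+1$ periods, $p_{v,1}=q$, and $\lambda_{1,t}=q$ (except $\lambda_{1,1}=1$). The nontrivial step you are missing is a DP induction showing that notifying \emph{every} volunteer at \emph{every} arrival is the optimal online policy; only then is its value computed via a renewal-type sum $\zeta$, and the $1-e^{-1}$ factor emerges from bounding $(1-q)^{1/q-1}\bigl((1+q)^{1/q+1}-1\bigr)$, not from a prophet-style or submodularity argument. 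Your proposed ``concentration to replace the DP value by a $1-e^{-1}$ factor'' does not correspond to any step that would actually close the calculation, and the horizon length $n$ is too short to make the LP large enough.

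\textbf{Instance 1.} Minor but worth noting: a single volunteer with match probability $1$ in both periods lets the online policy trivially complete the first task, so the ratio is $1$. The paper's instance uses $p_{1,1}=\epsilon$ and $p_{1,2}=1$ with $\lambda_{2,2}=\epsilon/(1-q)$; the $\epsilon$ asymmetry is what forces the online policy into the $\tfrac{1}{2-q}$ trade-off.
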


\begin{figure}[t]
 \centering
\includegraphics[scale = 0.6]{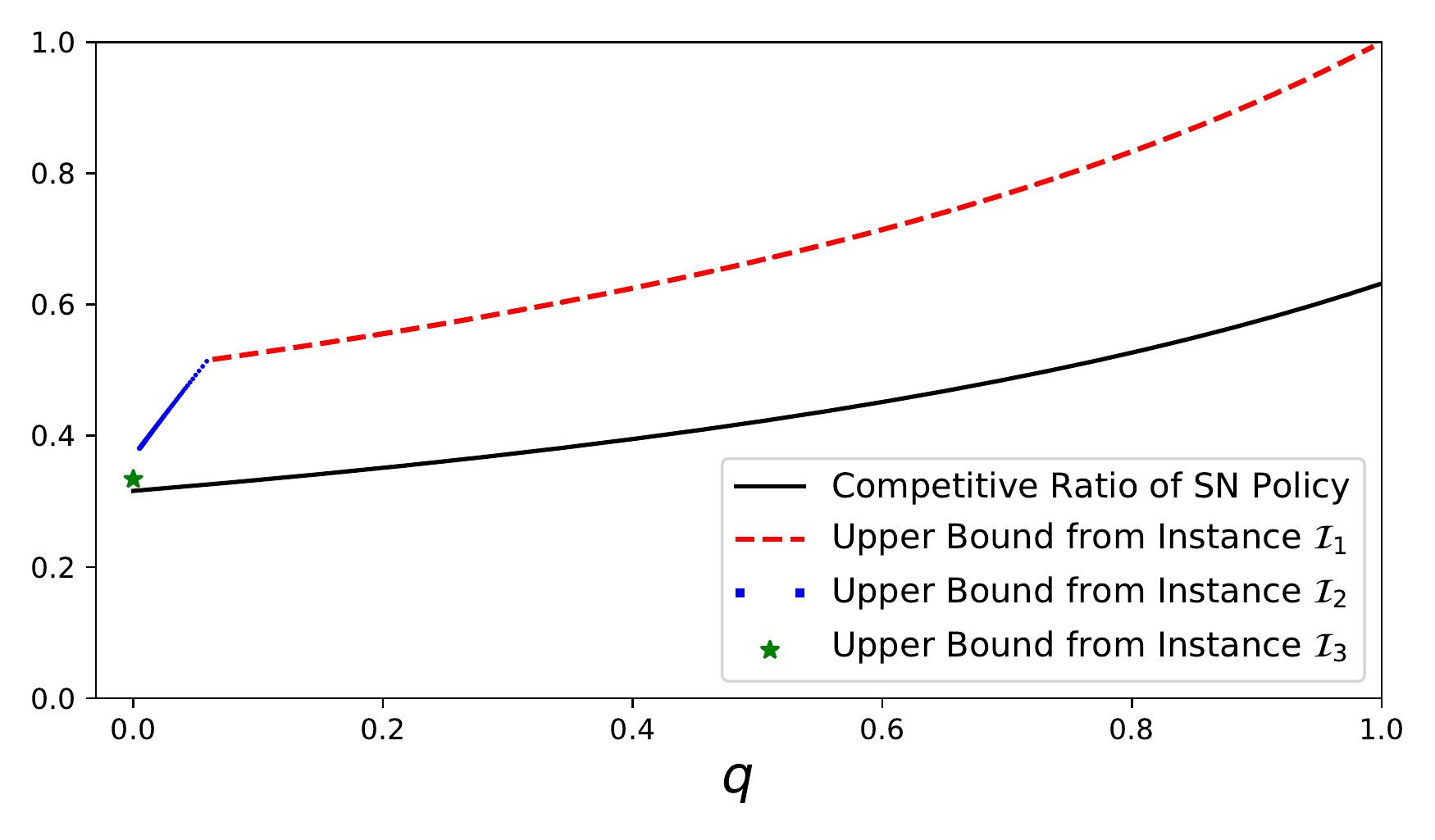}
\caption{Comparing the competitive ratio of our SN policy presented in Section \ref{sec:algos} (Theorem \ref{thm:alg2}) to an upper bound on the performance of any online policy (Theorem \ref{thm:hardness}) as a function of the MDHR, $q$.}
\label{fig:hardness}
\end{figure}

Figure \ref{fig:hardness} provides a summary of our lower and upper bounds on the achievable competitive ratio for the online volunteer notification problem as a function of $q$. We make the following observations based on the {theorem and accompanying} plot: 
(i) both the upper and lower bounds improve as $q$ increases, and (ii) the competitive ratio of our {\sdnpcolor SN policy is} fairly close to the upper bound when $q$ is small. However, the gap grows for larger values of $q$. 
The proof of Theorem \ref{thm:hardness} relies on analyzing the {\revcolor three instances described below. Instance $\set{I}_1$ attains the minimum when $q \in [1/16,1)$,  
instance $\set{I}_2$ attains it when $q \in \{1/n, n > 16, n \in \mathbb{N} \}$, and finally, instance $\set{I}_3$ attains it when $q=0$.}\footnote{For $q=1$, by definition no online policy can achieve a competitive ratio greater than $1$.}

\medskip 
{\bf Instance $\set{I}_1$:}
Suppose $V = 1$, $\D = 2$, $T=2$, and $g(\cdot)$ is the geometric distribution with parameter $q$, e.g. $g(\tau) = q(1-q)^{\tau-1}$. The arrival probabilities are given by $\lambda_{1, 1} = 1$ and $\lambda_{2, 2} = \frac{\epsilon}{1-q}$, where $\epsilon << 1-q$. The volunteer match probabilities are given by $p_{1, 1} = \epsilon$ and $p_{1,2} = 1$. 
The top left panel of Figure \ref{fig:exhardness} visualizes instance $\set{I}_1$. The following lemma---which we prove in Appendix \ref{proof:exprophet}---states that no online policy can complete more than a $\frac{1}{2-q}$ fraction of $\mathbf{LP}_{\set{I}_1}$.

\begin{lemma}[Upper Bound for Instance $\set{I}_1$]
\label{lem:exprophet}
In instance $\set{I}_1$, the expected number of completed tasks under any online policy is at most $\frac{1}{2-q} LP_{\set{I}_1}$ for $q \in [0,1)$.
\end{lemma}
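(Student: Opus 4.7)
The plan is to exploit the extreme simplicity of $\set{I}_1$---one volunteer, two periods, memoryless geometric inactivity---to compute $\mathbf{LP}_{\set{I}_1}$ and the best achievable online value in closed form, and then take their ratio.

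\emph{Evaluating the LP.} Since $g(\cdot)$ is geometric with parameter $q$, we have $1-G(t)=(1-q)^t$, and the only version of constraint \eqref{eq:lpcon2} that can bind is at $t=2$:
\[
(1-q)\,x_{1,1,1} + \frac{\epsilon}{1-q}\,x_{1,2,2} \;\leq\; 1.
\]
Taking $\epsilon$ small enough (specifically, $\epsilon < q(1-q)$) makes the corner $x_{1,1,1}=x_{1,2,2}=1$ feasible, and for such $\epsilon$ neither argument of $\min\{\cdot,1\}$ in the objective is truncated. Hence
\[
\mathbf{LP}_{\set{I}_1} \;=\; \lambda_{1,1}\,\epsilon + \lambda_{2,2}\cdot 1 \;=\; \epsilon + \frac{\epsilon}{1-q} \;=\; \frac{(2-q)\,\epsilon}{1-q}.
\]

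\emph{Evaluating the best online policy.} At $t=2$, the last period, any online policy maximizes its expected reward by notifying the volunteer whenever the type-$2$ task arrives, so the only effective degree of freedom is the probability $\pi$ with which the policy notifies at $t=1$. By the memorylessness of the geometric distribution, a volunteer notified at time $1$ is active at time $2$ with probability $g(1)=q$, so her unconditional active probability at time $2$ is $(1-\pi) + \pi q = 1-\pi(1-q)$. Summing the expected rewards at the two periods gives
\[
\pi\epsilon \;+\; \frac{\epsilon}{1-q}\bigl(1-\pi(1-q)\bigr) \;=\; \frac{\epsilon}{1-q},
\]
which is \emph{independent} of $\pi$. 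Dividing this expression by $\mathbf{LP}_{\set{I}_1}$ yields exactly $\frac{1}{2-q}$, matching the claim.

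The calculations are elementary and no real technical obstacle appears. The conceptual point, rather, is that the memorylessness of the geometric inter-activity distribution collapses every online time-$1$ strategy to the same expected reward, while the clairvoyant LP freely exploits the knowledge of whether the rare type-$2$ arrival materializes and can notify at both periods. The only care required is to take $\epsilon$ small enough that the corner ex ante solution remains LP-feasible and that the piecewise-linear objective does not hit its truncation cap; the boundary case $q\to 0$ then follows by continuity.
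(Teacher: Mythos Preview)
Your proof is correct and follows essentially the same route as the paper: compute the online value $\frac{\epsilon}{1-q}$ (you nicely observe it is independent of the time-$1$ notification decision) and compare it to the LP. The only cosmetic difference is the LP feasible point---you take the corner $(1,1)$, which is feasible only when $\epsilon<q(1-q)$ and hence requires handling $q=0$ by continuity, whereas the paper uses $(1-\epsilon,1)$, which is feasible for all $q\in[0,1)$ and yields the ratio $\frac{1}{2-q-(1-q)\epsilon}\to\frac{1}{2-q}$ as $\epsilon\to 0$.
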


Before proceeding to the second instance, we make two remarks: (i) If $q = 0$, the above instance is equivalent to the {canonical} instance used in the prophet inequality to establish an upper bound of $1/2$ (see, e.g., \citealt{hill1992survey}). (ii) The term $(1-1/e)$ in the competitive ratio of {\sdnpcolor our policy} corresponds to the gap between $f(\mathbf{x^*})$ (defined in \eqref{eq:fdef1}) and the benchmark $\mathbf{LP}$, whereas the $\frac{1}{2-q}$ corresponds to the gap between the performance of our online policy and $f(\mathbf{x^*})$ due to the loss in the online phase. In instance $\set{I}_1$, there is only one volunteer and consequently $f(\mathbf{x^*}) = \mathbf{LP}_{\set{I}_1}$. Therefore, instance $\set{I}_1$  shows that the lower bound achieved in  the online phase of our {\sdnpcolor policy} is tight.

The construction of our next instances are more delicate, as we aim to find instances for which both the loss in the offline phase (i.e., the gap between $f(\mathbf{x^*})$ and  $\mathbf{LP}_{\set{I}}$) and the loss in the online phase (i.e., the gap between the performance of the online policy and  $f(\mathbf{x^*})$) are large. 

\medskip 
{\bf Instance $\set{I}_2$:} Suppose $V = \frac{1}{q} = n$, $\D = 1$, $T=n^2+1$, and $g(\cdot)$ is the geometric distribution with parameter $q$, e.g. $g(\tau) = q(1-q)^{\tau-1}$. The arrival probabilities are given by $\lambda_{1, 1} = 1$ and $\lambda_{1, t} = q$ for $t \in [T] \setminus [1]$. The volunteers are homogeneous with $p_{v, 1} = q$ for all $v \in [V]$. The top right panel of Figure \ref{fig:exhardness} visualizes instance $\set{I}_2$. The following lemma---which is proven in Appendix \ref{proof:extightsmallq}---states that no online policy can complete more than a $1 + q - \frac{q(1-q)}{\log(\frac{1}{1-q})(1+q)}(1-e^{-1})$ fraction of $LP_{\set{I}_2}$.

\begin{lemma}[Upper Bound for Instance $\set{I}_2$]
\label{lem:extightsmallq}
In instance $\set{I}_2$, the expected number of completed tasks under any online policy is at most $\Big[1 + q - \frac{q(1-q)}{\log(\frac{1}{1-q})(1+q)}(1-e^{-1})\Big] \mathbf{LP}_{\set{I}_2}$ , where $q \in \{1/n,  n \in \mathbb{N}$\}.
\end{lemma}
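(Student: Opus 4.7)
My plan is to compute $\mathbf{LP}_{\set{I}_2}$ explicitly and then upper-bound the value of any online policy by identifying the optimal symmetric policy in this homogeneous single-task instance and evaluating it via the memoryless property of the geometric inter-activity distribution.

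For the LP, the symmetry among the $n$ identical volunteers allows me to restrict to solutions satisfying $x_{v,1,t} = y_t$ for all $v$. Setting $y_t = 1$ saturates the piecewise-linear cap since $\sum_v y_t p_{v,1} = nq = 1$, and the per-volunteer inter-activity constraint $\sum_{\tau \leq t} \lambda_{1,\tau}(1-q)^{t-\tau} \leq 1$ telescopes to equality via the geometric series (using $1-G(t-\tau)=(1-q)^{t-\tau}$). Thus $\mathbf{LP}_{\set{I}_2} = \lambda_{1,1} + q(T-1) = 1 + n^2 q = 1 + 1/q$.

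For the online upper bound, I first restrict without loss of generality to policies that are invariant under volunteer relabelings. In this homogeneous instance, I then argue via a backward-induction exchange argument that the policy which notifies all currently active volunteers on each arrival is optimal: the marginal immediate reward $q(1-q)^k$ of notifying one more active volunteer is strictly positive, and by memorylessness her future trajectory distribution is identical whether she is notified at the current arrival or at a later arrival where she happens to be active. Under this optimal policy, the memoryless property yields a clean description: immediately after any arrival at time $\tau$, every volunteer is inactive (both those just notified and those already inactive), and each will independently become active in each subsequent period with probability $q$. Hence, if $L$ (distributed as Geom$(q)$) denotes the gap to the next arrival (for $t \geq 2$), then conditional on $L$ the number of volunteers active at the next arrival is Bin$(n,\, 1-(1-q)^L)$, and the expected per-arrival completion is $1 - E_L\bigl[\bigl((1-q)+q(1-q)^L\bigr)^n\bigr]$.

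Combining the deterministic $t=1$ contribution $1-(1-q)^n$ with the expected $(T-1)q = n$ arrivals at $t\ge 2$ (each yielding the above expected completion) and dividing by $\mathbf{LP}_{\set{I}_2}=1+1/q$ gives the desired ratio. The main technical obstacle is algebraically simplifying $E_L\bigl[\bigl((1-q)+q(1-q)^L\bigr)^n\bigr]$ into a closed form matching $\kappa_2$: expanding via the binomial theorem and exchanging the order of summation produces a double sum that collapses through a geometric-series identity, introducing the factor $\log(1/(1-q))$ (the natural continuous-time rate parameter corresponding to Geom$(q)$). After collecting terms and substituting $n=1/q$, this yields the claimed ratio $1+q - \frac{q(1-q)}{\log(1/(1-q))(1+q)}(1-e^{-1})$.
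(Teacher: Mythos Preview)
Your overall structure matches the paper's three-step approach (lower-bound $\mathbf{LP}$, argue ``notify all'' is optimal, evaluate that policy), but two steps have genuine gaps.

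\textbf{Optimality of notify-all.} Your exchange argument is incomplete. You claim that by memorylessness, a volunteer's ``future trajectory distribution is identical whether she is notified at the current arrival or at a later arrival.'' But the relevant comparison is notify-now versus \emph{not}-notify-now, and these do \emph{not} have identical futures: if you skip her now she is deterministically active next period, whereas if you notify her she is active next period only with probability $q$. So there is a real future cost to notifying, and you must show the immediate gain $q(1-q)^k$ dominates it. The paper does this via an explicit backward-induction computation of the marginal value $\Delta_{k,\alpha_1}(\alpha_2)$ and shows it is nonnegative (ultimately via a bound that is verified numerically). Your one-line interchange does not establish this.

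\textbf{The closed form.} Your plan to obtain the stated bound via ``binomial expansion plus a geometric-series identity'' cannot work, because the expression $1+q-\frac{q(1-q)}{\log(1/(1-q))(1+q)}(1-e^{-1})$ is not the exact value of the notify-all policy---it is an \emph{upper bound} derived through two approximations. In the paper, the $\log(1/(1-q))$ factor appears only after replacing the discrete sum $\sum_{t\ge 1} q(1-q)^{t-1}(1-q(1-(1-q)^t))^n$ by the corresponding integral and substituting $u=1-(1-q)^t$; a geometric series over $t$ produces rational functions of $(1-q)$, never a logarithm. Likewise, the $(1-e^{-1})$ factor comes from bounding $(1-q)^{1/q-1}\bigl[(1+q)^{1/q+1}-1\bigr]$ by its $q\to 0$ limit $e^{-1}(e-1)$, not from an identity. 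Your binomial expansion would give $\sum_{k=0}^n \binom{n}{k}(1-q)^k q^{n-k}\cdot\frac{q(1-q)^{n-k}}{1-(1-q)^{n-k+1}}$, which does not collapse to the target. Relatedly, since you use the exact $\mathbf{LP}=1+1/q$ rather than the paper's lower bound $\mathbf{LP}\ge 1/q$, even a correct exact evaluation would yield a ratio strictly smaller than the stated $\kappa_2$, not equal to it.
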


The proof of this lemma involves three steps: (i) lower-bounding $\mathbf{LP}_{\set{I}_2}$ by finding a feasible solution, (ii) establishing that always notifying every volunteer is the best online policy, and (iii) assessing the performance of this policy relative to $\mathbf{LP}_{\set{I}_2}$. A full proof can be found in Appendix \ref{proof:extightsmallq}.

\medskip 
{\revcolor {\bf Instance $\set{I}_3$:} Suppose $V = \bigvalvol$ for sufficiently large $\bigvalvol$, $\D = 1$, $T=\bigval^2$, and the inter-activity time distribution is deterministic with length $\bigval$, e.g. $g(\tau) = \mathbb{I}(\tau = \bigval)$. 
We emphasize that $q = 0$ for such a distribution.
The arrival probabilities are given by $\lambda_{1, t} = \frac{1}{\bigval}$ for all $t \in [T]$. The volunteers are homogeneous with $p_{v, 1} = \frac{1}{\bigvalvol}$ for all $v \in [V]$. 
The bottom left panel of Figure \ref{fig:exhardness} visualizes instance $\set{I}_3$.
The following lemma---which is proven in Appendix \ref{proof:q0}---states that no online policy can complete more than a $0.334$ fraction of $LP_{\set{I}_3}$.

\begin{lemma}[Upper Bound for Instance $\set{I}_3$]
\label{lem:exq0}
In instance $\set{I}_3$, the expected number of completed tasks under any online policy is at most $0.334 \times \mathbf{LP}_{\set{I}_3}$.
\end{lemma}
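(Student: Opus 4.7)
The plan is to follow the same two-step template used in the proofs of Lemmas \ref{lem:exprophet} and \ref{lem:extightsmallq}: first compute (or lower-bound) $\mathbf{LP}_{\set{I}_3}$, and then upper-bound the expected completions achievable by any online policy.

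For the first step, I would exhibit the feasible solution $x_{v,1,t}=1$ for all $v,t$. To verify feasibility of constraint \eqref{eq:lpcon2}, note that since $1-G(t-\tau)=\mathbb{I}(t-\tau<n)$, the left-hand side reduces to $\sum_{\tau=\max\{1,t-n+1\}}^{t}\lambda_{1,\tau}\leq n\cdot(1/n)=1$. Plugging this solution into the objective gives $\sum_{t=1}^{T}\lambda_{1,t}\min\{V\cdot(1/n),1\}=T/n=n$. Since the objective is also at most $\sum_t\lambda_{1,t}=n$, I obtain $\mathbf{LP}_{\set{I}_3}=n$.

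For the second step, I would first use the full symmetry of the instance (homogeneous volunteers, a single task type, and i.i.d.\ Bernoulli arrivals) to restrict attention, without loss of generality, to online policies that are exchangeable across volunteers. Such a policy is fully described at each arrival by a (history-dependent, randomized) count $K_t\leq A_t$ of volunteers to notify, where $A_t$ denotes the number of active volunteers at time $t$. Because inactivity is deterministic and equal to $n$, we have the conservation $A_t=V-\sum_{\tau\in(t-n,t]}X_{\tau}K_{\tau}$, where $X_\tau$ is the arrival indicator, and the expected number of completed tasks is $\sum_{t=1}^{T}\lambda_{1,t}\E[g(K_t)]$ with $g(k)=1-(1-1/n)^k$ concave in $k$.

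To obtain the stated constant, I would pass to the fluid limit as $n\to\infty$ by rescaling time so that arrivals form a Poisson process of rate $1$ on $[0,n]$ and volunteers form a continuous mass of total size $1$ with inactivity length $1$. Notifying a mass $\mu\in[0,1]$ yields instantaneous completion probability $1-e^{-\mu}$, and the online problem reduces to a one-dimensional fluid control problem whose value function satisfies an HJB-type equation. A naive steady-state / Jensen argument already gives the weaker bound $1-e^{-1/2}\approx 0.393$ per arrival (since the activity constraint forces the long-run average notified mass $\mu$ to be at most $1/2$, and $g$ is concave). The tightening to $0.334$ comes from the intrinsic variability of a Bernoulli/Poisson arrival stream: the online policy cannot perfectly equalize notifications across arrivals, and concavity of $g$ strictly reduces expected completions below the fluid steady-state bound.

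The main obstacle is precisely quantifying this variance penalty. My plan is to solve the limiting fluid control problem explicitly---arguing structurally that the optimum is attained by a threshold policy that notifies all currently active mass at each arrival---and then show via a renewal-type computation of the alternating ``waiting-for-arrival / waiting-for-reactivation'' cycle that the resulting expected reward per unit of rescaled time is at most $0.334$. Combined with step one, this yields $\mathbf{POL}_{\set{I}_3}\leq 0.334\,\mathbf{LP}_{\set{I}_3}$ as claimed.
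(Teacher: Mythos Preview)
Your first step is fine and matches the paper exactly.

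The second step has a genuine gap. The structural claim that ``the optimum is attained by a threshold policy that notifies all currently active mass at each arrival'' is the intuition carried over from instance $\set{I}_2$, but it relies on the \emph{memorylessness} of the geometric inter-activity time there. In $\set{I}_3$ the inter-activity time is deterministic, so the state is no longer summarized by the count of active volunteers; the reactivation \emph{times} matter, and a policy can profitably stagger notifications across arrivals. In fact, a quick renewal computation of the notify-all policy in the fluid limit gives expected cycle length $2$ (one unit of inactivity plus one expected unit waiting for the next Poisson arrival) and $1-e^{-1}$ completions per cycle, i.e.\ $\tfrac{1}{2}(1-e^{-1})\approx 0.316$ per unit time, not $0.334$. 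So either your optimality claim is wrong, or you would end up proving a different constant than the one stated; in either case the argument as sketched does not close.

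The paper's proof supplies exactly the missing idea. It does \emph{not} identify the optimal online policy directly. Instead it passes to a relaxed environment: define regeneration times $\tau_1,\tau_2,\dots$ where $\tau_i$ is the first arrival after $\tau_{i-1}+n$, and at each $\tau_i+n$ declare every still-inactive volunteer ``pseudoactive'' (able to respond) until her true reactivation time. Any feasible online policy is feasible in this relaxation, and the relaxation decomposes into i.i.d.\ $n$-period subproblems (one per $\tau_i$) in which each volunteer can be notified at most once and a task arrives deterministically in period $1$. The optimal policy for a single subproblem is then solved exactly: notify a fraction $z^*=1-\tfrac{1}{2}\log(e-1)$ at the first arrival and the remaining $1-z^*$ at the second, yielding value $2-\tfrac{1+2\sqrt{e-1}}{e}<0.668$. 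Since $\E[\tau_{i+1}-\tau_i]\to 2n$, there are $n/2$ subproblems over the horizon, and $0.668\times\tfrac{1}{2}=0.334$ is the bound. Note in particular that the subproblem optimum is \emph{not} notify-all, which is precisely why your structural conjecture fails.
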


Instance $\set{I}_3$ is quite similar to instance $\set{I}_2$, and correspondingly, the proof of Lemma \ref{lem:exq0} builds on ideas in the proof of Lemma \ref{lem:extightsmallq}. A full proof can be found in Appendix \ref{proof:q0}.}

\begin{figure}[t]
\centering
 \includegraphics[trim=65 515 65 70, clip, width=.95\textwidth]{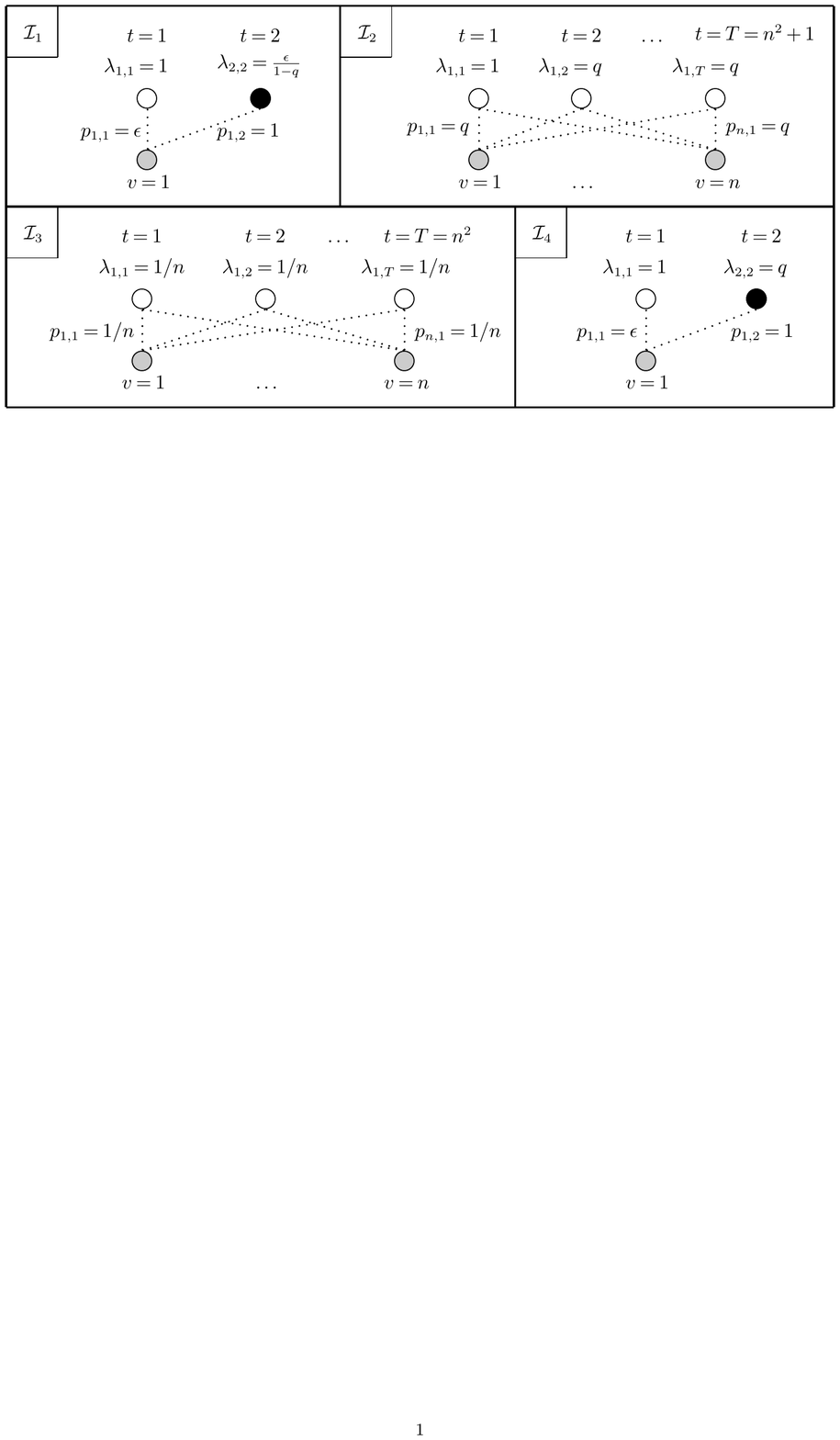}
\caption{Visualizations of instances $\set{I}_1$, $\set{I}_2$, $\set{I}_3$, and $\set{I}_4$.} \label{fig:exhardness}
\end{figure}

\subsection{Upper Bound on Following Ex Ante Solution}
\label{subsec:upperboundexante}
As noted in Section \ref{subsec:offline}, directly following the ex ante solution $\mathbf{x^*}$ (as defined in \eqref{eq:xstar}) does not achieve a good competitive ratio because volunteers are not always active at the ``right time'' if notifications only respect their inactivity period \emph{in expectation}. 
To highlight this intuition, 
we state the following proposition which establishes an upper bound on the performance of {such a} policy.


\begin{proposition}[Upper Bound on Performance of Following Ex Ante Solution]
\label{prop:ubexante}
Suppose the MDHR of the inter-activity time distribution is {$ q > 0$}. Then the policy of directly following the ex ante solution achieves a competitive ratio of at most $q$.
\end{proposition}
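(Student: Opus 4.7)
The plan is to exhibit an instance of the online volunteer notification problem on which the policy of directly following $\mathbf{x^*}$ has expected objective at most $q\cdot\mathbf{LP}$. The underlying intuition is that constraint \eqref{eq:lpcon2} only enforces the inter-activity distribution \emph{in expectation}: $\mathbf{x^*}$ may therefore prescribe notifications that cluster more tightly in time than a single sample path can accommodate, and by the MDHR assumption a freshly notified volunteer is active again at the very next period with probability as low as $q$ in the worst case (attained, e.g., by geometric $g$ with parameter $q$, for which $g(1)=q$).

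A natural first candidate to analyze is a single-volunteer ($V=1$) instance, in which $f(\cdot)$ and the (LP) objective coincide, so all three candidates in \eqref{eq:xstar} agree and there is no ambiguity in $\mathbf{x^*}$. Taking $S=1$, $T=2$, $\lambda_{1,1}=\lambda_{1,2}=1$, $p_{1,1}=1$, and $g$ geometric with parameter $q$, one can check that $\mathbf{LP}=1+q$ (attained by $x^*_{1,1,1}=1,\,x^*_{1,1,2}=q$), while directly following $\mathbf{x^*}$ completes the first task for sure and the second only when the volunteer is both notified (prob.\ $q$) and active (prob.\ $g(1)=q$), yielding expected value $1+q^2$. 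The resulting ratio $(1+q^2)/(1+q)$ already exhibits the qualitative behavior of the bound---the loss grows as $q$ shrinks---but it is larger than $q$, so this base instance alone does not certify the claim.

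The plan for the full proof is therefore to amplify this construction, either by extending the horizon or by replicating across many volunteers, so that the LP accumulates $\Omega(Tq)$ expected completions from post-initial notifications (each prescribed by $\mathbf{x^*}$ at rate $q$), while the policy's sample-path realization of these notifications almost always coincides with inactivity. In such a scaled instance, the $t=1$ completion becomes a vanishing fraction of $\mathbf{LP}$, and the ratio of policy value to $\mathbf{LP}$ converges to $q$. The proof would then (i) specify the instance, (ii) verify that the $\max$-based $\mathbf{x^*}$ in \eqref{eq:xstar} is the ``bad'' fractional solution across all three candidates, (iii) compute $\mathbf{LP}$ in closed form, and (iv) upper-bound the expected value of the direct-follow policy via a renewal-theoretic argument that tracks the joint distribution of volunteer active/inactive states.

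The main obstacle---and where I would concentrate most of the effort---is controlling the definition $\mathbf{x^*}:=\mathrm{argmax}_{\mathbf{x}\in\{\mathbf{x^*_{LP}},\mathbf{x^*_{AA}},\mathbf{x^*_{SQ}}\}} f(\mathbf{x})$, which gives the policy three chances to pick a (schedule-like) solution that happens to avoid the pitfall when followed literally. The construction must rule this out, for instance by choosing parameters so that all three candidates essentially coincide with the fractional solution being analyzed; restricting to $V=1$ achieves this automatically, which is why the single-volunteer family is the natural setting to scale up.
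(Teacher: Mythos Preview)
Your instinct to work with a single volunteer so that $\mathbf{x^*_{LP}},\mathbf{x^*_{AA}},\mathbf{x^*_{SQ}}$ coincide is exactly right, and your computation for the two-period base instance is correct. The gap is in the amplification step: extending your instance to a long horizon does \emph{not} drive the ratio to $q$.

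Concretely, take your parameters with $T$ large: $V=S=1$, $p_{1,1}=1$, $\lambda_{1,t}=1$ for all $t$, and $g$ geometric with parameter $q$. Then the unique optimum of (LP) is $x^*_{1,1,1}=1$ and $x^*_{1,1,t}=q$ for $t\ge 2$, giving $\mathbf{LP}=1+(T-1)q$. Under the direct-follow policy the activity probability $a_t$ satisfies (using memorylessness of $g$) the recursion
\[
a_{t+1}=a_t\bigl[(1-q)+q\cdot q\bigr]+(1-a_t)\,q=a_t(1-q)^2+q,
\]
whose fixed point is $a=\tfrac{1}{2-q}$. Hence for large $T$ the policy earns $\approx q\cdot\tfrac{1}{2-q}$ per period while the LP earns $q$, and the ratio converges to $\tfrac{1}{2-q}\ge\tfrac12$, not to $q$. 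The same obstruction persists if you replicate volunteers: with homogeneous $p\equiv 1$ every notification contributes to the objective exactly what it ``costs'' in constraint~\eqref{eq:lpcon2}, so no amount of scaling can create the needed imbalance.

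What is missing is heterogeneity in match probabilities that lets the ex ante solution \emph{waste} the volunteer on a nearly worthless task. The paper does this in two periods with two task types: type~1 arrives at $t=1$ with $\lambda_{1,1}=1$ and $p_{1,1}=\epsilon$; type~2 arrives at $t=2$ with $\lambda_{2,2}=q$ and $p_{1,2}=1$. The solution $x^*_{1,1,1}=x^*_{1,2,2}=1$ is feasible (the $t=2$ constraint reads $(1-q)+q=1$) and uniquely optimal with $\mathbf{LP}=\epsilon+q$. Following $\mathbf{x^*}$ deterministically notifies at $t=1$, so the volunteer is active at $t=2$ only with probability $q$; the policy value is $\epsilon+q^2$, and the ratio tends to $q$ as $\epsilon\to 0$. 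No horizon extension or renewal argument is needed once the low-value ``decoy'' task is in place.
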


\begin{proof}{Proof:}
In order to prove this proposition, we construct an  instance with a single volunteer such that, under the policy of directly following the ex ante solution,  
she  is likely to be inactive when she is most valuable. In particular, we define instance $\set{I}_4$ as follows:

{\bf Instance $\set{I}_4$: } Suppose $V = 1$, $\D = 2$, $T=2$, and $g(\cdot)$ is the geometric distribution with parameter $q$, e.g. $g(\tau) = q(1-q)^{\tau-1}$. The arrival probabilities are given by $\lambda_{1, 1} = 1$ and $\lambda_{2, 2} = q$. The volunteer match probabilities are given by $p_{1, 1} = \epsilon$ and $p_{1,2} = 1$, where $\epsilon << q$. The bottom right panel of Figure \ref{fig:exhardness} visualizes instance $\set{I}_4$. In the rest of the proof, we show that for instance $\set{I}_4$, the expected number of completed tasks when directly following the ex ante solution is $q\mathbf{LP}_{\set{I}_4}$.

We first claim that the ex ante solution in instance $\set{I}_4$ is $x^*_{1,1,1} = 1$ and $x^*_{1, 2, 2} = 1$. {Such} a solution is feasible in (LP), and because the objective of (LP) is monotonically increasing in $x_{1,1,1}, x_{1, 2, 2} \in [0,1]$, no other feasible solution can achieve a higher value. As a consequence, $\mathbf{LP}_{\set{I}_4} = \epsilon + q$.

An online policy of following $\mathbf{x^*}$ (the ex ante solution) is therefore equivalent to always notifying volunteer $1$ when there is an arrival. Such a policy completes a task in the first period with probability $\epsilon$ and a task in the second period with probability $q^2$ (note that she will become active in period 2 with probability $q$ and there will be an arrival of type $2$ independently with probability $q$). The expected number of tasks completed is then given by $q(1+ \frac{\epsilon(1-q)}{q(q+\epsilon)})\mathbf{LP}_{\set{I}_4}$. For $\epsilon<<q$, this corresponds to a competitive ratio of $q$.
\Halmos \end{proof}

We conclude this section by noting that instance $\set{I}_4$ also highlights the challenges that the reactivation of volunteers brings to the design of online policies for our online volunteer  notification problem compared to the classic online matching setting---in which once a resource is matched, it will not return. 
In particular, papers such as \citet{OnlineWeighted} show that for the classic online matching setting, relying on an ex ante solution leads to constant-factor competitive ratios. 
However, instance $\set{I}_4$ shows that this is not the case when volunteers can reactivate. In other words, generalizing online stochastic matching (by allowing for the return of resources) comes with technical challenges that require us to employ new design ideas.

\section{Evaluating Policy Performance on FRUS Data}
\label{sec:data}
In this section, we use data from FRUS to evaluate the performance of the {\sdnpcolor SN policy} described in Section \ref{sec:algos}. First, we briefly explain how we use data to determine the model primitives. Then we exhibit the superior performance of our {\sdnpcolor policy} compared to different benchmarks, including policies that resemble the strategies used at various FRUS locations.

{\bf Estimating model primitives:} As explained in Section \ref{sec:model}, in order to define an instance of the online volunteer notification problem, we must determine the match probabilities, i.e., $\{p_{v, \don}: v \in [V], \don \in [\D]\}$; the arrival rates of tasks, i.e., $\{\lambda_{\don, t}: \don \in [\D], t \in [T] \}$; and the inter-activity time distribution $g(\cdot)$. {\revcolor For each location, we considered a six-week horizon along with a moderate number of volunteers ($V \in [15, 20]$) and a reasonable granularity of task types ($\D \in [30, 75])$.}
\begin{itemize}
\item[] {\bf Match probabilities:} As evidenced in Figure \ref{fig:pcatotal}, volunteer preferences over task types are heterogeneous and predictable. To come up with estimates $\{\hat{p}_{v,\don}: v \in [V], \don \in [\D] \}$ for each FRUS location, we first create a feature vector for each task. We then build a $k$-Nearest Neighbors classification model, tuning the parameter $k$ using cross-validation. The AUC values of such classification models range between {0.85 and 0.95} across tested locations. 
\item[] {\bf Arrival Rates:} Recall that for FRUS, a task is a food rescue (donation) that remains available on the day of delivery. We define a set of task types based on pick-up and drop-off location, day-of-week, and time-of-day. Most food rescues repeat on a weekly cycle, and using a logistic regression model based on type and historical sign-up patterns, we can make accurate predictions about whether such rescues will be available on the day of delivery. The AUC values of these logistic regression models range between 0.78 and 0.96 across tested regions when making predictions over a six-week horizon.\footnote{In Appendix \ref{app:robustness}, we present numerical results that demonstrate the robustness of {\sdnpcolor our policy}.}
\item[] {\bf Inter-activity time distribution:} At FRUS, many site directors follow a policy of waiting at least a week before notifying the same volunteer about another last-minute food rescue. Many other volunteer notification systems operate under similar self-imposed restrictions on notification frequency, such as notifications about blood donation studied in \citet{dickerson2019blood}. Consequently, for the results reported in Figure \ref{fig:numerics1}, we assume the inter-activity time is deterministic and equal to seven days.
\end{itemize}

\begin{figure}[t]
\centering
\scalebox{1.0}[1.0]{
 \includegraphics[width=.94\textwidth]{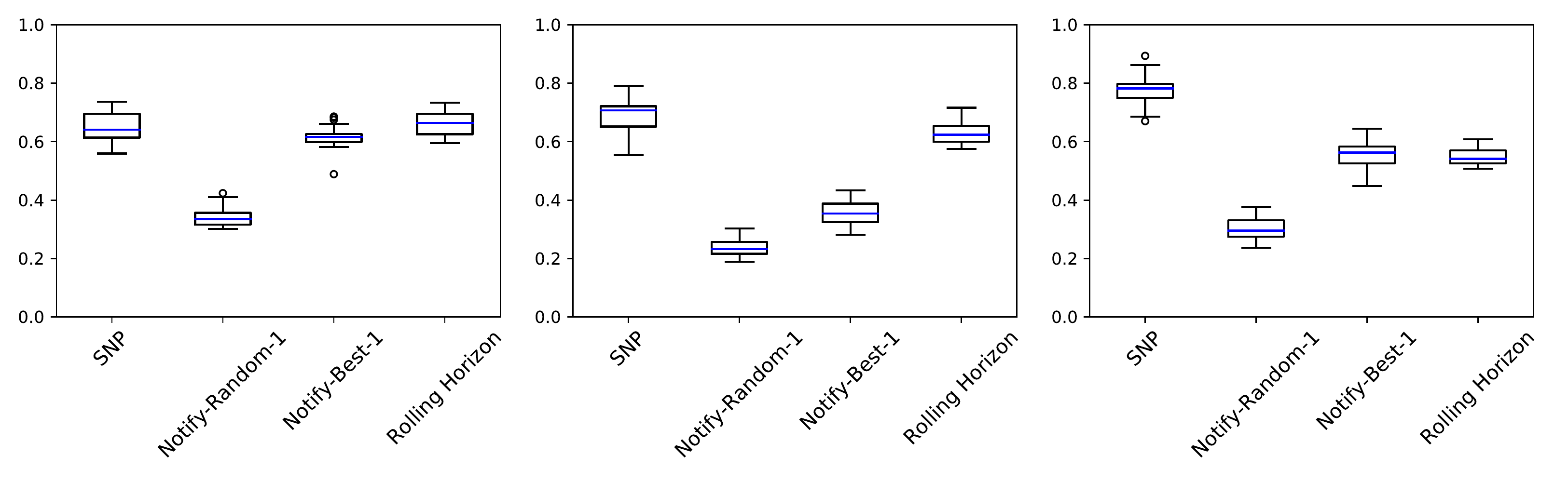}}
\caption{The fraction of $\mathbf{LP}$ achieved in Locations (a), (b), and (c) (left, middle, and right, respectively) by the SN policy and heuristics assuming a deterministic inter-activity time.}
\label{fig:numerics1}
\end{figure}

In the following, we compare the performance of our {\sdnpcolor online policy} to several benchmarks, {including a strategy that simulates the current practice at various FRUS locations.} We use instances constructed with data from three different locations as described above. Whenever a task arrives, FRUS site directors tend to notify a small subset of volunteers, {chosen haphazardly among `eligible' volunteers. We simulate this common practice with a `notify-random-$n$' heuristic that notifies $n$ volunteers chosen uniformly at random among eligible volunteers. Because site directors usually follow the practice of waiting at least a week before notifying the same volunteer again, a volunteer is considered eligible if she has not been notified for at least 6 days. We remark that the subset of such eligible volunteers perfectly coincides with the subset of active volunteers when the inter-activity time distribution is deterministically equal to one week.
This highlights that our framework perfectly captures the current practice. 
In addition to the commonly-used heuristic described above, we
compare our {\sdnpcolor policy} to two stronger, data-driven heuristics. First, we consider a `notify-best-$n$' heuristic that greedily notifies the $n$ eligible volunteers with the largest matching probabilities, i.e., the highest values of $p_{v, \don}$. We also simulate an adaptive `rolling horizon' heuristic that solves a one-week version of (LP) for eligible volunteers whenever a task arrives and probabilistically follows that solution.}

Figure \ref{fig:numerics1} displays the ratio between each policy and $\mathbf{LP}_\set{I}$ across 25 simulations. We highlight that the SN policy displays remarkable consistency, {significantly outperforming the current FRUS practice which is represented by the `notify-random-$n$' heuristic. In addition, the SN policy significantly outperforms the two stronger heuristics in at least one location (e.g., Location (c)) while always performing at least as well.}\footnote{We remark that  for `notify-random-$n$' and `notify-best-$n$', we have optimized over $n$, which turned out to be $1$
in all 3 regions and for both heuristics.} Further note that the SN policy's performance significantly exceeds its competitive ratio {of $\frac{1}{2}(1-\frac{1}{e})$}, as given in Theorem \ref{thm:alg2}.

Moving beyond the common practice at FRUS---which coincides with our setting when the inter-activity time distribution is deterministic---we note that 
our {\sdnpcolor policy provides broader guarantees} for a general inter-activity time distribution. As such, we also study {\sdnpcolor its} numerical performance when faced with stochastic inter-activity times. One natural choice to study is the geometric distribution, which is assumed for the model of volunteer inter-activity time in previous work such as \citet{ata2016dynamic}. To that end, in Figure \ref{fig:numerics2}, we provide numerical results on FRUS instances, but when modifying the inter-activity time distribution and assuming it is geometric with mean of 7 days. {\revcolor We emphasize that all other problem parameters remain the same.}
Figure \ref{fig:numerics2} displays the ratio between each policy and $\mathbf{LP}_\set{I}$ across 25 simulations for such a setting. 
Since the platform does not observe whether or not a volunteer is active, we compare our {\sdnpcolor policy} to other benchmarks suitable for this setting. First, we consider a simplistic `notify-all' heuristic that resembles a mass notification system commonly used in nonprofit settings. Next, we consider a highly intelligent heuristic `notify-upto-$\rho$,'
 which notifies volunteers in descending order of $p_{v,\don}$ until the probability of any volunteer responding positively exceeds $\rho$. {As an example, suppose a task of type $\don$ arrives at time $t$, with $p_{1,\don} \geq p_{2,\don} \geq \ldots \geq p_{V,\don}$; {\revcolor in addition, let $a_v$ for all $v \in [V]$ denote the probability that volunteer $v$ is currently active  when employing the `notify-upto-$\rho$' heuristic. In this example, under such a heuristic} the platform notifies volunteers $1, 2, \ldots, u$ where $u$ marks the volunteer for which $1-\prod_{v=1}^{u}(1- p_{v,\don}a_{v})  \geq \rho$ but $1-\prod_{v=1}^{u-1}(1- p_{v,\don}a_{v})  < \rho$.} 
We consider two values of  $\rho$, $0.25$ and $0.5$.
{Once again, the SN policy vastly outperforms the simple heuristic (in this case, `notify-all') in all three locations. Further, it exhibits significantly better performance than each heuristic in at least one location and always performs comparably to the best heuristic. }

Given that all of the model primitives except the inter-activity time distribution are the same  for the results presented in Figures \ref{fig:numerics1} and \ref{fig:numerics2}, 
we can assess how the performance of our {\sdnpcolor policy changes when the inter-activity time  changes from being deterministic to being random and geometrically distributed (with the same mean). 
Averaging across the three locations,
the total number of tasks completed by the SN policy is on average 24\% more in the former, i.e., deterministic, setting than in the latter, i.e., geometric. 
However, the benchmark is also larger when the inter-activity time is deterministic by an average of 5\%.
As a consequence, in the former setting, the SN policy achieves ratios that are 19\% better than in the latter setting. We remark that our policy performs better in the deterministic setting despite having a better worst-case guarantee in the geometric setting (recall that the worst-case guarantee is increasing in the MDHR, $q$, {which is $0$ in the deterministic setting}).}

\begin{figure}[t]
\centering
\scalebox{1.0}[1.0]{
 \includegraphics[width=.98\textwidth]{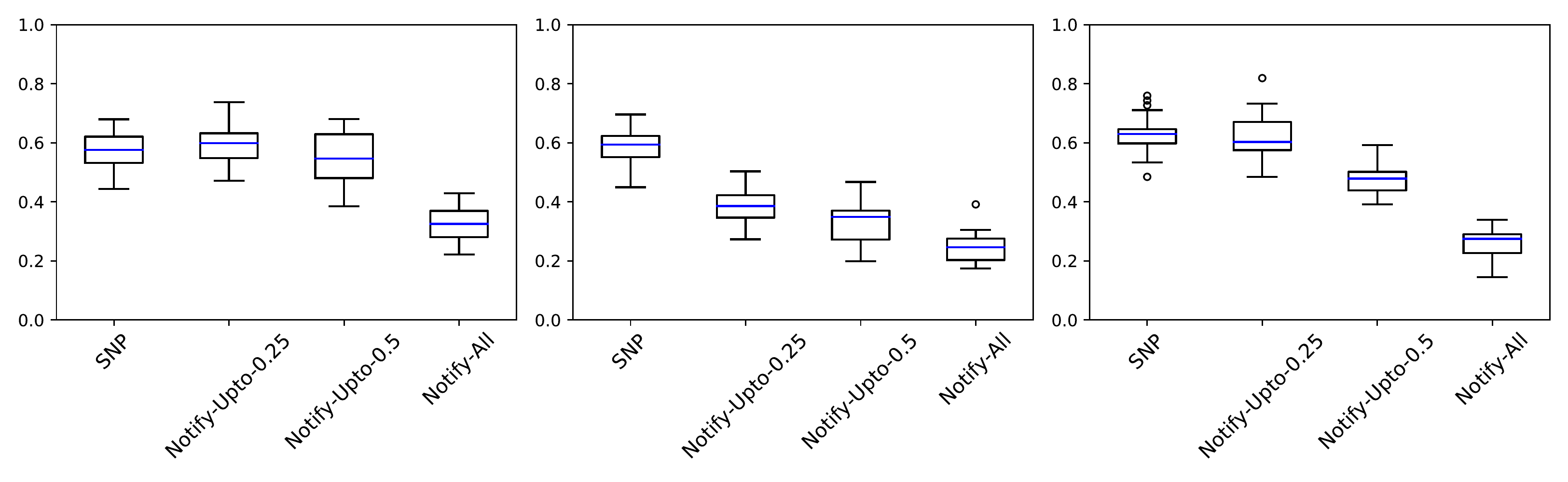}}
\caption{The fraction of $\mathbf{LP}$ achieved in Locations (a), (b), and (c) (left, middle, and right, respectively) by the SN policy and heuristics assuming a geometric inter-activity time.}
\label{fig:numerics2}
\end{figure}

\section{Conclusion}
\label{sec:conclude}

In this paper, we take an algorithmic approach to a commonly faced challenge on volunteer-based crowdsourcing platforms: how to utilize volunteers for time-sensitive tasks at the ``right'' pace while {maximizing the number of completed} tasks. We introduce the online volunteer notification problem {to model} volunteer behavior as well as the trade-off that the platform faces in this online decision making process. 
We develop {\sdnpcolor an online policy that achieves a constant-factor guarantee} parameterized by the MDHR of the volunteer inter-activity time distribution, which gives insight into the impact of volunteers' activity level. The guarantees provided by our {\sdnpcolor policy}  are close to the upper bound we establish for the performance of any online policy.

{\revcolor
Beyond volunteer crowdsourcing, our general framework can be used to design targeted notification systems for other purposes, such as marketing. (As mentioned in the introduction, similar negative reactions to excessive notifications have been documented in that literature.) 
{To address practical considerations} of broader applications, here we discuss how our base model can be readily extended in three dimensions {(for simplicity, we retain the terminology of volunteers and tasks)}.
(i) If tasks generate different rewards, we can {incorporate task weights into our model}. Then, we can simply adjust our SN policy to account for the task weights, which attains the same theoretical guarantees as in the unweighted case presented.
(ii) If volunteers differ in terms of their reaction to notifications, we can adjust our model to accommodate heterogeneous inter-activity time distributions  (our theoretical results extend as long as $q$ is taken to be the infimum of all the heterogeneous MDHRs).
(iii) If a volunteer's preferences exhibit seasonality,
we can incorporate time-varying compatibilities between volunteers and tasks (i.e., $p_{v, \don, t}$ instead of $p_{v, \don}$) without impacting our upper and lower bounds, which allows for settings where volunteers are not initially active.}

{\revcolor We also discuss a few ways of further extending our framework and results which could be valuable directions for future research.} If more than one task can arrive in a given period, our framework could be adjusted to allow the platform to present volunteers with a subset of available tasks when sending a notification.  Our {\sdnpcolor SN policy} could be implemented in such a setting by incorporating volunteers' choice---when faced with a subset---through individualized discrete choice functions. 
Analyzing the performance of our policy in such a setting is an interesting research direction. {\revcolor In the same vein, if some tasks are not prohibitively time-sensitive, the platform could consider batching tasks to improve match efficiency. While our lower bound (achieved by the SN policy) still holds for such a setting, designing policies that outperform SN would be a fruitful direction.
Expanding our framework to include learning would be valuable in settings where the volunteer pool rapidly expands and exploration is needed to ascertain preferences. Additional empirical study of volunteer behavior could shed light on whether the inter-activity time distribution depends significantly on volunteers' responses. From an algorithmic perspective, incorporating a response-dependent inter-activity time distribution would require new technical ideas, since a volunteer's inter-activity time after being notified for a task would depend on the subset of other volunteers notified for the same task.}
Lastly, 
in this work, we measure the performance of an online policy by comparing it to an LP-based benchmark which upper-bounds a clairvoyant solution. From a theoretical perspective, considering other benchmarks (perhaps less strong) is an interesting future direction. 

This work is motivated by our collaboration with FRUS,  a leading volunteer-based food recovery platform,  analysis of whose data confirms that, by and large, volunteers have persistent preferences. Leveraging historical data, we estimate the match probability between volunteer-task pairs as well as the arrival rate of tasks. This enables us to test our {\sdnpcolor policy} on FRUS data from different locations and illustrate {\sdnpcolor its} effectiveness compared to common practice. 
From an applied  perspective,  developing  decision tools that can be integrated with the FRUS app is an immediate next step that we plan to pursue. Finding other platforms that can benefit from our work is another direction for future work.

\ACKNOWLEDGMENT{The authors gratefully acknowledge the Simons Institute for the Theory of Computing, as this work was done in part while attending the program on Online and Matching-Based Market Design.}

{ 
\bibliographystyle{plainnat}
\bibliography{references}
}
\newpage

\begin{APPENDICES}


\section{Proofs for Section \ref{sec:model}}
\subsection{Proof of Proposition \ref{prop:LP}}
\label{proof:LP}
To show that $\mathbf{LP}$ is an upper bound on the clairvoyant solution, we will construct a feasible solution $\mathbf{x} \in \mathcal{P}$ based on the clairvoyant solution. We will then prove that the value of this solution is an upper bound on the value of the clairvoyant solution. 

Let us define the random variables representing inter-activity times as $\vec{Z} \in \vec{\mathcal{Z}} = \mathbb{N}^{V \times T}$, where $Z_{v,t}$ is the inter-activity time of volunteer $v$ if notified at time $t$. In addition, we denote the random arrival sequence as $\vec{\D} \in \vec{\mathcal{\D}} =  [\D]^{T}$, where $\D_t$ is the arrival at time $t$. Finally, suppose we have an indicator variable $\omega_{v,t}(\vec{\don}, \vec{z})$, which is equal to one if and only if the clairvoyant solution contacts volunteer $v$ at time $t$ when the arrival order is given by $\vec{\don}$ and the inter-activity times are given by $\vec{z}$. Recall that we consider a clairvoyant solution that knows the arrival order and the status of each volunteer, but it does not know $Z_{v,t}$ until after time $t$.
Consequently, $\omega_{v,t}(\vec{\don}, \vec{z})$ cannot depend on $z_{v,t'}$ for $t' \geq t$.  

For any volunteer $v$, task of type $j$, and time $t$, we define $\mathbf{\hat{x}} := \{\hat{x}_{v, j, t} : v \in [V], j \in [\D], t \in [T]\}$ such that $$\hat{x}_{v, j, t} = \sum_{\vec{\don} \in \vec{\mathcal{\D}}}\sum_{\vec{z} \in \vec{\mathcal{Z}}} \P{\vec{\D} = \vec{\don}|\D_t = j} \P{\vec{Z} = \vec{z}} \omega_{v,t}(\vec{\don}, \vec{z}).$$ 

To show that $\mathbf{\hat{x}} \in \mathcal{P}$ (see Definition \ref{def:P}), we immediately note that $\hat{x}_{v,j,t} \in [0,1]$, since we are summing indicator variables over a probability distribution. We now need to show that constraint $\eqref{eq:lpcon2}$ is met, namely that $1 \geq  \sum_{t' =1}^t \sum_{j =   1}^{\D} \lambda_{j,t} \hat{x}_{v,j,t} (1-G(t-t'))$. Note that for a given sequence of arrivals $\vec{\don}$ and inter-activity times given by $\vec{z}$, we must have
\begin{align}
1 &\geq \sum_{t'=1}^t  \omega_{v,t'}(\vec{\don},\vec{z}) \I{z_{v,t'} > t-t'} 
\end{align}
This is because both $\omega_{v,t'}(\vec{\don},\vec{z})$ and $\I{z_{v,t'} > t-t'}$ are indicator variables, and if both equal $1$ at time $t'$, then the volunteer $v$ must be inactive until after time $t$. Since the clairvoyant solution only notifies active volunteers, if volunteer $v$ is inactive from $t'$ until after $t$, then $\omega_{v,t''}(\vec{\don},\vec{z}) = 0$ for all $t'' \in [t'+1, t]$. Thus, the sum from $t'=1$ to $t'= t$ of the product of these indicator variables cannot exceed $1$. We now take a weighted sum over all possible arrival sequences and inter-activity times:
\begin{align}
1 \geq& \sum_{t'=1}^t  \sum_{\vec{\don} \in \vec{\mathcal{\D}}} \P{\vec{\D} = \vec{\don}}\sum_{\vec{z} \in \vec{\mathcal{Z}}}\P{\vec{Z} = \vec{z}}  \omega_{v,t'}(\vec{\don},\vec{z}) \I{z_{v,t'} > t-t'} \nonumber \\
=&\sum_{t'=1}^t  \sum_{\vec{\don} \in \vec{\mathcal{\D}}} \P{\vec{\D} = \vec{\don}}\left(\sum_{\vec{z} \in \vec{\mathcal{Z}}}\P{\vec{Z} = \vec{z}}  \omega_{v,t'}(\vec{\don},\vec{z})\right) \left(\sum_{\vec{z} \in \vec{\mathcal{Z}}}\P{\vec{Z} = \vec{z}} \I{z_{v,t'} > t-t'} \right) \label{eq:lpproof1} \\
=&\sum_{t'=1}^t  \sum_{\vec{\don} \in \vec{\mathcal{\D}}} \P{\vec{\D} = \vec{\don}}\left(\sum_{\vec{z} \in \vec{\mathcal{Z}}}\P{\vec{Z} = \vec{z}}  \omega_{v,t'}(\vec{\don},\vec{z})\right) (1-G(t-t')) \label{eq:lpproof2} \\
=&\sum_{t'=1}^t  \sum_{j=1}^{\D} \lambda_{j,t'} \sum_{\vec{\don} \in \vec{\mathcal{\D}}} \P{\vec{\D} = \vec{\don} | \D_{t'} = j}\left(\sum_{\vec{z} \in \vec{\mathcal{Z}}}\P{\vec{Z} = \vec{z}}  \omega_{v,t'}(\vec{\don},\vec{z})\right) (1-G(t-t')) \label{eq:lpproof3} \\
=& \sum_{t'=1}^t \sum_{j=1}^{\D} \lambda_{j,t'}\hat{x}_{v,j,t'}(1-G(t-t')) \label{eq:lpproof4}
\end{align}
In line \eqref{eq:lpproof1}, we use the independence of $\omega_{v,t'}(\vec{\don},\vec{z})$ and $\I{z_{v,t'} > t-t'}$ to rewrite the expected value of their product as the product of their expectations. We substitute in the expected value of $\I{z_{v,t'} > t-t'}$ in line \eqref{eq:lpproof2}. In line \eqref{eq:lpproof3}, we use the law of total probability to sum over all possible arriving task types in time $t'$. We then substitute in the definition of $\mathbf{\hat{x}}$ in line \eqref{eq:lpproof4}. This proves that $\mathbf{\hat{x}} \in \set{P}$. 

It remains to be shown that $\sum_{t=1}^T \sum_{\don=1}^{\D}   \lambda_{\don,t} \min\{ \sum_{v =1}^{V} \hat{x}_{v,\don,t}p_{v,\don}, 1\}$ is at least the value of the clairvoyant solution. Let $\mathcal{C}_{\don, t}$ be the event that a task of type $\don$ arrives at time $t$ and is completed when following the clairvoyant solution. We must have $\P{ \mathcal{C}_{\don, t}} \leq \lambda_{\don,t}$. In addition, since a volunteer must respond in order to complete a task, we must have $$\P{\mathcal{C}_{\don, t}} \leq \lambda_{\don,t}\sum_{\vec{\don} \in \vec{\mathcal{\D}}}\sum_{\vec{z} \in \vec{\mathcal{Z}}} \P{\vec{\D} = \vec{\don}|\D_t = \don} \P{\vec{Z} = \vec{z}} \sum_{v =1}^{V} \omega_{v,t}(\vec{\don}, \vec{z})p_{v,\don} = \lambda_{\don,t} \sum_{v =1}^{V} \hat{x}_{v, \don, t}p_{v,\don}.$$ Combining these two bounds and summing over all task types and time periods, we see that the clairvoyant solution must be at most $\sum_{t=1}^T \sum_{\don=1}^{\D}   \lambda_{\don,t}  \min\{\sum_{v =1}^{V} \hat{x}_{v,\don,t}p_{v,\don}, 1\}$.
Since $\mathbf{\hat{x}} \in \set{P}$ and achieves a weakly larger value than the clarivoyant solution, we have shown that $\mathbf{LP}$ is an upper bound on the clairvoyant solution.

\section{Proofs for Section \ref{sec:algos}}
\subsection{Proof of Lemma \ref{lem:falg}}
\label{proof:falg}
Let $$\hat{f}(\mathbf{x}) := \sum_{v=1}^V f_v(\mathbf{x}) = \sum_{v=1}^V \sum_{t=1}^T \sum_{\don=1}^{\D} \lambda_{\don,t} \left(\prod_{u < v}(1-p_{u, \don} x_{u, \don, t})\right)  p_{v, \don} x_{v, \don, t}.$$ We prove by induction on $V$ that $f(\mathbf{x}) = \hat{f}(\mathbf{x})$, where $f(\mathbf{x})$ is defined in $\eqref{eq:fdef1}$. As a base case, suppose $V=1$. In this case, $f(\mathbf{x}) = \sum_{t=1}^T \sum_{\don=1}^{\D} \lambda_{\don,t} \left(1 - \prod_{v=1}^{1} (1- p_{v,\don}x_{v,\don,t}) \right) = f_1(\mathbf{x})$ so $f(\mathbf{x})$ and $\hat{f}(\mathbf{x})$ are equivalent.

Now suppose this holds for $V = k$. We will show $f(\mathbf{x}) = \hat{f}(\mathbf{x})$ when $V = k+1$. 
\begin{align}
    f(\mathbf{x}) =& \sum_{t=1}^T \sum_{\don=1}^{\D} \lambda_{\don,t} \left(1 - \prod_{v =1}^{k+1} (1- x_{v,\don,t}p_{v,\don}) \right) \\
    =& \sum_{t=1}^T \sum_{\don=1}^{\D} \lambda_{\don,t} \left(1 - (1-x_{k+1,\don,t}p_{k+1,\don})\prod_{v =1}^{k} (1- x_{v,\don,t}p_{v,\don}) \right) \\
    =&\sum_{t=1}^T \sum_{\don=1}^{\D} \lambda_{\don,t} \left(1 - \prod_{v =1}^{k} (1- x_{v,\don,t}p_{v,\don}) \right) \nonumber \\
    &+ \sum_{t=1}^T \sum_{\don=1}^{\D} \lambda_{\don,t} x_{k+1,\don,t}p_{k+1,\don}\prod_{v =1}^{k} (1- x_{v,\don,t}p_{v,\don}) \\
    =&\sum_{v =1}^k f_v(\mathbf{x}) + \sum_{t=1}^T \sum_{\don=1}^{\D} \lambda_{\don,t} x_{k+1,\don,t}p_{k+1,\don}\prod_{v =1}^{k} (1- x_{v,\don,t}p_{v,\don}) \label{eq:proof:falg1}\\
    =&\sum_{v =1}^{k+1} f_v(\mathbf{x}) \\
    =&\hat{f}(\mathbf{x})
\end{align}
All steps are algebraic except for Line \eqref{eq:proof:falg1}, which makes use of the inductive hypothesis. This completes the proof by induction that the two formulas are algebraically equivalent.

\subsection{Proof of Proposition \ref{prop:fxstar}}
\label{proof:fxstar}
To prove this proposition, we first focus on a particular task type $\don$ at a particular time $t$ and prove that for any $\mathbf{x} \in \set{P}$,
$$1-\prod_{v=1}^{V}(1-x_{v,\don,t}p_{v,\don}) \geq (1-\frac{1}{e})\min \{\sum_{v =1}^{V}  x_{v,\don,t}p_{v,\don}, 1\}.$$ 

  To prove the above inequality, we find the minimum possible value of $1-\prod_{v=1}^{V}(1-x_{v,\don,t}p_{v,\don})$ when $\min \{\sum_{v =1}^{V}  x_{v,\don,t}p_{v,\don}, 1\}$ is fixed and equal to $c \in [0,1]$. This is equivalent to solving the program: 
\begin{align}
    \text{maximize}_{\{x_{v, \don , t}: v \in [V]\}} \quad \  &\prod_{v=1}^{V}(1-x_{v,\don,t}p_{v,\don}) \label{eq:fxstarproof} \\
    \text{subject to} \quad \quad \quad  c=& \sum_{v =1}^{V} x_{v,\don,t}p_{v,\don} \nonumber
\end{align}
\begin{claim}
\label{claim:prooflemfxstar1}
For a fixed number of volunteers $V = n$, the value of \eqref{eq:fxstarproof} is less than or equal to  $(1-\frac{c}{n})^n$.
\end{claim}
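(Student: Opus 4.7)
{Proof Plan:}
The plan is to reduce the constrained product-maximization problem to a standard concavity/symmetry argument. Let me set $y_v := x_{v,\don,t} p_{v,\don}$ for each $v \in [n]$. Since $x_{v,\don,t} \in [0,1]$ and $p_{v,\don} \in [0,1]$, we have $y_v \in [0,1]$. Rewriting the problem in these variables, we want to upper bound
\[
\max \Bigl\{ \prod_{v=1}^n (1-y_v) : \sum_{v=1}^n y_v = c,\ y_v \in [0,1] \Bigr\}.
\]

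Next I would pass to logarithms. Since $1-y_v \in [0,1]$, the product is nonnegative, and (excluding the trivial boundary case where some $y_v = 1$, for which the product is $0 \le (1-c/n)^n$) we can equivalently maximize $\sum_{v=1}^n \log(1-y_v)$ subject to $\sum_v y_v = c$. The function $y \mapsto \log(1-y)$ is concave on $[0,1)$, so by Jensen's inequality,
\[
\frac{1}{n}\sum_{v=1}^n \log(1-y_v) \le \log\Bigl(1 - \frac{1}{n}\sum_{v=1}^n y_v\Bigr) = \log\Bigl(1-\frac{c}{n}\Bigr).
\]
Exponentiating both sides and raising to the $n$-th power gives $\prod_{v=1}^n (1-y_v) \le (1-c/n)^n$, which is exactly what we need. (Equivalently, the AM-GM inequality applied to the nonnegative quantities $1-y_v$ with common arithmetic mean $(n-c)/n$ would deliver the same bound.)

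There is no significant obstacle; the only point warranting care is the boundary case where some $y_v = 1$ makes $\log(1-y_v)$ undefined. This is handled by noting that the product is then $0$, which trivially satisfies the desired inequality, so we may restrict the optimization to the open region $y_v \in [0,1)$ where Jensen's inequality applies directly. This establishes the claim, which can then be combined with the inequality $(1-c/n)^n \le e^{-c} \le 1 - (1-e^{-1})c$ for $c \in [0,1]$ to yield the main $(1-1/e)$-approximation of the piecewise-linear upper bound used in Proposition~\ref{prop:fxstar}.
\Halmos
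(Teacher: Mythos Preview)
Your proof is correct. After the same change of variables $y_v = x_{v,\don,t}p_{v,\don}$, you apply Jensen's inequality to the concave function $y\mapsto\log(1-y)$ (handling the boundary $y_v=1$ separately), which immediately bounds the product by $(1-c/n)^n$ for \emph{every} feasible point and hence for the maximizer. The paper instead relaxes to $y_v\in[0,1]$ and then argues by induction on $n$, using first-order conditions at each step, that the relaxed optimum is attained at the symmetric point $y_v=c/n$. Your approach is shorter and more standard: it bypasses the need to identify the optimizer and avoids the inductive calculus altogether, while the paper's argument has the minor by-product of exhibiting the maximizer explicitly (which is not actually used downstream). Either route suffices for Proposition~\ref{prop:fxstar}.
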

\begin{proof}{Proof:}
First, we make a change of variables $y_{v,\don,t} = x_{v,\don,t}p_{v,\don}$, where $y_{v,\don,t} \in [0, p_{v,\don}]$. Relaxing this constraint to $y_{v,\don, t} \in [0, 1]$ provides an upper bound on \eqref{eq:fxstarproof}. We now prove by induction on $n$ that the solution to this relaxed problem is $y_{v,\don,t} = \frac{c}{n}$ for all $v$. In the base case with one volunteer, the objective becomes maximizing $y_{1,\don,t}$ subject to $y_{1,\don,t} = c$, which has a trivial solution.

We now assume this holds for $n=k$. If there are $k+1$ volunteers, we consider the problem 
\begin{align}
    \text{maximize}_{\{y_{v, \don , t}: v \in [k+1]\}} \quad \  &(1-y_{k+1,\don, t})\left(\prod_{v \leq k}(1-y_{v,\don,t})\right) \nonumber \\
    \text{subject to} \quad \quad \quad  &c= y_{k+1,\don,t} + \sum_{v=1}^k y_{v,\don,t} \nonumber
\end{align}
For any given $y_{k+1, \don, t} \in [0,1]$, we can apply the inductive hypothesis to solve $y_{v,\don,t} = \frac{c-y_{k+1, \don, t}}{k}$ for $v \in [k]$. This yields a single variable maximization problem with objective function $(1-y_{k+1,\don,t})(1-\frac{c-y_{k+1, \don, t}}{k})^k$. Taking the derivative, this has first order condition of $$(1-\frac{c-y_{k+1, \don, t}}{k})^{k-1}(1-y_{k+1,\don,t}-1+\frac{c-y_{k+1, \don, t}}{k}) = 0.$$ 
One can verify that the solution $y_{k+1,\don,t} = \frac{c}{k+1}$ is the maximum. This implies that $y_{v,\don,t} = \frac{c-y_{k+1, \don, t}}{k} = \frac{c}{k+1}$ for all $v \in [k+1]$, which completes the proof by induction.
Plugging these values for $y_{v,\don,t}$ into the objective function, we get a value of $(1-\frac{c}{n})^n$, which completes the proof of Claim \ref{claim:prooflemfxstar1}.
\Halmos
\end{proof}
Based on this claim, $1-\prod_{v=1}^{V}(1-x_{v,\don,t}p_{v,\don})$ must be at least $1-(1-\frac{c}{n})^n$ when $\sum_{v =1}^{V}x_{v,\don,t}p_{v,\don} = c$. This means that the ratio between the two must be at least $\frac{1-(1-\frac{c}{n})^n}{c}$.
\begin{claim}
For any $n \in \mathbb{N}$ and $c \in [0,1]$, the function $\frac{1-(1-\frac{c}{n})^n}{c}$ is greater than $1-\frac{1}{e}$.
\end{claim}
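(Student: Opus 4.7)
The plan is to bound the expression from below by its limit as $n \to \infty$ and then minimize that limit over $c \in [0,1]$. First, I would invoke the standard inequality $1 - x \leq e^{-x}$ (valid for all real $x$), applied with $x = c/n$ and raised to the $n$-th power, to obtain
\[
\Bigl(1 - \tfrac{c}{n}\Bigr)^n \leq e^{-c} \qquad \text{for all } c \in [0,1],\ n \in \mathbb{N}.
\]
Consequently, $1 - (1 - c/n)^n \geq 1 - e^{-c}$, and dividing through by $c$ (for $c > 0$) yields $\frac{1 - (1 - c/n)^n}{c} \geq \frac{1 - e^{-c}}{c}$. This reduces the claim to showing that $g(c) := \frac{1 - e^{-c}}{c} \geq 1 - \frac{1}{e}$ for all $c \in (0, 1]$.

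Next, I would argue that $g$ is non-increasing on $(0,1]$, so its minimum on this interval is attained at $c = 1$, where $g(1) = 1 - 1/e$. The cleanest way is to observe that $g(c) = \frac{1}{c}\int_0^c e^{-t}\,dt$ is the average of the strictly decreasing function $e^{-t}$ over $[0,c]$; as $c$ grows, the averaging interval expands to include smaller values, so the average can only decrease. (Alternatively, one can compute $g'(c) = \frac{(c+1)e^{-c} - 1}{c^2}$ and note that $(c+1)e^{-c} \leq 1$ is equivalent to the elementary inequality $e^c \geq 1 + c$.)

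Finally, combining the two steps, for every $c \in (0,1]$ and $n \in \mathbb{N}$,
\[
\frac{1 - (1 - c/n)^n}{c} \;\geq\; \frac{1 - e^{-c}}{c} \;\geq\; 1 - \frac{1}{e}.
\]
The boundary case $c = 0$ is handled by continuity: $\lim_{c \to 0^+} \frac{1 - (1 - c/n)^n}{c} = 1 > 1 - 1/e$ via L'Hopital's rule (or a Taylor expansion in $c$), so the inequality extends to all of $[0,1]$. I expect no significant obstacle here; the only subtlety is that at $c = 1$ and in the limit $n \to \infty$ the bound becomes tight, so ``greater than'' should be read as strict for any finite $n$ but weak in the limiting sense, and one must treat $c = 0$ separately via the limiting value.
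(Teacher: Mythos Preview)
Your proof is correct and follows essentially the same two-step structure as the paper: first bound the expression below by its $n\to\infty$ limit $(1-e^{-c})/c$, then show this limit is decreasing in $c$ and hence minimized at $c=1$. The only cosmetic difference is in the first step, where the paper proves that $n\log(1-c/n)$ is increasing in $n$ via a derivative computation, whereas you bypass monotonicity and go straight to the limiting bound using the elementary inequality $1-x\le e^{-x}$; your route is a touch more direct but otherwise the arguments coincide.
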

\begin{proof}{Proof:}
We first show that $n \log(1-\frac{c}{n})$ is increasing in $n$, which implies that the numerator is decreasing in $n$. The derivative of that expression with respect to $n$ is given by $\log(1-\frac{c}{n}) + \frac{c}{n-c} \geq \frac{-c}{n-c} + \frac{c}{n-c} = 0$, using the inequality $\log(1-x) \geq \frac{-x}{1-x}$. This means that the function is decreasing in $n$, regardless of $c$.

Taking the limit as $n$ gets large, the ratio can be written as $\frac{1-e^{-c}}{c}$. The derivative of this expression with respect to $c$ is given by $\frac{(1+c)e^{-c} - 1}{c^2} = \frac{e^{-c}(1+c - e^c)}{c^2} \leq 0$. Thus, this expression is decreasing in $c$. It is minimized at $c=1$, where it attains a value of $1-\frac{1}{e}$.
\Halmos
\end{proof}
Proving this claim establishes that for any $\mathbf{x} \in \set{P}$, any task of type $\don \in [\D]$, and any time $t \in [T]$, $1-\prod_{v=1}^{V}(1-x_{v,\don,t}p_{v,\don}) \geq (1-\frac{1}{e}) \min\{\sum_{v =1}^{V}  x_{v,\don,t}p_{v,\don}, 1 \}$. If we apply this inequality to $\mathbf{x^*_{LP}}$ and take a weighted sum over all task types and time periods, this completes the proof of the proposition, e.g. $f(\mathbf{x^*_{LP}}) \geq (1-\frac{1}{e})\mathbf{LP}$.

 \section{Proofs for Section \ref{sec:hardness}}
 
 \subsection{Proof of Lemma \ref{lem:exprophet}}
 \label{proof:exprophet}
 In instance $\set{I}_1$, a feasible solution to (LP) is $\hat{x}_{1, 2, 2} = 1$ and $\hat{x}_{1, 1, 1} = 1-\epsilon$. Clearly, this solution meets constraint \eqref{eq:lpcon1} as well as constraint \eqref{eq:lpcon2} at time $1$. To verify that it meets constraint \eqref{eq:lpcon2} at time $2$, note that $(1-\epsilon)(1-q)+\frac{\epsilon}{1-q} = 1-(1-\frac{\epsilon(2-q)}{1-q})q$, which is at most $1$ as long as $\epsilon \leq \frac{1-q}{2-q}$. This solution achieves $(1-\epsilon)\epsilon + \frac{\epsilon}{1-q} = \epsilon(\frac{2-q-(1-q)\epsilon}{1-q})$ completed tasks in expectation.

Since it is clearly optimal to notify $v$ in period $2$, an online policy only has one choice to make: whether or not to notify $v$ in period $1$. If notified in period $1$, $v$ will complete $\epsilon + q\frac{\epsilon}{1-q} = \frac{\epsilon}{1-q}$ tasks in expectation. Otherwise, $v$ will complete  $\frac{\epsilon}{1-q}$ tasks in expectation. Thus, no online policy can achieve a value greater than $\frac{\epsilon}{1-q}$. This represents a competitive ratio of no more than $\frac{1}{2-q-(1-q)\epsilon}$, which approaches $\frac{1}{2-q}$ as $\epsilon$ gets small.
 
 \subsection{Proof of Lemma \ref{lem:extightsmallq}}
 \label{proof:extightsmallq}
 We prove Lemma \ref{lem:extightsmallq} in three steps. First, we show that in instance $\set{I}_2$, $\mathbf{LP} \geq n$. Then, we establish that always notifying every volunteer is the best online policy. Finally, we assess the performance of this policy relative to $\mathbf{LP}$. For ease of reference, we repeat the definition of instance $\set{I}_2$ below:
 \medskip
 
 {\bf Instance $\set{I}_2$:} Suppose $V = \frac{1}{q} = n$, $\D = 1$, $T=n^2+1$, and $g(\cdot)$ is the geometric distribution with parameter $q$, e.g. $g(\tau) = q(1-q)^{\tau-1}$. The arrival probabilities are given by $\lambda_{1, 1} = 1$ and $\lambda_{1, t} = q$ for $t \in [T] \setminus [1]$. The volunteers are homogeneous with $p_{v, 1} = q$ for all $v \in [V]$.

\begin{claim}
In instance $\set{I}_2$, $\mathbf{LP} \geq n$
\end{claim}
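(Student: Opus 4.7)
The plan is to construct an explicit feasible solution to (LP) that achieves an objective value of at least $n$, which immediately gives the desired lower bound on $\mathbf{LP}_{\set{I}_2}$. The natural candidate, given the symmetry of the instance, is to take $\hat{x}_{v,1,t} = 1$ for every volunteer $v \in [V]$ and every time $t \in [T]$, i.e., always notify every volunteer with probability one whenever a task can arrive.

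First I would verify feasibility. Constraint \eqref{eq:lpcon1} holds trivially. For constraint \eqref{eq:lpcon2}, note that the geometric distribution with parameter $q$ satisfies $1 - G(t-\tau) = (1-q)^{t-\tau}$. Fixing any volunteer $v$ and any time $t$, the relevant sum becomes
\begin{align*}
\sum_{\tau=1}^{t} \lambda_{1,\tau} (1-q)^{t-\tau}
= (1-q)^{t-1} + q \sum_{\tau=2}^{t} (1-q)^{t-\tau}
= (1-q)^{t-1} + q \cdot \frac{1 - (1-q)^{t-1}}{q}
= 1,
\end{align*}
where we used $\lambda_{1,1} = 1$ and $\lambda_{1,\tau} = q$ for $\tau \geq 2$, followed by the standard geometric sum. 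Thus constraint \eqref{eq:lpcon2} is met with equality at every time $t$, establishing feasibility.

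Next I would compute the objective value. Because $V = n = 1/q$ and $p_{v,1} = q$ for all $v$, we have $\sum_{v=1}^{V} \hat{x}_{v,1,t} p_{v,1} = n \cdot q = 1$ for every $t$, so the piecewise-linear inner term $\min\{\sum_v \hat{x}_{v,1,t} p_{v,1}, 1\}$ evaluates to $1$ at every $t$. The objective therefore simplifies to $\sum_{t=1}^{T} \lambda_{1,t} = 1 + q(T-1) = 1 + q \cdot n^2 = 1 + n$, using $T = n^2 + 1$ and $q = 1/n$. This yields $\mathbf{LP}_{\set{I}_2} \geq 1 + n \geq n$, completing the proof of the claim.

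The only step that requires any real care is the feasibility check, and even there the geometric inter-activity time collapses the sum cleanly; aside from that bookkeeping, the claim is essentially an arithmetic verification once the right candidate solution is identified.
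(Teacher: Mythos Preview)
Your proposal is correct and follows essentially the same approach as the paper: both use the candidate solution $\hat{x}_{v,1,t}=1$ for all $v,t$, verify constraint \eqref{eq:lpcon2} via the same geometric-sum collapse to $1$, and evaluate the objective as $\sum_{t}\lambda_{1,t}=1+n\geq n$.
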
 
\begin{proof}{Proof:}
To prove this claim, we first show that solution $\hat{x}_{v, 1, t} = 1$ for all $v \in [n]$ and $t \in [T]$ is feasible. Clearly, it satisfies constraint \eqref{eq:lpcon1}. Now consider constraint \eqref{eq:lpcon2} for an arbitrary $v \in [n]$ and $t \in [T]$:
\begin{align}
\sum_{\tau =1}^{t} \lambda_{1, \tau} \hat{x}_{v,1,\tau} (1-q)^{t-\tau} &= (1-q)^{t-1} + \sum_{\tau = 2}^t q(1-q)^{t-\tau} \label{eq:app5proofl1} \\
&= (1-q)^{t-1} + q \sum_{\tau' = 0}^{t-2} (1-q)^{\tau'} \nonumber \\
&= (1-q)^{t-1} + q \frac{1 - (1-q)^{t-1}}{q} \nonumber \\ &= 1 \nonumber
\end{align}

Line \eqref{eq:app5proofl1} comes from plugging in $\lambda_{1,1} = 1$ and $\lambda_{1,\tau} = q$ for $\tau \in [T]\setminus [1]$. Now that we have established the feasibility of $\mathbf{\hat{x}}$, we can calculate the value of (LP) at that solution, which is given by $$\sum_{t=1}^{T} \lambda_{1,t}  \min \left\{\sum_{v =1}^{V} \hat{x}_{v,1,t} q,1\right\}  = \sum_{t=1}^{T} \lambda_{1,t} = 1 + n \geq n.$$
\Halmos
\end{proof}

Now that we have a lower bound on $\mathbf{LP}$, we turn our attention to placing an upper bound on any online policy. We do so with the following claim.

\begin{claim}
Notifying every active volunteer whenever a task of type $1$ arrives completes at least as many tasks in expectation as any other online policy.
\end{claim}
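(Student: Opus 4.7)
I would prove the claim by backward induction on time, leveraging the symmetry of instance $\set{I}_2$ and the memorylessness of the geometric inter-activity time distribution. First, because all volunteers have identical match probability $p_{v,1}=q$, identical initial state (active), and i.i.d.\ $\mathrm{Geom}(q)$ inter-activity times, the problem is invariant under permutations of volunteers. Consequently, there exists an optimal policy whose action depends only on the \emph{number} of currently active volunteers (not on their identities); I would restrict attention to such symmetric policies and analyze them via a one-dimensional dynamic program.

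Let $W_t(k)$ denote the optimal expected number of future completions under symmetric policies given $k$ active volunteers at the start of period $t$. Using memorylessness (which makes each inactive volunteer reactivate independently with probability $q$ each period, regardless of how long she has been inactive), the Bellman equation takes the form
\begin{align*}
W_t(k) = (1 - \lambda_{1,t})\,\mathbb{E}[W_{t+1}(k + B_0)] + \lambda_{1,t} \max_{j \in \{0,\ldots,k\}}\Bigl\{1 - (1-q)^j + \mathbb{E}[W_{t+1}(k - j + B_j)]\Bigr\},
\end{align*}
where $B_j \sim \mathrm{Bin}(V - k + j, q)$ counts reactivations among the $V-k+j$ volunteers who are inactive at the end of period $t$ after the decision. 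The goal is to show by backward induction on $t$ that the inner maximum is attained at $j=k$ for every $k$.

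For the inductive step, I would compare notifying $j$ versus $j+1$ active volunteers. The marginal immediate gain is $q(1-q)^j \geq 0$. For the marginal future value, I would couple $B_{j+1} = B_j + \xi$ where $\xi \sim \mathrm{Bern}(q)$ is independent of $B_j$; this reduces the future-value difference to $-(1-q)\,\mathbb{E}[W_{t+1}(k-j+B_j) - W_{t+1}(k-j-1+B_j)]$. Thus the comparison between $j$ and $j+1$ boils down to upper-bounding the marginal value $\Delta_{t+1}(m) := W_{t+1}(m) - W_{t+1}(m-1)$ of one extra active volunteer at time $t+1$. I would establish this bound via a concurrent induction proving a diminishing-returns property of $\Delta_t(\cdot)$; a natural candidate, $\Delta_t(m) \leq q(1-q)^{m-1}/(1-q)$, matches the structure of the immediate marginal reward so that, combined with the $(1-q)$ multiplier, the marginal gain from notifying the additional volunteer is non-negative.

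The main obstacle will be carrying this concurrent induction on the diminishing-returns property of $W_t$. The argument crucially relies on (i) memorylessness, so that the elapsed inactivity time of a volunteer does not affect future dynamics; (ii) the single task type and constant per-volunteer match probability $q$, aligning the per-period marginal reward $q(1-q)^{m-1}$ with the recovery probability $q$; and (iii) the constant arrival rate $\lambda_{1,t}=q$ for $t \geq 2$, which yields a clean recursion. Once the structural bound on $\Delta_t(\cdot)$ is in place, the comparison between $j$ and $j+1$ gives a non-negative marginal gain at every step, completing the induction and proving that always notifying every active volunteer maximizes the expected number of completed tasks.
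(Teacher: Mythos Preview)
Your setup is sound: the symmetry reduction, the Bellman equation, the coupling $B_{j+1}=B_j+\xi$, and the reduction of the $j$ vs.\ $j{+}1$ comparison to
\[
q(1-q)^j \;\ge\; (1-q)\,\mathbb{E}\bigl[\Delta_{t+1}(k-j+B_j)\bigr]
\]
are all correct. The gap is exactly where you flag it: you do not establish the concurrent bound on $\Delta_{t+1}(\cdot)$, and the ``natural candidate'' $\Delta_t(m)\le q(1-q)^{m-1}/(1-q)$ is too weak to close the induction. One can check that this bound does propagate backward in $t$ (so the concurrent hypothesis is self-consistent), but plugging it into the right-hand side above yields the requirement $(1-q)^{j+1-V}\ge(1+q)^{V-k+j}$; at $j=k-1$, $k=V=n=1/q$ this reads $1\ge(1+q)^{n-1}\approx e$, which fails. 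So the candidate bound cannot deliver the monotonicity in $j$ that you need, and the argument stalls at the step you yourself identify as the main obstacle.

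The paper sidesteps this difficulty with a stronger use of memorylessness that you do not exploit. Once the inductive hypothesis (notify all active at every arrival after time $k$) is in place, look at the \emph{next} arrival time $\tau>k$: at $\tau$ every active volunteer is notified, and by memorylessness of the geometric inter-activity time, every volunteer---whether just notified at $\tau$ or still inactive from before---has the same residual distribution going into $\tau{+}1$. Hence the value-to-go from $\tau{+}1$ onward, call it $\tilde J_{\tau+1}$, is \emph{independent of $\alpha_2$}. The only quantities that depend on the decision at time $k$ are the immediate reward $1-(1-q)^{\alpha_2}$ and the completion probability of the single task at $\tau$. This collapses the comparison to an explicit one-sum expression for $\Delta_{k,\alpha_1}(\alpha_2)$, which the paper then bounds and (ultimately) verifies numerically to be nonnegative. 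Your period-by-period recursion forces you to track the marginal value of an extra active volunteer through all future periods; jumping directly to the next arrival and invoking memorylessness eliminates that burden.
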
 
\begin{proof}{Proof:}
We first note that the best online policy cannot do better in expectation than the best online policy that also knows the status of each volunteer $v \in [V]$ at each time $t \in [T]$ because designing a policy without using that additional information is always an option. Thus, it is sufficient to show that notifying every active volunteer whenever a task of type $1$ arrives achieves a weakly greater expected value than any online policy that knows each volunteer's status. 

We proceed via total backward induction, with a base case at time $T$. Suppose an arrival occurs at time $T$. If there are $\aone$ active volunteers, notifying $\atwo$ of them achieves a value-to-go of $1-(1-q)^{\atwo}$. This is increasing in $\atwo$, which means that the optimal online policy is to contact all active volunteers.

Let $\Jhat_{\tau+1}$ represent the value-to-go when an arrival occurred in period $\tau$ given an online policy that notifies all active volunteers at $\tau$ and all future arrivals. Note that we can only make this representation because the inter-activity times are geometrically distributed, which implies that volunteers' transitions from inactive to active are memoryless. Thus the expected payoff is the same regardless of the choices made before period $\tau$. By convention, we set $\Jhat_{T+1} = 0$.

Now we make the inductive hypothesis that contacting all active volunteers whenever a task of type $1$ arrives is the best online policy for $t \in [T]\setminus [k]$. We will show that if an arrival occurs at time $k$, an optimal online policy is to contact all active volunteers. The expected payoff of contacting $\atwo$ volunteers when $\aone$ are active is given by
\begin{align}
   \h_{k, \aone}(\atwo) =& \P{\text{task completed at time $k$}} + \E{\text{tasks completed from $k+1$ to $T$}} \\
   =& \P{\text{task completed at time $k$}} + \nonumber \\ &\sum_{\tau = k+1}^T \P{\text{next arrival at $\tau$}}\E{\text{tasks completed from $k+1$ to $T$}|\text{next arrival at $\tau$}} \label{eq:hk1} \\
   =&1 - (1-q)^{\atwo} + \label{eq:hk2} \\
   &\sum_{\tau = k+1}^T q(1-q)^{\tau - k - 1} \left(1 - (1-q)^{\aone-\atwo}(1 - q (1 - (1-q)^{\tau - k}))^{n - \aone +\atwo} \right) + \label{eq:hk3} \\
   &\sum_{\tau = k+1}^T q(1-q)^{\tau - k - 1} \Jhat_{\tau+1} \label{eq:hk4}
\end{align}
In line \eqref{eq:hk1}, we use the law of total probability. Line \eqref{eq:hk2} represents the probability of completing the task at time $k$. Each term in the summation in line \eqref{eq:hk3} represents the probability that the next task arrives at $\tau$ and that task gets completed. To compute this probability, first note that we know that at time $\tau$ there will be $\aone-\atwo$ volunteers who are definitely active. The remaining volunteers will be independently active with probability $1 - (1-q)^{\tau - k}$. Thus, each of these remaining volunteers will respond to a notification with probability $q(1 - (1-q)^{\tau - k})$. Since the inductive hypothesis assumes that the online policy will contact every active volunteer at $\tau > k$, the probability of any volunteer completing the next task (conditional on an arrival at $\tau$) is given by $1 - (1-q)^{\aone-\atwo}(1 - q (1 - (1-q)^{\tau - k}))^{n - \aone +\atwo}$. In line \eqref{eq:hk4}, we add the remaining expected number of completed tasks from $\tau+1$ to $T$ after an arrival in period $\tau$, which does not depend on the choice of $\atwo$ due to the memorylessness of the transitions from inactive to active. 

We now define $\Delta_{k,\aone}(\atwo) = \h_{k,\aone}(\atwo+1) - \h_{k,\aone}(\atwo)$ for $0 \leq \atwo \leq \aone-1$. This is the incremental benefit of notifying one additional active volunteer. We have
\begin{align}
    \Delta_{k,\aone}(\atwo) =& (1-q)^{\atwo} - (1-q)^{\atwo+1} + \sum_{\tau = k+1}^T q(1-q)^{\tau - k - 1} \times \nonumber \\ &[(1-q)^{\aone-\atwo}(1 - q (1 - (1-q)^{\tau - k}))^{n - \aone +\atwo } \nonumber \\&- (1-q)^{\aone-\atwo-1}(1 - q (1 - (1-q)^{\tau - k}))^{n - \aone +\atwo+1} ] \nonumber \\
    =& q(1-q)^{\atwo} + \sum_{\tau = k+1}^T q(1-q)^{\tau - k - 1}(1-\frac{1 - q (1 - (1-q)^{\tau - k})}{1-q}) \times \nonumber \\
    &[(1-q)^{\aone-\atwo}(1 - q (1 - (1-q)^{\tau - k}))^{n - \aone +\atwo }] \label{longeq:2} \\
        =& q(1-q)^{\atwo} - \sum_{\tau = k+1}^T q(1-q)^{\tau - k - 1}(q (1-q)^{\tau - k-1}) \times \nonumber \\
    &[(1-q)^{\aone-\atwo}(1 - q (1 - (1-q)^{\tau - k}))^{n - \aone +\atwo }] \label{longeq:3} \\
    \geq& q(1-q)^{\aone-1} - \sum_{\tau = k+1}^T q^2(1-q)^{2(\tau - k - 1)}[(1-q)(1 - q (1 - (1-q)^{\tau - k}))^{n-1}] \label{longeq:4} \\
    \geq& q(1-q)^{n-1} - \sum_{\tau' = 1}^\infty q^2(1-q)^{2\tau' - 1}(1 - q (1 - (1-q)^{\tau'})^{n-1}\label{longeq:5} \\
    =& q(1-q)^{n-1} \left( 1 - \sum_{\tau' = 1}^\infty q(1-q)^{2\tau' - 1}(1 + q(1-q)^{\tau'-1})^{n-1} \right)\label{longeq:6} \\
     \geq& 0 \nonumber
    \end{align}

In line \eqref{longeq:2} we factor like terms. In line \eqref{longeq:3} we simplify the fraction. In line \eqref{longeq:4} we lower bound the expression by replacing $\atwo$ with $\aone-1$, its maximum value. To see that this is a valid lower bound, note that this decreases the first term and increases each term in the summation, which decreases the overall expression. In line \eqref{longeq:5}, we simplify the bounds of the summation and provide a lower bound (an upper bound on a negative term) by summing all the way to infinity and by setting $\aone = n$, which decreases the first term and otherwise has no impact. We then factor out terms and simplify to get \eqref{longeq:6}. Numerically, we can verify that this term is weakly positive. 

Since $\Delta_{k,\aone}(\atwo) \geq 0$ for all $\atwo \leq \aone-1 \leq n-1$, we have proved the inductive hypothesis that notifying all active volunteers in period $k$ is optimal. This completes the proof of the claim.
\Halmos
\end{proof}

We now provide an upper bound on the value of this optimal online policy.

\begin{claim}The value of notifying every active volunteer whenever a task of type $\don$ arrives is at most $q+\left(1 - \frac{q(1-q)}{\log(\frac{1}{1-q})(1+q)}(1-e^{-1}) \right)$. 
\end{claim}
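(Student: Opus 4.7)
The plan is to compute the expected value of the notify-all-active policy via a memorylessness argument, bound a resulting sum by an integral, and then verify the reduced inequality with two elementary sub-bounds.

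By memorylessness of $\mathrm{Geom}(q)$, conditional on the arrival sequence $\tau_1=1<\tau_2<\cdots$, each volunteer is active at $\tau_k$ independently with probability $1-(1-q)^{\Delta_k}$, where $\Delta_k := \tau_k-\tau_{k-1}$; this holds regardless of her state at $\tau_{k-1}$, since her remaining inactive time is geometric either way. Hence the number of active volunteers at $\tau_k$ follows $\mathrm{Bin}(n, 1-(1-q)^{\Delta_k})$ and the conditional success probability at $\tau_k$ equals $h(\Delta_k) := 1 - (1-q+q(1-q)^{\Delta_k})^n$, which is increasing in $\Delta_k$. For each $t\geq 2$, the distribution of the time since the previous arrival, given an arrival at $t$, is a truncation of $\mathrm{Geom}(q)$ and hence stochastically smaller, so the expected success probability at $t\geq 2$ is at most $\rho := \sum_{j\geq 1} q(1-q)^{j-1}h(j) = 1-\sigma$, where by the binomial theorem and summing the resulting geometric series,
\[
\sigma = (1-q)^n \sum_{m=0}^\infty q(1-q)^m \bigl(1+q(1-q)^m\bigr)^n.
\]
Combining with the $t=1$ term and using $q(T-1)=n$ gives $\mathbf{POL}\leq (1-(1-q)^n)+n\rho$; dividing by $\mathbf{LP}_{\set{I}_2}\geq n$ (from the first claim of this proof) and using $(1-(1-q)^n)/n\leq q$ yields $\mathbf{POL}/\mathbf{LP}_{\set{I}_2}\leq q+\rho = q+1-\sigma$.

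Since the summand in $\sigma$ is decreasing in $m$, I would bound $\sigma$ from below by the corresponding integral: $\sigma\geq (1-q)^n\int_0^\infty q(1-q)^m(1+q(1-q)^m)^n\,dm$. The substitution $y=q(1-q)^m$ evaluates the integral to $\bigl((1+q)^{n+1}-1\bigr)/\bigl((n+1)\log(1/(1-q))\bigr)$, so using $(n+1)q=1+q$,
\[
\sigma \geq \frac{q(1-q)^n\bigl[(1+q)^{n+1}-1\bigr]}{(1+q)\log(1/(1-q))}.
\]
Substituting this into the ratio bound, the claim reduces to showing $(1-q)^{n-1}\bigl[(1+q)^{n+1}-1\bigr] \geq 1-e^{-1}$ for $q=1/n$, $n\in\mathbb{N}$.

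The main obstacle is this final inequality, which I would prove by combining two sub-bounds: (i) $(1-q)^{n-1}(1+q)^{n+1}\geq 1+q$, and (ii) $(1-q)^{n-1}\leq q+e^{-1}$; subtracting (ii) from (i) yields the desired bound $(1+q)-(q+e^{-1})=1-e^{-1}$. For (i), at $q=1/n$ the left side equals $(1-1/n^2)^n\cdot(n+1)/(n-1)$, so (i) reduces to $(1-1/n^2)^n\geq 1-1/n$, which is Bernoulli's inequality applied with $x=1/n^2$ and exponent $n$. For (ii), using that $(1-q)^{1/q}\leq e^{-1}$ (equivalent to $-\log(1-q)\geq q$), one obtains $(1-q)^{n-1}=(1-q)^{1/q}/(1-q)\leq e^{-1}/(1-q)$, and $e^{-1}/(1-q)\leq q+e^{-1}$ reduces to $q(1-q-e^{-1})\geq 0$, which holds for $q\leq 1-e^{-1}$ (covering all $n\geq 2$; the case $n=1$ is trivial).
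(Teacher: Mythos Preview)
Your proof is correct and follows the same overall architecture as the paper: bound the expected success at each arrival after the first by the single constant $\rho=\Z$, lower-bound the complementary sum $\sigma$ by the corresponding integral, and then verify the residual elementary inequality $(1-q)^{n-1}\bigl[(1+q)^{n+1}-1\bigr]\ge 1-e^{-1}$.

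The differences are in execution, and they are worth noting. For the per-arrival bound, the paper sets up a backward induction on the value-to-go $\Jhat_\tau$ and shows $\Jhat_\tau\le q(T+1-\tau)\Z$; you instead observe directly that, by memorylessness of the geometric, every volunteer's residual inactive time right after any arrival is a fresh $\mathrm{Geom}(q)$ regardless of whether she was active or inactive there, so the success probability at the $k$-th arrival depends only on the gap $\Delta_k$, which (conditional on an arrival at $t$) is stochastically dominated by an unrestricted $\mathrm{Geom}(q)$. This yields the same bound with less bookkeeping. For the final inequality, the paper lower-bounds the product by using the monotonicity of $\phi_1(q)=(1-q)^{1/q-1}$ and $\phi_2(q)=(1+q)^{1/q+1}-1$ together with their limits $e^{-1}$ and $e-1$; you instead split as $(1-q)^{n-1}(1+q)^{n+1}-(1-q)^{n-1}$ and handle the two pieces with Bernoulli's inequality and $(1-q)^{1/q}\le e^{-1}$, the latter requiring the harmless side condition $q\le 1-e^{-1}$. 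Both routes are elementary; yours avoids appealing to monotonicity of $\phi_1,\phi_2$ at the cost of the small case split at $n=1$.
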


\begin{proof}{Proof:}
If we notify every volunteer at every arrival, then in the first period, we achieve a payoff of $1-(1-q)^n \leq 1$. Now suppose every volunteer was most recently notified at time $\tau-1$. Recall that the expected reward for the remainder of the time horizon is given by $\Jhat_\tau$. By definition, the value of this policy is given by $1-(1-q)^n +\Jhat_2 \leq 1 +\Jhat_2$. 

Starting with $\Jhat_{T+1} = 0$, we can recursively compute $\Jhat_{\tau}$ with the following update equation:
\begin{align}
\Jhat_{\tau} &= \sum_{t=\tau}^{T} q(1-q)^{t-\tau}\left( 1-(1-q(1-(1-q)^{t-\tau+1}))^n \right) + \sum_{t=\tau}^{T} q(1-q)^{t-\tau} \Jhat_{t+1} \label{eq:app5l2}
\end{align}
In equation \eqref{eq:app5l2}, $q(1-q)^{t-\tau}$ represents the probability that the next arrival of a task of type 1 occurs at time $t$. The probability that such a task gets completed is given by $1-(1-q(1-(1-q)^{t-\tau+1}))^n$. Thus, the first sum is the probability that the next arriving task gets completed. The second sum represents the expected number of tasks completed after that arrival.

We will prove via total backward induction that $\Jhat_\tau \leq q(T+1-\tau)\Z$, where $\Z$ represents the expected probability of completing the next task unconditional on when it arrives, i.e. $$\Z:=\sum_{t=1}^\infty q(1-q)^{t-1}(1-(1-q(1-(1-q)^t))^n).$$
In words, we are upper-bounding the value-to-go after an arrival with the expected number of remaining arrivals $q(T+1-\tau)$ times the (unconditional) expected probability of completing the next task.

Clearly, this holds with equality for $\tau = T+1$. We now assume this is true for $\tau \geq k+1$, and we will try to show that $\Jhat_k \leq q(T+1-k)\Z$. First, in step (i), we will be place a bound on the first summation in Line \eqref{eq:app5l2}. Then, in step (ii), we will place a bound on the second summation. 

\medskip

\noindent \textbf{Step (i):} We begin by noting that the probability of completing a task is increasing in the amount of time since the previous arrival. Mathematically, the function $\rho(t) :=$ $1-(1-q(1-(1-q)^{t-k+1}))^n$ is increasing in $t$. Therefore $\max_{t \in [T] \setminus [k-1]}\{\rho(t)\} < \min_{t \in \mathbb{N} \setminus [T]}\{\rho(t)\}$, i.e., every member of the former set is smaller than every member of the latter. Consequently, any convex combination of the former set is smaller than any  convex combination of the latter. Thus we have:
\begin{align}
    \frac{\sum_{t=k}^{T} q(1-q)^{t-k}\left( 1-(1-q(1-(1-q)^{t-k+1}))^n\right)}{\sum_{t=k}^{T} q(1-q)^{t-k}} \leq \frac{\sum_{t=T+1}^{\infty} q(1-q)^{t-k}\left( 1-(1-q(1-(1-q)^{t-k+1}))^n\right)}{\sum_{t=T+1}^{\infty} q(1-q)^{t-k}}
\nonumber    \end{align}
Now we use the above inequality in combination with the algebraic fact that if $\frac{a}{c} \leq \frac{b}{d}$, then $\frac{a}{c} \leq \frac{a+b}{c+d}$ to yield
\begin{align}\frac{\sum_{t=k}^{T} q(1-q)^{t-k}\left( 1-(1-q(1-(1-q)^{t-k+1}))^n\right)}{\sum_{t=k}^{T} q(1-q)^{t-k}} &\leq \frac{\sum_{t=k}^{\infty} q(1-q)^{t-k}\left( 1-(1-q(1-(1-q)^{t-k+1}))^n\right)}{\sum_{t=k}^{\infty} q(1-q)^{t-k}} \nonumber \\
&= \frac{\Z}{\sum_{t=k}^{\infty} q(1-q)^{t-k}} \nonumber \\&= \Z \nonumber \end{align}
Consequently,
\begin{align}
    \sum_{t=k}^{T} q(1-q)^{t-k}\left( 1-(1-q(1-(1-q)^{t-k+1}))^n\right) &\leq \Z\sum_{t=k}^{T}q(1-q)^{t-k} \nonumber \\
    &= \Z(1-(1-q)^{T-k+1}) \label{ex5bound1}
\end{align}
By putting a bound on the first summation in line \eqref{eq:app5l2}, we have completed step (i).

\medskip

\noindent \textbf{Step (ii):} Having upper-bounded the first summation, we now bound the second summation in line \eqref{eq:app5l2}, i.e., $\sum_{t=k}^{T} q(1-q)^{t-k} \Jhat_{t+1}$. Using the inductive hypothesis, we have
\begin{align}
    \sum_{t=k}^{T} q(1-q)^{t-k}\Jhat_{t+1} &\leq q\sum_{t=k}^{T} q(1-q)^{t-k}(T-t)\Z \nonumber  \\
    &= q\sum_{t'=0}^{T-k} q(1-q)^{t'}(T-k-t')\Z \nonumber  \\
    &= q\left(\sum_{t'=0}^{T-k} q(1-q)^{t'}(T-k)\Z -\sum_{t'=0}^{T-k} q(1-q)^{t'}t'\Z\right) \nonumber \\
    &= q(T-k)(1-(1-q)^{T-k+1})\Z - (1-q)(1-(1-q)^{T-k}-(T-k)q(1-q)^{T-k})\Z \label{ex5bound2helper} \\
    &=q(T-k+1)\Z - (1 - (1-q)^{T-k+1})\Z \label{ex5bound2}
\end{align}

The equality in line \eqref{ex5bound2helper} follows from well-known results in computing the summation of power series. Combining \eqref{ex5bound1} and \eqref{ex5bound2} gives us, as desired,
$\Jhat_k \leq q(T-k+1)\Z$, which completes the proof by induction. All that remains is to bound $\Z$, which we do below, starting from its definition:

\begin{align}
    \Z & = \sum_{t=1}^\infty q(1-q)^{t-1}(1-(1-q(1-(1-q)^t))^n) \nonumber \\
    & = \sum_{t=1}^\infty q(1-q)^{t-1}- \sum_{t=1}^\infty q(1-q)^{t-1}(1-q(1-(1-q)^t))^n \label{eq:zbound1} \\&\leq 1 - \int_{t=1}^\infty q(1-q)^{t-1}(1-q(1-(1-q)^t))^n dt \label{eq:zbound2}
    \\&= 1 + \frac{q}{\log(1-q)(1-q)} \int_{u=q}^1 (1-qu)^n du \label{ex5int2}
    \\&= 1 + \frac{q}{\log(1-q)(1-q)} \left(\frac{(1-q^2)^{n+1} - (1-q)^{n+1}}{nq+q}\right) \label{eq:zbound3}
    \\&= 1 + \frac{q(1-q)^n}{\log(1-q)(nq+q)} \left((1+q)^{n+1} - 1\right) \label{eq:zbound4}
    \\ &= 1 - \frac{q(1-q)}{\log(\frac{1}{1-q})(1+q)} (1-q)^{1/q-1}\left((1+q)^{1/q+1} - 1\right) \label{eq:zbound5}
    \\&\leq 1 - \frac{q(1-q)}{\log(\frac{1}{1-q})(1+q)}(1-e^{-1}) \label{ex5expbounds}
\end{align}

In line \eqref{eq:zbound1}, we split the summation into two parts. In line \eqref{eq:zbound2}, we evaluate the first summation and lower-bound the second sum with an integral. Line \eqref{ex5int2} comes from a substitution of $u = 1-(1-q)^t$ and in line \eqref{eq:zbound3} we evaluate the integral. In line \eqref{eq:zbound4}, we factor out $\frac{(1-q)^{n+1}}{nq+q}$. Subsequently, in line \eqref{eq:zbound5}, we make minor algebraic adjustments and make use of the definition $q = 1/n$. The final inequality in line \eqref{ex5expbounds} comes from bounding two auxiliary functions, as described below.

First, we define $\phi_1(q) : = (1-q)^{1/q-1}$. 
This function is increasing  for $q \in (0,1)$, and $\lim_{q \rightarrow 0} \phi_1(q) = e^{-1}$, which means
$(1-q)^{1/q-1} \geq \frac{1}{e}$. 
This enables us to lower bound $\phi_1(q)$ with $e^{-1}$.

Next, we define $\phi_2(q) := (1+q)^{1/q+1} - 1$. This function is also increasing for $q \in (0,1)$, and $\lim_{q \rightarrow 0} \phi_2(q) = e-1$. This enables us to lower bound $\phi_2(q)$ with $e-1$. The inequality in line \eqref{ex5expbounds} results from multiplying the two bounds provided by $\phi_1(q)$ and $\phi_2(q)$.

This implies that $\Jhat_2 \leq q(T-1)(1 - \frac{q(1-q)}{\log(\frac{1}{1-q})(1+q)}(1-e^{-1}))$. Since $q(T-1) = n$ by definition, and since the total expected number of successful tasks is bounded by $1+\Jhat(2)$, we have proven the claim that the value of notifying every active volunteer whenever a task of type $1$ arrives is at most
 $1 + n\left(1 - \frac{q(1-q)}{\log(\frac{1}{1-q})(1+q)}(1-e^{-1}) \right)$. 
 \Halmos
 \end{proof}
 
Putting all three claims together (e.g. taking the upper bound on the expected number of completed tasks by the best online policy and dividing by a lower bound on $\mathbf{LP}$), we have shown that in $\set{I}_2$, no online algorithm can achieve a competitive ratio of more than $1+q - \frac{q(1-q)}{\log(\frac{1}{1-q})(1+q)}(1-e^{-1})$.

\subsection{Proof of Lemma \ref{lem:exq0}}
 \label{proof:q0}
  {\revcolor
 We prove Lemma \ref{lem:exq0} in two steps. First, we show that in instance $\set{I}_3$, $\mathbf{LP}\geq n$. Then, we identify the optimal online policy in a ``less constrained'' setting, which establishes an upper bound of $0.334 \times \bigval$ on the performance of any online policy in instance $\set{I}_3$. For ease of reference, we repeat the definition of the instance below:

 \medskip
 {\revcolor {\bf Instance $\set{I}_3$:} Suppose $V = \bigvalvol$ for sufficiently large $\bigvalvol$, $\D = 1$, $T=\bigval^2$, and the inter-activity time distribution is deterministic with length $\bigval$, e.g. $g(\tau) = \mathbb{I}(\tau = \bigval)$. 
We emphasize that $q = 0$ for such a distribution.
The arrival probabilities are given by $\lambda_{1, t} = \frac{1}{\bigval}$ for all $t \in [T]$. The volunteers are homogeneous with $p_{v, 1} = \frac{1}{\bigvalvol}$ for all $v \in [V]$.} 
 
 \begin{claim}
In instance $\set{I}_3$, $\mathbf{LP} \geq \bigval$
\end{claim}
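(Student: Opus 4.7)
The plan is to lower bound $\mathbf{LP}_{\set{I}_3}$ by constructing an explicit feasible solution whose objective value is at least $n$. Following the template of the analogous claim in the proof of Lemma \ref{lem:extightsmallq}, I would propose the ``fully saturated'' candidate $\hat{x}_{v,1,t} = 1$ for every $v \in [V]$ and every $t \in [T]$, which is natural because all volunteers and all periods are symmetric in this instance.

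The first step is to verify feasibility. Constraint \eqref{eq:lpcon1} is immediate. For constraint \eqref{eq:lpcon2}, the key observation is that with a deterministic inter-activity time of length $n$ we have $1 - G(t-\tau) = \mathbb{I}(t - \tau < n)$. Thus for any fixed $v$ and $t$,
\begin{align*}
\sum_{\tau=1}^{t} \lambda_{1,\tau} \hat{x}_{v,1,\tau} (1-G(t-\tau))
= \sum_{\tau = \max\{1,\, t-n+1\}}^{t} \frac{1}{n}
= \frac{\min\{n,\, t\}}{n} \leq 1,
\end{align*}
so the solution lies in $\set{P}$.

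The second step is simply to evaluate the objective at $\hat{\mathbf{x}}$. Since $\sum_{v=1}^{V} \hat{x}_{v,1,t} p_{v,1} = n \cdot \frac{1}{n} = 1$, the piecewise-linear term equals $1$ in every period, and the objective becomes $\sum_{t=1}^{n^2} \frac{1}{n} \cdot 1 = n$. Because $\hat{\mathbf{x}} \in \set{P}$, this yields $\mathbf{LP}_{\set{I}_3} \geq n$, completing the claim.

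I do not anticipate any real obstacle here: the construction is essentially the deterministic-$q{=}0$ analog of the feasible point used in the proof of Lemma \ref{lem:extightsmallq}, and the clean cancellation happens precisely because the horizon length $T = n^2$, the arrival rate $1/n$, the number of volunteers $n$, and the match probability $1/n$ are tuned so that the per-period expected capacity $\sum_v \hat{x}_{v,1,t} p_{v,1}$ exactly saturates the $\min\{\cdot,1\}$ cap.
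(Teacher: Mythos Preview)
Your proposal is correct and follows essentially the same approach as the paper: you exhibit the all-ones solution $\hat{x}_{v,1,t}=1$, verify feasibility of constraint \eqref{eq:lpcon2} using $1-G(t-\tau)=\mathbb{I}(t-\tau<n)$ so that only the last $n$ terms survive, and then evaluate the objective to obtain exactly $n$. The paper's argument is identical, differing only in minor presentation (it leaves the sum as an inequality rather than evaluating it to $\min\{n,t\}/n$).
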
 
\begin{proof}{Proof:}
To prove this claim, we first show that solution $\hat{x}_{v, 1, t} = 1$ for all $v \in [\bigvalvol]$ and $t \in [T]$ is feasible. Clearly, it satisfies constraint \eqref{eq:lpcon1}. Now consider constraint \eqref{eq:lpcon2} for an arbitrary $v \in [\bigvalvol]$ and $t \in [T]$:
\begin{align}
\sum_{\tau = 1}^{t} \lambda_{1, \tau} \hat{x}_{v,1,\tau}(1-G(t-\tau)) \quad = \sum_{\tau = \max\{1, t-\bigval+1\}}^{t} \lambda_{1, \tau} \hat{x}_{v,1,\tau}  \quad = \sum_{\tau = \max\{1, t-\bigval+1\}}^{t} \frac{1}{\bigval} \hat{x}_{v,1,\tau} \quad \leq \quad 1  \nonumber
\end{align}

Now that we have established the feasibility of $\mathbf{\hat{x}}$, we can calculate the value of (LP) at that solution, which is given by $$\sum_{t=1}^{T} \lambda_{1,t}  \min \left\{\sum_{v =1}^{V} \hat{x}_{v,1,t} \frac{1}{\bigvalvol},1\right\}  = \sum_{t=1}^{T} \lambda_{1,t} = \bigval.$$
\Halmos
\end{proof}

We now proceed to the second step and place an upper bound on the performance of any online policy which only notifies active volunteers.\footnote{\revcolor{Note that with deterministic inter-activity times, a policy knows the state of each volunteer at each time period. Thus, it is without loss of generality to restrict attention to policies that only notify active volunteers.}} To do so, we identify the optimal policy in a hypothetical environment with ``enhanced volunteer activity''.

First, we describe this hypothetical environment. We will use the labels  $\{\tau_1, \tau_2, \dots \}$ to denote a subsequence of time periods such that $\tau_1$ is the first period in which an arrival occurs, and $\tau_i$ is the first period in which an arrival occurs \emph{after} $\tau_{i-1} + \bigval$. At each time $\tau_i + \bigval$, all volunteers that are inactive 
transition to a state which we call \emph{pseudoactive}.
Subsequently, such volunteers will transition from {pseudoactive} to active $\bigval$ periods after they were last notified. 
Said differently, their transition time between inactive to active does not change. However, they may spend some of that time interval in a pseudoactive state.

In this new relaxed environment, both active and pseudoactive volunteers may respond to a notification. Therefore, we can consider policies which notify both active and pseudoactive volunteers. {We remark that any online policy which only notifies active volunteers can be implemented in this relaxed setting by simply choosing to ignore the pseudoactive volunteers.} Consequently, we upper bound the performance of any online policy in the original setting by finding the optimal policy in this relaxed setting.

Observe that in this new setting, all volunteers are available for notification (either active or pseudoactive) at $\tau_{i}$ \emph{regardless} of the history so far. 
{Further, all volunteers will also be available for notification at $\tau_{i} + \bigval$ regardless of the notifications sent between $\tau_i$ and $\tau_i +\bigval$.}
Consequently, {we can ignore both the past and the future when determining} the optimal policy for the interval $\tau_{i} \leq t < \tau_{i} +n$.
{The problem of finding the optimal policy in this $\bigval$-period interval is identical for all $i$.} The setting is equivalent to an $n$-period instance where each volunteer can be notified at most once, a task arrives deterministically in period $1$, and tasks arrive in each subsequent period with probability $\frac{1}{\bigval}$. In the following claim, we establish the optimal policy in such a problem along with its corresponding value. Repeating the solution to this $\bigval$-period problem at each $\tau_i$ is the overall optimal policy in this relaxed setting.\footnote{\revcolor{For completeness, we note that if $\tau_i + \bigval > T$, then the optimal policy between $\tau_i$ and $T$ will be different. This impact is vanishing as $\bigval$ gets large.}}

\begin{claim}
Consider an adjusted version of instance $\set{I}_3$ with only $\bigval$ periods and a deterministic arrival in period $1$. In the limit as $\bigval \rightarrow \infty$, the expected number of tasks completed is at most $2 - \frac{1+2\sqrt{e-1}}{e} < 0.668$, which is achieved by a policy ${\pi}^*$ that notifies a fraction $1-\frac{1}{2}\log(e-1)$ of volunteers in period 1 and notifies the remaining fraction of volunteers upon arrival of a second task.
\end{claim}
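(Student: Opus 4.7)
The plan is to work in the fluid (mean-field) limit $n \to \infty$, in which the adjusted sub-instance becomes a continuous-time notification-control problem on rescaled time $[0,1]$: an arrival occurs deterministically at $0$, subsequent arrivals follow a Poisson process of rate $1$ on $(0,1]$, the volunteer pool is a unit mass, and a controller who notifies fraction $\alpha \in [0,\gamma]$ of volunteers at an arrival (where $\gamma$ denotes the remaining budget) earns reward $1 - (1-1/n)^{\alpha n} \to 1 - e^{-\alpha}$. The achievability half is immediate by direct computation: since the second arrival falls in $(0,1]$ with probability $1-e^{-1}$ and $\pi^*$ notifies $\alpha_1 = 1 - \tfrac{1}{2}\log(e-1)$ at $\tau = 0$ and the remaining $\alpha_2 = \tfrac{1}{2}\log(e-1)$ at the next arrival, its expected reward equals $(1 - e^{-\alpha_1}) + (1-e^{-1})(1 - e^{-\alpha_2})$, which collapses to $2 - (1 + 2\sqrt{e-1})/e$ upon substituting $e^{-\alpha_1} = \sqrt{e-1}/e$ and $e^{-\alpha_2} = 1/\sqrt{e-1}$.

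For the matching upper bound I would set up the Bellman equation for the fluid DP. Let $\Phi(s,\gamma)$ denote the optimal expected remaining reward when $s$ rescaled time units remain, the unspent budget is $\gamma$, and no arrival is currently pending, and let $\Psi(s,\gamma) = \max_{\alpha \in [0,\gamma]}\{(1-e^{-\alpha}) + \Phi(s, \gamma - \alpha)\}$ be the corresponding value at an arrival. Then $\Phi(s,\gamma) = \int_0^s e^{-t} \Psi(s-t,\gamma)\, dt$ with $\Phi(0,\gamma) = 0$, and the value of the best online policy for the adjusted sub-instance equals $\Psi(1,1)$. The structural core is to prove by induction on $s$ that $\Phi(s,\cdot)$ is concave and that a threshold $\gamma^*(s) := -\log(1-e^{-s})$ determines the form of $\Psi$: for $\gamma \le \gamma^*(s)$, the interior first-order condition $e^{\gamma - 2\alpha} = 1 - e^{-s}$ has no feasible solution in $[0,\gamma]$, so the optimum is $\alpha = \gamma$ (spend the entire remaining budget on the next arrival), giving $\Psi(s,\gamma) = 1 - e^{-\gamma}$ and $\Phi(s,\gamma) = (1-e^{-s})(1-e^{-\gamma})$; while for $\gamma^*(s) < \gamma \le 3\gamma^*(s)$, the interior maximizer $\alpha^*(s,\gamma) = (\gamma + \gamma^*(s))/2$ lies in $[0,\gamma]$ and leaves a residual $(\gamma - \gamma^*(s))/2 \le \gamma^*(s)$ (hence in the spend-all regime for the continuation), yielding the closed form $\Psi(s,\gamma) = 2 - e^{-s} - 2\sqrt{1-e^{-s}}\, e^{-\gamma/2}$.

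Plugging $s = \gamma = 1$ (noting $1 \le 3\gamma^*(1) = 3(1 - \log(e-1))$) produces $\Psi(1,1) = 2 - (1 + 2\sqrt{e-1})/e$, with unique interior maximizer $\alpha^*(1,1) = 1 - \tfrac{1}{2}\log(e-1)$ coinciding with the notification prescribed by $\pi^*$ in period $1$. Since $\Psi(1,1)$ is by construction the value of the best online policy in the fluid limit, this simultaneously certifies that $\pi^*$ is optimal and that no online policy can exceed $2 - (1 + 2\sqrt{e-1})/e$.

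The principal technical obstacle lies in justifying the two-piece closed form for $\Psi$ and $\Phi$ globally: concavity of $\Phi(s,\cdot)$ is what licenses reducing the outer maximization $\max_\alpha\{(1-e^{-\alpha}) + \Phi(1,1-\alpha)\}$ to a single first-order condition (otherwise a small initial $\alpha_1$, which leaves a residual above $\gamma^*(1)$ where the policy would split across three or more arrivals, could potentially outperform $\pi^*$). I would close the induction via an envelope argument: if $\Phi(s',\cdot)$ is concave for all $s' < s$, then $\Psi(s-t,\cdot)$ is concave as a pointwise maximum of concave functions of $\gamma$, and integrating against the density $e^{-t}$ preserves concavity. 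Continuity of $\partial \Psi/\partial \gamma$ across the threshold $\gamma = \gamma^*(s)$ then follows from the explicit form of $\gamma^*(s)$, so the two regimes stitch into a single concave function, after which the bound $\mathrm{ALG} \le 2 - (1 + 2\sqrt{e-1})/e < 0.668$ is immediate.
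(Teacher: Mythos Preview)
Your argument is correct and lands on the same optimizer and value, but it takes a substantially different route from the paper. The paper does not build a Bellman equation: it (i) restricts attention to the one-parameter family of two-group policies, optimizes over the split $z$ to obtain $z^* = 1-\tfrac{1}{2}\log(e-1)$ and the value $2-(1+2\sqrt{e-1})/e$, and then (ii) certifies global optimality by a direct marginal-value comparison. Specifically, at the candidate allocation the marginal value of a group-1 volunteer for task~1 equals the marginal value of moving her to task~2 (both $\tfrac{1}{n}\cdot\sqrt{e-1}/e$), and each strictly exceeds the marginal value of reserving her for a third task ($\tfrac{1}{n}(1-2e^{-1})$); an analogous check rules out saving a group-2 volunteer conditional on a second arrival. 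The paper then invokes submodularity of the completion probability to argue that these local inequalities suffice.

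Your DP route is more systematic: you get optimality of $\pi^*$ as a consequence of concavity of $\Phi(1,\cdot)$ together with the closed form on the regime $\gamma \le 3\gamma^*(1)$, rather than from task-by-task marginal checks. The cost is the extra scaffolding (threshold $\gamma^*(s)$, two regimes, continuous induction on $s$). One wording slip: you write that $\Psi(s-t,\cdot)$ is concave ``as a pointwise maximum of concave functions of $\gamma$''; the pointwise maximum of concave functions is not concave in general. The correct statement is that $\Psi(s',\gamma)=\sup_{\alpha+\beta=\gamma}\{(1-e^{-\alpha})+\Phi(s',\beta)\}$ is the \emph{supremal convolution} of two concave functions, which does preserve concavity. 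With that fix, your argument goes through.
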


\begin{proof}{Proof:}
We will first find the optimal policy within a subclass of policies, and then we will show that this is optimal among all policies.

Consider the class of policies which partitions volunteers into two groups such that one group is notified for the first task (at period $1$) and the other group is notified upon the arrival of a second task. (If additional tasks arrive, they are not completed because all volunteers have been notified). Let $z$ denote the fraction of volunteers notified for the first task. This first task is then completed with probability $1-(1-\frac{1}{\bigvalvol})^{z\bigvalvol} \rightarrow 1 - e^{-z}$. A subsequent task will only arrive with probability $1-(1-\frac{1}{\bigval})^{(\bigval-1)} \rightarrow 1-e^{-1}$, at which point it will be completed with probability $1-(1-\frac{1}{\bigvalvol})^{(1-z)\bigvalvol} \rightarrow 1 - e^{-(1-z)}$. Thus, the optimal policy within this class of policies solves
$$\max_{z \in [0,1]} 1- e^{-z} + (1-e^{-1})(1-e^{-(1-z)}).$$
This single-variable optimization problem has a value of $2 - \frac{1+2\sqrt{e-1}}{e}$, which is achieved at $z^* = 1-\frac{1}{2}\log(e-1)$.

Next we show that the above policy is indeed optimal beyond its subclass and within all policies.
Due to the impacts of submodularity (i.e., because there are diminishing returns to notifying more volunteers about the same task), in order to show that this policy is optimal among all policies, it is sufficient to show that it is not preferable to save any additional volunteer for a future arrival. The marginal value of a volunteer in the first group is equal to $\frac{1}{\bigvalvol}$ times the probability that no other volunteer completes the task, or equivalently, $\frac{1}{\bigvalvol}(e^{-z^*}) = \frac{1}{\bigvalvol}\left(\frac{\sqrt{e-1}}{e} \right)$. The marginal value of saving an additional volunteer in group 1 for a (potential) second task is equal to $\frac{1}{\bigvalvol}$ times the probability that a task arrives within $\bigval$ periods times the probability that no other volunteer completes the task, or equivalently, $\frac{1}{\bigvalvol}(1-e^{-1})(e^{-(1-z^*)}) = \frac{1}{\bigvalvol}\left(\frac{\sqrt{e-1}}{e} \right)$. The marginal value of saving a volunteer from group 1 for a (potential) third task is equal to $\frac{1}{\bigvalvol}$ times the probability that at least two tasks arrive within $\bigval$ periods, or equivalently, $\frac{1}{\bigvalvol}(1-2e^{-1}) < \frac{1}{\bigvalvol}\left(\frac{\sqrt{e-1}}{e} \right)$. 
Thus, we conclude that there is no benefit to saving an additional volunteer from group 1 for any future task. 

Following the same idea, we show there is no benefit from saving any of the volunteers from group 2 for a potential subsequent arriving task.
Conditional on a second task arriving, the marginal value of a volunteer in that second group is $\frac{1}{\bigvalvol}(e^{-(1-z^*)}) = \frac{1}{\bigvalvol}\left(\frac{\sqrt{e-1}}{e-1} \right) > \frac{1}{\bigvalvol}0.762 $. The marginal value of saving a volunteer for a (potential) third task is equal to $\frac{1}{\bigvalvol}$ times the probability that a third task arrives. This probability depends on when the second arrival occurs, but it is at most $1-e^{-1} < 0.762$ (because at most $\bigval$ periods remain). 
Thus we can conclude that there is no benefit to saving a volunteer from group 2 for a third task. 

Putting all this together,
we have shown that the policy ${\pi}^*$ is optimal, where ${\pi}^*$ notifies a fraction $z^* = 1-\frac{1}{2}\log(e-1)$ of volunteers at time $1$ and notifies the remaining fraction if and when another task arrives.
\Halmos
\end{proof}

We now compute the total expected number of completed tasks when following ${\pi}^*$ at each $\tau_i$. Between each $\tau_i$ and $\tau_{i+1}$, the policy completes 0.668 tasks in expectation, as established in the prior claim. By definition, no tasks arrive between $\tau_i + \bigval$ and $\tau_{i+1}$ (recall that $\tau_{i+1}$ is the first arrival after $\tau_i + \bigval$). Therefore, the total expected number of tasks completed by $\pi^*$ is $0.668$ times the expected number of $\tau_i$'s which occurr between times 1 and $T = n^2$. The time between $\tau_i$ and $\tau_{i+1}$ is an independently drawn random variable with an expected value of $2\bigval$ and variance $\bigval^2-\bigval$ (more precisely, it is the sum of a deterministic random variable of length $\bigval$ plus a geometric random variable with parameter $\frac{1}{\bigval}$). Therefore, in the limit as $\bigval \rightarrow \infty$, the expected number of $\tau_i$'s converges to $\frac{\bigval}{2}$.

We have shown that repeating ${\pi}^*$, which is an upper bound on the optimal online policy, completes $0.334 \bigval$ tasks in expectation. Further, since $\mathbf{LP} \geq \bigval$, we have established an upper bound of $0.334$ on the competitive ratio 
for instance $\set{I}_3$, thus completing the proof of Lemma \ref{lem:exq0}. \Halmos
}

\section{Examples Comparing Ex Ante Candidate Solutions}
\label{ex:xstarcandidates}
\begin{figure}[t]
\centering
 \includegraphics[trim=120 550 120 72, clip, width=.75\textwidth]{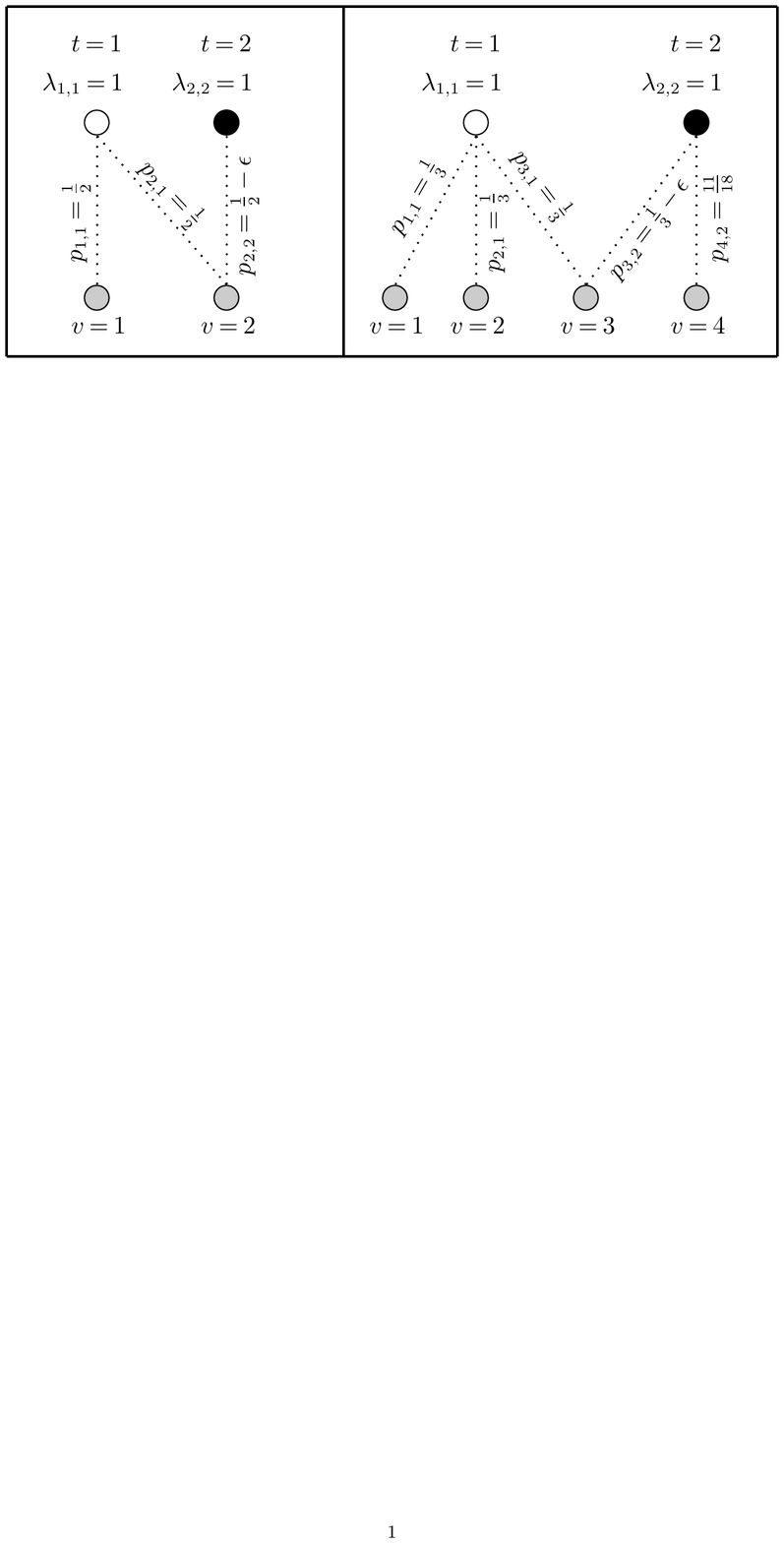}
\caption{\emph{Left:} Visualization of instance $\set{I}_5$. \emph{Right:} Visualization of instance $\set{I}_6$.} \label{fig:exappendix}
\end{figure}
 In the following instances, we first present evidence that $f(\mathbf{x^*_{LP}})$ can be significantly less than $f(\mathbf{x^*})$, where $f(\cdot)$ is defined in \eqref{eq:fdef1}, $\mathbf{x^*_{LP}}$ is the solution to (LP), and $\mathbf{x^*}$ is defined in \eqref{eq:xstar}. We then present evidence that $f(\mathbf{x^*_{LP}})$ can be strictly greater than $f(\mathbf{x^*_{AA}})$ and $f(\mathbf{x^*_{SQ}})$, thus necessitating its inclusion as a candidate.
 \medskip

{\bf Instance $\set{I}_5$:} Suppose $V = 2$, $\D = 2$, $T=2$, and $g(1) = 0$. The arrival probabilities are given by $\lambda_{1, 1} = 1$ and $\lambda_{2, 2} = 1$. The volunteer match probabilities are given by $p_{1, 1} = p_{2,1} = 0.5$ and $p_{2,2} = 0.5-\epsilon$, where $\epsilon << 1$. The left panel of Figure \ref{fig:exappendix} visualizes instance $\set{I}_5$.

\noindent In instance $\set{I}_5$, $f(\mathbf{x^*_{LP}})$ can be significantly less than $f(\mathbf{x^*_{AA}})$ and $f(\mathbf{x^*_{SQ}})$ To see why, note that:
\begin{itemize}
    \item The solution $\mathbf{x^*_{LP}}$ is $x_{1,1,1} = x_{2,1,1} = 1$ and $x_{2, 2, 2} = 0$. Consequently, $f(\mathbf{x^*_{LP}}) = 0.75$.
    \item Suppose the number of steps used when computing $\mathbf{x^*_{AA}}$ is $\FWstep=2$. Then, the solution is $x_{1,1,1} = 1$, $x_{2,1,1} = 0.5$, and $x_{2, 2, 2} = 0.5$, which yields $f(\mathbf{x^*_{AA}}) = 0.875 - 0.5\epsilon$.
    \item The solution $\mathbf{x^*_{SQ}}$ is $x_{1,1,1} = 1$, $x_{2,1,1} = 0$, and $x_{2, 2, 2} = 1$ which attains a value $f(\mathbf{x^*_{SQ}}) = 1 - \epsilon$.
\end{itemize}
\medskip

\noindent With the above solutions,
if $1 >> \epsilon$, $f(\mathbf{x^*_{SQ}}) \approx 1$ and $f(\mathbf{x^*_{AA}}) \approx 0.875$, both of
which represent a significant improvement over $f(\mathbf{x^*_{LP}}) = 0.75$.\footnote{{We remark that if the volunteers' indices were switched, then we would have $f(\mathbf{x^*_{AA}}) > f(\mathbf{x^*_{SQ}})$, which justifies the inclusion of both of these solutions as ex ante candidates.}}

\medskip




{\bf Instance $\set{I}_6$:} Suppose $V = 4$, $\D = 2$, $T=2$, and $g(1) = 0$. The arrival probabilities are given by $\lambda_{1, 1} = 1$ and $\lambda_{2, 2} = 1$. The volunteer match probabilities are given by $p_{1,1} = p_{2,1} = \frac{1}{3}$, $p_{4, 2} = \frac{11}{18}$, $p_{3,1} = \frac{1}{3}$, and $p_{3,2} = \frac{1}{3}- \frac{1}{1000}$. The right panel of Figure \ref{fig:exappendix} visualizes instance $\set{I}_6$.

\noindent In instance $\set{I}_6$, $f(\mathbf{x^*_{LP}})$ is strictly greater than $f(\mathbf{x^*_{AA}})$ and $f(\mathbf{x^*_{SQ}})$. To see this, note that:
\begin{itemize}
    \item The solution $\mathbf{x^*_{LP}}$ is $x_{1,1,1} = 1, x_{2,1,1} = 1, x_{3,1,1} = 1$, and $x_{4,2,2} = 1$, which attains a value $f(\mathbf{x^*_{LP}}) = 1.315$. 
    \item Suppose the number of steps used when computing $\mathbf{x^*_{AA}}$ is $\FWstep=5$. Then, the solution is $x_{1,1,1} = 1, x_{2,1,1} = 1, x_{3,1,1} = 0.6$, $x_{3,2,2} = 0.4$, and $x_{4,2,2} = 1$, which yields $f(\mathbf{x^*_{AA}}) = 1.307$.\footnote{We remark that for any $n > 2$, $f(\mathbf{x^*_{AA}}) < f(\mathbf{x^*_{LP}}).$}
    \item The solution $\mathbf{x^*_{SQ}}$ is $x_{1,1,1} = 1, x_{2,1,1} = 1, x_{3,2,2} = 1$, and $x_{4,2,2} = 1$ which attains a value $f(\mathbf{x^*_{SQ}}) = 1.296$. 
\end{itemize}
\medskip 

\noindent Thus, in instance $\set{I}_6$, $\mathbf{x}^*_{LP} = \text{argmax}_{\mathbf{x} \in \{\mathbf{x^*_{LP}}, \mathbf{x^*_{AA}}, \mathbf{x^*_{SQ}} \}} f(\mathbf{x})$.

\section{Scaled-Down Notification Policy}
\label{app:sdnp}
{\revcolor In this section, we present a second policy for the online volunteer notification problem, which we call our scaled-down notification (SDN) policy. We begin by describing the design of the SDN policy. Like the SN policy, this non-adaptive, randomized policy independently notifies volunteers based on an adjustment of the ex ante solution $\mathbf{x^*}$. We then show that the SDN and the SN policies achieve the same competitive ratio. However, we conclude this section by highlighting that the SDN policy often achieves significantly worse numerical performance the SN policy, hence our focus on the latter.}

In this section, we present our scaled-down notification (SDN) policy which is a non-adaptive randomized policy that independently notifies volunteers according to a  predetermined set of probabilities based on $\mathbf{x^*}$.
The policy relies on the following ideas: (i) 
{\revcolor suppose we can compute the \emph{ex ante} probability that any volunteer $v$ is active at time $t$ when following the SDN policy.}
Let us denote such an ex ante probability by $\beta_{v,t}$. Then if $\don$ arrives at time $t$, we notify $v$ with probability $\cvar x^*_{v,\don,t}/\beta_{v,t}$ {where $\cvar \in [0,\beta_{v,t}/x^*_{v,\don,t}]$.} As a result, she will be
active \emph{and} notified with probability $\cvar x^*_{v,\don,t}$. (ii) If she was the only notified volunteer, then her probability of completing this task would be simply $\cvar x^*_{v,\don,t}p_{v,\don}$.
Even though this is not the case, using the index-based priority scheme and the contribution decoupling idea in Lemma \ref{lem:falg}, we can show her contribution will be proportional to $\cvar x^*_{v,\don,t}p_{v,\don}$. 
(iii) Consequently, we would like to set $\cvar$ as large as possible. 
However, $\cvar$ cannot be larger than $\frac{\beta_{v,t}}{x^*_{v, \don, t}}$ since notification probabilities cannot exceed $1$.
Thus in the design of the policy, we find the largest feasible $\cvar$, which we prove to be $1/(2-q)$ where $q$ is the MDHR of the inter-activity time distribution (see Definition \ref{def:hazard}).

The formal definition of our policy is presented in Algorithm \ref{alg:one}. We now analyze the competitive ratio of the SDN policy. Our main result is the following theorem:

\begin{algorithm}
	\textbf{Offline Phase}:
	    \begin{enumerate}
	        \item Compute $\mathbf{x^*}$ according to \eqref{eq:xstar}.
	    \item Set $\beta_{v,1} = 1$ and $\beta_{v, t} = 1 - \sum_{t'=1}^{t-1} \sum_{\don=1}^{\D} \lambda_{\don,t'} \frac{x^*_{v,\don,t'}}{2-q}(1-G(t - t'))$ for all $v \in [V], t \in [T] \setminus [1]$
	\end{enumerate}
	
	\textbf{Online Phase}:
	\begin{enumerate}
	\item \textbf{For} $t \in [T]$:
	\begin{enumerate}
	\item If a task of type $\don$ arrives in time $t$, then:
	\begin{enumerate}
	    \item {\revcolor \textbf{For} $v \in [V]$:}
	\begin{itemize}
	\item Notify $v$ with probability $\frac{x^*_{v,\don,t}}{(2-q)\beta_{v,t}}$
	\end{itemize}
	\end{enumerate}
	\end{enumerate}
	\end{enumerate}
	\caption{Scaled-Down Notification (SDN) Policy}
	\label{alg:one}
\end{algorithm}

\begin{theorem}[Competitive Ratio of the Scaled-Down Notification Policy] \label{thm:alg1}
Suppose the MDHR of the inter-activity time distribution is $q$. Then the scaled-down notification policy, defined in Algorithm \ref{alg:one},
is $\frac{1}{2-q}(1-\frac{1}{e})$-competitive.
\end{theorem}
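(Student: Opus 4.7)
My plan is to prove Theorem \ref{thm:alg1} by mimicking the high-level structure of the proof of Theorem \ref{thm:alg2}: establish that each volunteer's contribution under the SDN policy is at least $\frac{1}{2-q} f_v(\mathbf{x^*})$, then sum using Lemma \ref{lem:falg} and invoke Proposition \ref{prop:fxstar}. Because the SDN policy is non-adaptive, the contribution analysis reduces to verifying that the quantity $\beta_{v,t}$ defined in the offline phase is indeed the ex ante probability that volunteer $v$ is active at time $t$.

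First, I would establish by induction on $t$ that $\beta_{v,t} = \Pr(v \text{ active at } t \text{ under SDN})$. The base case $\beta_{v,1}=1$ holds since all volunteers start active. For the inductive step, note that $v$ is inactive at $t$ iff she was active and notified at some prior time $t'<t$ and her inter-activity period extends past $t$. Because the notification probability at $t'$ is $\frac{x^*_{v,\don,t'}}{(2-q)\beta_{v,t'}}$, the joint probability that $v$ is notified at $t'$ equals $\sum_\don \lambda_{\don,t'}\frac{x^*_{v,\don,t'}}{2-q}$ (the $\beta_{v,t'}$ cancels). Multiplying by $(1-G(t-t'))$ and summing over $t'$ recovers exactly the update rule for $\beta_{v,t}$.

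Second, and this is the main obstacle, I need to verify that $\frac{x^*_{v,\don,t}}{(2-q)\beta_{v,t}} \in [0,1]$ so that the SDN policy is well-defined. The nontrivial direction is $x^*_{v,\don,t} \leq (2-q)\beta_{v,t}$, i.e.,
\begin{equation*}
\sum_{t'=1}^{t-1}\sum_{\don'}\lambda_{\don',t'} x^*_{v,\don',t'}(1-G(t-t')) \leq 1-q.
\end{equation*}
Here I will apply the trick from the proof of Lemma \ref{lemma:factorrevealingLP}: writing $1-G(t-t') = (1-G(t-1-t'))\bigl(1 - \tfrac{g(t-t')}{1-G(t-1-t')}\bigr) \leq (1-q)(1-G(t-1-t'))$ by the MDHR definition, then invoking feasibility of $\mathbf{x^*} \in \mathcal{P}$ at time $t-1$ to bound the remaining sum by $1$.

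Third, I would bound each volunteer's contribution. Under the SDN policy and the index-based priority scheme, volunteer $v$ completes a task of type $\don$ at time $t$ with probability $\beta_{v,t} \cdot \frac{x^*_{v,\don,t}}{(2-q)\beta_{v,t}} \cdot p_{v,\don} \cdot \Pr(\text{no }u<v\text{ responds}) = \frac{x^*_{v,\don,t} p_{v,\don}}{2-q}\prod_{u<v}\!\bigl(1-\tfrac{x^*_{u,\don,t}p_{u,\don}}{2-q}\bigr)$, where I use that notification decisions are independent across volunteers. Since $\tfrac{x^*_{u,\don,t}p_{u,\don}}{2-q} \leq x^*_{u,\don,t}p_{u,\don}$ (as $2-q\geq 1$), the product is at least $\prod_{u<v}(1-x^*_{u,\don,t}p_{u,\don})$, so the expected number of tasks completed by $v$ is at least $\frac{1}{2-q}f_v(\mathbf{x^*})$.

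Finally, summing over $v$ and using Lemma \ref{lem:falg} yields a total expected payoff of at least $\frac{1}{2-q}f(\mathbf{x^*})$, which by Proposition \ref{prop:fxstar} is at least $\frac{1}{2-q}(1-\tfrac{1}{e})\mathbf{LP}_{\set{I}}$, completing the proof.
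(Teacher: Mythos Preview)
Your outline matches the paper's proof almost exactly: Steps 1, 2, and 4 are precisely Lemma~\ref{lem:beta} and the concluding paragraph of the appendix proof, and your use of the hazard-rate trick on $1-G(t-t')$ is identical to the paper's. The only substantive issue is in Step 3.

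The equality you write for $\Pr(v\text{ completes the task})$ is not justified and is in fact false in general. You invoke ``notification decisions are independent across volunteers,'' but your product $\prod_{u<v}\bigl(1-\tfrac{x^*_{u,\don,t}p_{u,\don}}{2-q}\bigr)$ treats the events ``$u$ \emph{responds}'' as independent across $u$ and independent of ``$v$ responds.'' Responding requires being \emph{active}, and activity states of different volunteers at time $t$ are correlated through the shared arrival history up to $t-1$ (e.g., if many tasks arrived early, all volunteers are more likely inactive). A two-volunteer, two-period example with a stochastic period-1 arrival already breaks both the factorization $\Pr(v\text{ responds},\text{no }u<v\text{ responds})=\Pr(v\text{ responds})\Pr(\text{no }u<v\text{ responds})$ and the product formula for the second factor; in such examples your displayed expression strictly \emph{over}-estimates the true completion probability, so it cannot serve as a lower bound.

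The paper sidesteps this correlation by replacing ``no $u<v$ responds'' with the stronger sufficient event ``every $u<v$ is either not notified or does not match,'' which depends only on fresh period-$t$ coin flips and is therefore genuinely independent of $v$'s response. This yields directly
\[
\Pr(v\text{ completes}) \;\geq\; \frac{x^*_{v,\don,t}p_{v,\don}}{2-q}\prod_{u<v}\Bigl(1-\tfrac{x^*_{u,\don,t}}{(2-q)\beta_{u,t}}p_{u,\don}\Bigr)\;\geq\;\frac{x^*_{v,\don,t}p_{v,\don}}{2-q}\prod_{u<v}(1-x^*_{u,\don,t}p_{u,\don}),
\]
the second inequality using $\beta_{u,t}\geq\tfrac{1}{2-q}$. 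Your final bound is the same, but you need this alternative route to reach it correctly.
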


We remark that Theorem \ref{thm:alg1} implies that the competitive ratio of our policy improves as $q$ increases. However, a larger value of $q$ does not imply that the probability of notification is uniformly larger. If $q$ increases, the ex ante solution as well as the ex ante probability of being active will also change, both of which affect the notification probability.

The proof of Theorem \ref{thm:alg1} builds on  the ideas described above and consists of several steps. First, in the following lemma, we prove that for any $v \in [V]$ and $t \in [T]$, $\beta_{v,t}$ defined in Algorithm \ref{alg:one} is indeed the probability that $v$ is active at time $t$ under the SDN policy and $\beta_{v,t}$ is at least $\frac{1}{2-q}$.\footnote{We also highlight that computing  $\beta_{v,t}$ for all $v$ and $t$ can be done in polynomial time.}

\begin{lemma}[Volunteer's Active State Probability] \label{lem:beta}
For the SDN policy defined in Algorithm \ref{alg:one}, let $\mathcal{E}_{v,t}$ 
represent the event that volunteer $v$ is active in period $t$. Then for all $v \in [V]$ and all $t \in [T]$, $\P{\mathcal{E}_{v,t}} = \beta_{v,t}$. Further, $\beta_{v,t}\geq \frac{1}{2-q}$.
\end{lemma}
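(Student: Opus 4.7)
The plan is to handle the two assertions separately: an induction on $t$ for the identity $\mathbb{P}(\mathcal{E}_{v,t}) = \beta_{v,t}$, and a combination of feasibility ($\mathbf{x^*} \in \set{P}$) with the MDHR inequality for the bound $\beta_{v,t} \geq 1/(2-q)$.

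For the first claim, the base case $t=1$ is immediate since every volunteer is initially active. For the inductive step, the key structural observation is that although Algorithm \ref{alg:one} issues notifications \emph{regardless} of $v$'s activity state, only a notification that finds her active triggers a transition to inactive. Let $A_{t'}$ denote the event ``$v$ is active at $t'$ and notified at $t'$'' and let $B_{t',t}$ denote the event that the inter-activity time drawn at $t'$ exceeds $t-t'$. Then $v$ is inactive at $t$ if and only if $A_{t'} \cap B_{t',t}$ occurs for some $t'<t$. Crucially, these events are pairwise \emph{disjoint}: if $A_{t'_1} \cap B_{t'_1,t}$ occurs with $t'_1 < t'_2 < t$, then the inactive window from the notification at $t'_1$ covers $[t'_1+1, t]$, so $v$ cannot be active at $t'_2$ and $A_{t'_2}$ cannot occur.

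Because the notification randomization at $t'$ is independent of the history up to $t'$ (and hence of the activity state at $t'$), and because the inter-activity draw at $t'$ is independent of that same history, the inductive hypothesis gives $\mathbb{P}(A_{t'}) = \beta_{v,t'} \cdot \sum_{\don=1}^{\D} \lambda_{\don,t'}\frac{x^*_{v,\don,t'}}{(2-q)\beta_{v,t'}}$, in which the $\beta_{v,t'}$ factors cancel, and $\mathbb{P}(B_{t',t}) = 1-G(t-t')$. Summing the disjoint probabilities and subtracting from $1$ recovers the closed-form definition of $\beta_{v,t}$ exactly.

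For the lower bound $\beta_{v,t} \geq 1/(2-q)$, the case $t=1$ is trivial, and for $t \geq 2$ the formula reduces the task to showing $\sum_{t'=1}^{t-1} \sum_{\don} \lambda_{\don,t'} x^*_{v,\don,t'}(1-G(t-t')) \leq 1-q$. The natural tool is constraint \eqref{eq:lpcon2} at time $t-1$, which bounds the analogous sum with factor $1-G(t-1-t')$. The MDHR provides the bridge: since $\frac{1-G(s)}{1-G(s-1)} = 1 - \frac{g(s)}{1-G(s-1)} \leq 1-q$ for every $s \geq 1$, one has $1-G(t-t') \leq (1-q)\bigl(1-G(t-1-t')\bigr)$ termwise. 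Factoring out $(1-q)$ and invoking constraint \eqref{eq:lpcon2} at $t-1$ concludes.

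The main technical care needed is in establishing the disjointness of the $A_{t'}\cap B_{t',t}$ events, since the SDN policy notifies without observing activity; once that is in hand, both parts reduce to routine algebra, and the MDHR step mirrors the telescoping bound used in Lines \eqref{eq:flp2}--\eqref{eq:flp3} of the proof of Lemma \ref{lemma:factorrevealingLP}.
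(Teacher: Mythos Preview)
Your proof is correct and follows essentially the same approach as the paper's: the induction with the disjoint-events decomposition for $\mathbb{P}(\mathcal{E}_{v,t})=\beta_{v,t}$, and the MDHR-plus-constraint-\eqref{eq:lpcon2} bound for $\beta_{v,t}\geq\frac{1}{2-q}$, match the paper line for line. The only difference is order: the paper establishes the lower bound first, which is logically cleaner since it guarantees $\frac{x^*_{v,\don,t}}{(2-q)\beta_{v,t}}\leq 1$ and hence that the SDN policy is well-defined before you reason about the probabilities it induces; your bound argument is independent of the identity, so simply swapping the two parts removes this minor circularity.
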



\begin{proof}{Proof:}
We begin by proving that $\beta_{v,t} \geq \frac{1}{2-q}$ for all $v \in [V]$ and $t \in [T]$. Since we set $\beta_{v,1} =1$, this clearly holds for all $v \in [V]$ when $t=1$. Without loss of generality, we will now focus on a particular $v \in [V]$ and $t \in [T]\setminus [1]$. 
 
 Starting from the definition of $\beta_{v,t}$ in Algorithm \ref{alg:one}, we have:
 \begin{align}
    \beta_{v,t} &=  1 - \sum_{t'=1}^{t-1} \sum_{\don=1}^{\D} \lambda_{\don,t'} \frac{x^*_{v,\don, t'}}{2-q}(1-G(t - t')) \nonumber \\
    &=1-\frac{1}{2-q}\left(\sum_{t'=1}^{t-1} \sum_{\don=1}^{\D} \lambda_{\don,t'} x^*_{v,\don, t'}(1-G(t - t'-1)-g(t-t'))\right) \label{eq:betaline0} \\
    &=1-\frac{1}{2-q}\left(\sum_{t'=1}^{t-1} \sum_{\don=1}^{\D} \lambda_{\don,t'} x^*_{v,\don, t'}(1-G(t - t'-1))(1-\frac{g(t-t')}{1-G(t-t'-1)})\right)  \label{eq:betaline0.5} \\
    &\geq 1-\frac{1}{2-q}\left(\sum_{t'=1}^{t-1} \sum_{\don=1}^{\D} \lambda_{\don,t'} x^*_{v,\don, t'}(1-G(t - t'-1))(1-q)\right) \label{eq:betaline1} \\
    &\geq 1-\frac{1-q}{2-q} \label{eq:betaline2} \\
    &=\frac{1}{2-q} \nonumber
\end{align}
Line \eqref{eq:betaline0} simply follows by observing that $G(t - t') = G(t - t' -1) + g(t-t')$. In the next line, we factor out $(1-G(t - t'-1))$.\footnote{{If $1-G(t - t'-1) = 0$, then we must also have $g(t-t') = 0$. Thus, in Line \eqref{eq:betaline0.5}, we preserve the equality by following our convention that if the fraction is $\frac{0}{0}$, we define it to be equal to 1.}}
Line \eqref{eq:betaline1} comes from applying the definition of the MDHR. Because $\mathbf{x^*} \in \set{P}$ (see Definition \ref{def:P}), in line \eqref{eq:betaline2} we apply the bound given by constraint \eqref{eq:lpcon2} for $t-1$, namely, $\sum_{t'=1}^{t-1} \sum_{\don=1}^{\D} \lambda_{\don,t'} x^*_{v,\don, t'}(1-G(t - t'-1)) \leq 1$. This holds for any $v \in [V]$ and any $t \in [T]\setminus [1]$, which implies that $\beta_{v,t} \geq \frac{1}{2-q}$ for all $v \in [V]$ and $t \in [T]$.

 We now use total induction to prove that $\P{\mathcal{E}_{v,t}} = \beta_{v,t}$. For notation, we will use $\mathcal{E}^c$ to refer to the complement of event $\mathcal{E}$. At $t=1$, we have $\P{\mathcal{E}_{v,t}}=1$ and $\beta_{v,t} = 1$ by definition. Now we assume that $\beta_{v,t} = \P{\mathcal{E}_{v,t}}$ for all $t \in [k]$. 
 We prove the claim for $t = k+1$ by {\revcolor showing $\P{\mathcal{E}_{v,k+1}^c} = 1-\beta_{v, k+1}$.}
 To be inactive at time $k+1$, a volunteer must be active in some prior period $t' \in [k]$, must be notified in $t'$, and must not become active again by time $k+1$. Thus, to compute the probability that a volunteer is inactive at time $k+1$, we can sum the probabilities of $k$ disjoint events:
 \begin{equation}
   \P{\mathcal{E}_{v,k+1}^c} = \sum_{t'=1}^{k} \sum_{\don=1}^{\D} \lambda_{\don,t'} \P{\mathcal{E}_{v,t'}}\frac{x^*_{v,\don, t'}}{(2-q)\beta_{v,t'}}(1-G(k+1-t')) \nonumber
 \end{equation}
 Plugging in the inductive hypothesis that $\P{\mathcal{E}_{v,t'}} = \beta_{v,t'}$ for $t' \in [k]$, we see that these terms cancel, leaving us with 
 \begin{equation}
     \P{\mathcal{E}_{v,k+1}^c} = \sum_{t'=1}^{k} \sum_{\don=1}^{\D} \lambda_{\don,t'} \frac{x^*_{v,\don, t'}}{2-q}(1-G(k+1-t')) \LinesNotNumbered
 \end{equation}
 Noting that this sum is definitionally equivalent to $1-\beta_{v, k+1}$ completes the proof. \Halmos 
 \end{proof}

Next, utilizing the index-based priority scheme (in Definition \ref{def:priority}) and the contribution decoupling idea {(in Lemma \ref{lem:falg})}, we lower bound the contribution of each volunteer according to their priority in the following lemma:

\begin{lemma}[Volunteer Priority-Based Contribution under the SDN Policy]
\label{lem:contributionsalg1}
Under the index-based priority scheme (in Definition \ref{def:priority}) and the SDN policy, for any $\mathbf{x} \in \set{P}$, the contribution of volunteer $v \in [V]$, i.e., the expected number of tasks she completes, is at least $\frac{1}{2-q} f_v(\mathbf{x})$, with $f_v(\cdot)$ defined in \eqref{eq:decouple}.
\end{lemma}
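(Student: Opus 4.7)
The plan is to compute the contribution of $v$ directly as $\sum_{t=1}^T \sum_{\don=1}^{\D} \lambda_{\don,t}\, \P{v\text{ completes the arriving task}}$ and to show that each per-period probability is at least $\frac{p_{v,\don} x^*_{v,\don,t}}{2-q} \prod_{u<v}\bigl(1 - p_{u,\don} x^*_{u,\don,t}\bigr)$. Summing over $t$ and $\don$ would then yield exactly $\frac{1}{2-q} f_v(\mathbf{x^*})$ by the definition of $f_v$ in \eqref{eq:decouple}.

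Fixing an arriving task of type $\don$ at time $t$, let $A_w$, $N_w$, $R_w$ denote the indicators that volunteer $w$ is active at $t$, notified at $t$, and responds positively at $t$, respectively. Under the index-based priority scheme, $v$ completes the task iff $A_v N_v R_v = 1$ while $A_u N_u R_u = 0$ for all $u<v$. My first step would be to condition on the past arrival history $H_{t-1}$ and observe that the triples $\{(A_w, N_w, R_w)\}_{w \in [V]}$ are conditionally independent across volunteers given $H_{t-1}$: each $A_w$ is a function of $H_{t-1}$ and $w$'s own past notification coin flips (which are drawn independently across volunteers), while $N_w$ and $R_w$ are fresh independent Bernoullis at time $t$. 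This conditional independence lets me factor
\begin{align*}
\P{v\text{ completes}\mid H_{t-1},\,\text{task }\don} = \E{A_v N_v R_v\mid H_{t-1}} \prod_{u<v}\bigl(1 - \E{A_u N_u R_u\mid H_{t-1}}\bigr).
\end{align*}

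Next, I would use the decomposition $\E{A_u N_u R_u\mid H_{t-1}} = \P{A_u = 1\mid H_{t-1}}\cdot\frac{x^*_{u,\don,t}}{(2-q)\beta_{u,t}}\cdot p_{u,\don}$. Since $\P{A_u=1\mid H_{t-1}} \leq 1$ and $\beta_{u,t}\geq\frac{1}{2-q}$ by Lemma \ref{lem:beta}, this quantity is deterministically bounded above by $x^*_{u,\don,t}\, p_{u,\don}$. Consequently, $1 - \E{A_u N_u R_u\mid H_{t-1}}\geq 1 - x^*_{u,\don,t}\, p_{u,\don}$ holds for \emph{every} realization of $H_{t-1}$, so I can pull the product over $u<v$ outside the expectation over $H_{t-1}$. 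Combining this with the marginal identity $\E{\P{A_v=1\mid H_{t-1}}} = \beta_{v,t}$ (again by Lemma \ref{lem:beta}) evaluates the marginal of $A_v N_v R_v$ at $\frac{x^*_{v,\don,t} p_{v,\don}}{2-q}$ and delivers the per-period bound; summing against $\lambda_{\don,t}$ then gives $\frac{1}{2-q} f_v(\mathbf{x^*})$ as required.

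The main obstacle will be carefully justifying the conditional independence of the triples $(A_w, N_w, R_w)$ across volunteers given $H_{t-1}$. Marginally, the $A_w$'s are positively correlated across volunteers—a longer history of arrivals tends to make everyone more likely to have been rendered inactive—which rules out a direct FKG-type argument. However, all of this positive correlation is mediated through $H_{t-1}$: once the arrival history is fixed, each volunteer's notification coins are drawn independently of the others by the SDN policy, restoring independence. After this is in place, the rest of the proof reduces to the deterministic inequality $\frac{1}{(2-q)\beta_{u,t}}\leq 1$ combined with the marginal identity from Lemma \ref{lem:beta}, so no further probabilistic machinery is needed.
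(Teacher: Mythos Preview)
Your argument is correct. The conditioning on the arrival history $H_{t-1}$ does exactly what you say it does: once the shared randomness of past arrivals is fixed, each volunteer's activity state is a function of her own independent notification coins and inter-activity draws, so the triples $(A_w,N_w,R_w)$ are conditionally independent across $w$, and the pointwise bound $\P{A_u=1\mid H_{t-1}}\cdot\frac{x^*_{u,\don,t}}{(2-q)\beta_{u,t}}\cdot p_{u,\don}\le x^*_{u,\don,t}p_{u,\don}$ lets you pull the product outside the expectation.

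The paper takes a slightly more direct route that avoids conditioning altogether. Instead of working with the full event $\{u\text{ does not respond}\}=\{u\text{ inactive}\}\cup\{u\text{ not notified}\}\cup\{u\text{ does not match}\}$, it passes to the subevent $\{u\text{ not notified}\}\cup\{u\text{ does not match}\}$. These subevents depend only on the fresh time-$t$ notification and matching coins, so they are \emph{marginally} independent across all $u<v$ and also independent of $\{v\text{ responds}\}$; no conditioning on $H_{t-1}$ is needed. The resulting bound is identical to yours, since dropping the ``inactive'' clause for $u$ plays the same role as your bound $\P{A_u=1\mid H_{t-1}}\le 1$. Your approach has the virtue of making explicit that the only source of correlation between volunteers is the common arrival sequence, while the paper's subevent trick is a bit quicker because it sidesteps the correlation issue entirely rather than controlling it.
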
 
\begin{proof}{Proof:}
First, we focus on a particular arrival $\don \in [\D]$ at a particular time $t \in [T]$ and we show that volunteer $v \in [V]$ completes the task with probability at least
 $\frac{1}{2-q}\left(\prod_{u < v} (1-x^*_{u, \don, t}p_{u,\don})\right)x^*_{v, \don, t} p_{v,\don}$. Then we use linearity of expectations to finish the proof. 
{\revcolor Under an index-based priority scheme, a volunteer $v \in [V]$ completes a task if (i) she responds and (ii) no lower-indexed volunteer responds. These events may not be independent\footnote{Recall that a volunteer responds if she is active, notified, and matches with the task. There can be correlation between volunteers' states based on the past sequence of arrivals, which means that volunteer responses can be correlated.}; however, for event (ii) to occur, it is sufficient (but not necessary) that all lower-indexed volunteers are either not notified or do not match. 
Because notifications and matching are independent across all volunteers, this observation allows us to lower bound the probability that $v$ completes the task.
\begin{align*}
    \P{v \text{ completes the task}}&=\P{v \text{ responds and no volunteer with lower index responds}} \\ &\geq \P{v \text{ responds}}\P{ \cap_{u < v} \{u \text{ not notified or does not match}\}} \\
    & = \frac{x^*_{v, \don, t}}{(2-q)} p_{v,\don}\prod_{u < v} (1-x^*_{u, \don, t}p_{u,\don})
\end{align*}
This proves that for any arrival $\don \in [\D]$ at time $t \in [T]$, volunteer $v \in [V]$ completes the task with probability at least
 $\frac{1}{2-q}\left(\prod_{u < v} (1-x^*_{u, \don, t}p_{u,\don})\right)x^*_{v, \don, t} p_{v,\don}$. 
 
 Using linearity of expectations, we compute the expected number of tasks completed by $v$ as follows
$$ \frac{1}{2-q} \sum_{t=1}^T \sum_{\don =1}^\D \lambda_{\don,t} \left(\prod_{u < v} (1-x^*_{u, \don, t}p_{u,\don})\right)x^*_{v, \don, t} p_{v,\don} = \frac{1}{2-q}f_v(\mathbf{x^*}).$$
This completes the proof of Lemma \ref{lem:contributionsalg1}.} \Halmos \end{proof}

To finish the proof of Theorem \ref{thm:alg1}, note that Lemma \ref{lem:contributionsalg1} implies that each volunteer completes at least $\frac{1}{2-q}f_v(\mathbf{x^*})$ tasks in expectation. By linearity of expectations and Lemma \ref{lem:falg}, the expected total number of tasks completed by volunteers must be at least $\frac{1}{2-q}f(\mathbf{x^*})$. Since $f(\mathbf{x^*}) \geq (1-\frac{1}{e})\mathbf{LP}_\set{I}$ (see Proposition \ref{prop:fxstar}), it immediately follows that the SDN policy is  $\frac{1}{2-q}(1-\frac{1}{e})$-competitive. \Halmos

\begin{figure}[t]
\centering
\scalebox{1.0}[1.0]{
 \includegraphics[width=.98\textwidth]{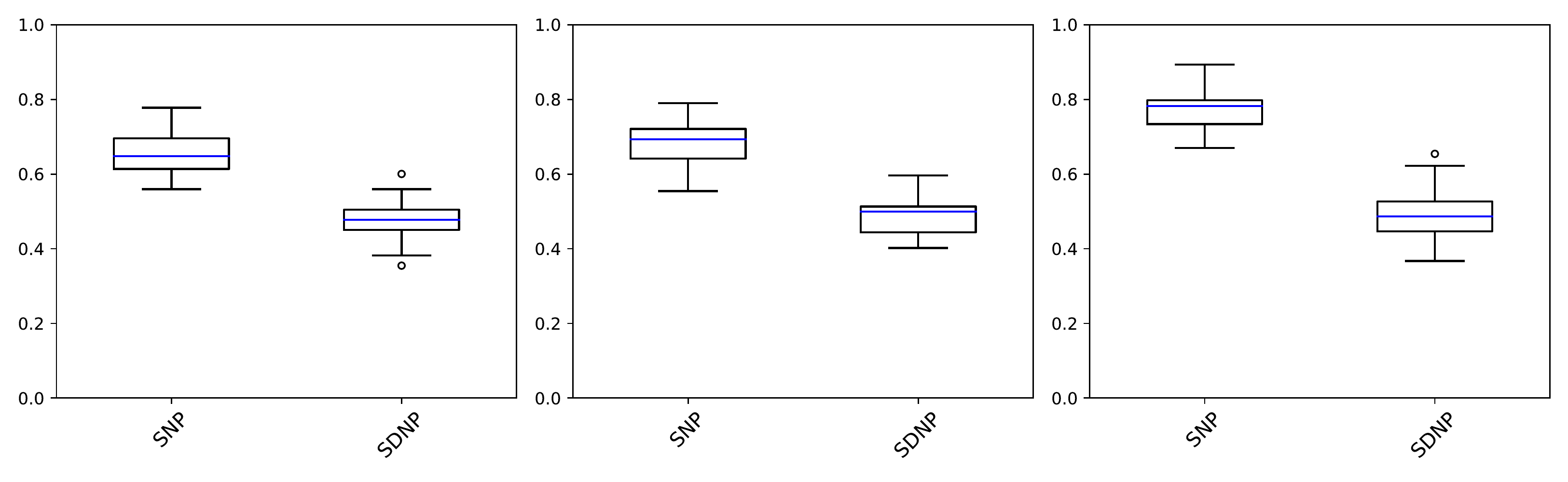}}
\caption{The fraction of $\mathbf{LP}$ achieved in Locations (a), (b), and (c) (left, middle, and right, respectively) by the SN and SDN policies assuming a deterministic inter-activity time.}
\label{fig:numericssdn}
\end{figure}

The competitive ratio of the SDN policy is identical to the competitive ratio of the SN policy, implying that in the worst case they guarantee the same performance of $\frac{1}{2-q}(1-\frac{1}{e})\mathbf{LP}$.  In addition, the SDN policy has nearly identical computational complexity as the SN policy, which we formalize in the following remark. 

\begin{remark}[Computational Complexity of SDN and SN Policies]
\label{remark:compcomplex}
First, we note that because both policies are non-adaptive, all of the computation can be done in advance. 
Next, we claim that the computational complexities of our policies are nearly identical: to see this, observe both policies require computing the same ex ante solution. 
Given an ex ante solution, the steps in the SN policy (Algorithm \ref{alg:two}) have complexity of $O(V\D T^2 + V^2\D T)$, while the steps in the SDN policy (Algorithm \ref{alg:one}) have complexity of $O(V\D T^2)$.\footnote{For two functions $d, l : \mathbb{N}\rightarrow \mathbb{R}$, $l(n) = O(d(n))$ if $\lim \sup_{n \rightarrow \infty} \frac{|l(n)|}{d(n)} < \infty $.} Finally, we remark that computing the ex ante solution according to \eqref{eq:xstar}, which is a common step shared by both policies, requires comparing three candidates: $\mathbf{x^*_{LP}}, \mathbf{x^*_{AA}},$ and $\mathbf{x^*_{SQ}}$. These ex ante candidates require solving $1$, $n$, and $V$ linear programs, respectively, where $\frac{1}{n}$ is the step size of the Frank-Wolfe variant used to compute $\mathbf{x^*_{AA}}$. These linear programs consist of $V\D T$, $V \D T$, and $\D T$ variables, respectively, and $T(\D + V+V\D )$, $T(V+V\D )$, and $T(1 + \D)$ constraints.
\end{remark}

Despite the identical competitive ratio and nearly identical computational complexity, the SN policy generally proves numerically superior to the SDN policy, which we demonstrate in both the FRUS setting and in another numerical example. In Figure \ref{fig:numericssdn} we compare the performance of the two policies in the FRUS setting described in Section \ref{sec:data} (i.e., the same setting shown in Figure \ref{fig:numerics1}). Both policies significantly outperform their theoretical guarantee, in part because using $\mathbf{x^*}$ as an ex ante solution as defined in \eqref{eq:xstar} improves upon using $\mathbf{x^*_{LP}}$ by an average of $12\%$ across the largest FRUS locations, including Locations (a), (b), and (c) shown in Figure \ref{fig:numericssdn}. However, the SN policy performs significantly better than the SDN policy across all locations.

Furthermore, instance $\set{I}_4$ (defined in the proof of Proposition \ref{prop:ubexante} in Section \ref{sec:hardness} and visualized in the bottom-right panel of Figure \ref{fig:exhardness}) provides an example where the SDN policy performs only half as well as the SN policy when $q << 1$. To show this, we first demonstrate that under the SDN policy, the volunteer is active and notified with probability $\frac{1}{2-q}$ in both periods. {This implies that she completes $\frac{1}{2-q}f_1(\mathbf{x^*})$ tasks in expectation, which is exactly the lower bound established in Lemma \ref{lem:contributionsalg1}.

Because $x^*_{1,1,1} = 1$ and $\beta_{1,1} =1$, the SDN policy will notify volunteer $1$ at time $1$ with probability $\frac{1}{2-q}$. Because $x^*_{1,2,2} = 1$, when a task of type $2$ arrives at time $2$, the SDN policy will notify her with probability $\frac{1}{(2-q)\beta_{1,2}}$. According to Lemma \ref{lem:beta}, she will be active in period $2$ independently with probability $\beta_{1,2}$.} Thus, the SDN policy achieves a value of $\frac{\epsilon}{2-q} + q\beta_{1,2}\frac{1}{(2-q)\beta_{1,2}} = \frac{1}{2-q} \mathbf{LP}_{\set{I}_4}$. 

On the other hand, we show that the SN policy performs nearly twice as well by not notifying the volunteer in period 1 and instead saving her for period 2. (i.e., $\y_{1,1,1} = 0$ and $\y_{1,2,2} = 1$)
To see this, note that the SN policy solves a DP starting from $J_{1,3} = 0$. Working backwards, the DP solution for period $2$ is $\y_{1,2,2} = 1$ and $J_{1,2} = q$. To evaluate the DP solution for period 1, we note that $r_{1,1,1} + q^2 = \epsilon + q^2 < \J_{1,2}$, assuming $q<1$ and $\epsilon$ is chosen to be sufficiently small. Thus, $\y_{1,1,1} = 0$, so the SN policy does not notify the volunteer in period 1. Instead, she is saved for period 2, where she completes a task with probability $q$. Thus, the expected number of completed tasks under the SN policy is $\frac{q}{q+\epsilon}\mathbf{LP}_{\set{I}_4}$. Comparing this to the SDN policy, which completes $\frac{1}{2-q} \mathbf{LP}_{\set{I}_4}$ tasks, we observe that {for $\epsilon << q << 1$,} we have $\frac{q}{q+\epsilon} \approx 2 (\frac{1}{2-q} )$.

Intuitively, this gap in numerical performance exists because the design of the SN policy makes individualized decisions explicitly intended to {optimally} resolve the trade-off between notifying a volunteer now or keeping her active for later {based on $\mathbf{x^*}$}. On the other hand, the design of the SDN policy aims to uniformly scale down $\mathbf{x^*}$.
As a result, the SDN policy's numerical performance is not substantially better than its worst-case guarantee. {\revcolor We remark that the SDN policy is designed based on a ``scale-down'' factor of $c=\frac{1}{2-q}$, which achieves the best worst-case guarantee. However, given a specific problem instance, it is natural to consider optimizing the scale factor $c$ to achieve the best expected numerical performance when volunteers are notified with probability $x^*_{v,\don,t} \times \min\{1, \cvar/\beta_{v,t}\}$.  In the example just considered (instance $\set{I}_4$),  the best scale factor is exactly $c=\frac{1}{2-q}$, which demonstrates that this optimization may not improve performance.}

\section{Additional Numerical Analysis}
\label{app:robustness}
\begin{table}[t]
    \caption{Mean percent change in the number of completed tasks in the presence of estimation errors.}
    \label{table:robustness}
    \centering
    \begin{tabular}{c r r r r r r}
    \toprule
    & \multicolumn{3}{c}{SDN Policy} & \multicolumn{3}{c}{SN Policy} \\
\cmidrule(lr){2-4}\cmidrule(lr){5-7}
    & Loc. (a) & Loc. (b) & Loc. (c) & Loc. (a) & Loc. (b) & Loc. (c) \\
    \cmidrule(lr){2-4}\cmidrule(lr){5-7}
         $\hat{p}_{v,\don} \in  [0.9p_{v,\don}, 1.1p_{v, \don}]$  
         &$-0.13\%$&$0.59\%$&$0.84\%$&$0.33\%$&$0.30\%$&$-0.48\%$ \\
$\hat{\lambda}_{\don,t} \in [0.9\lambda_{\don,t}, 1.1\lambda_{\don,t}]$ &$0.05\%$&$0.27\%$&$-0.45\%$&$0.01\%$&$-0.71\%$&$-1.14\%$ \\
\bottomrule
    \end{tabular}

\end{table}
In this section, we demonstrate that our two policies are robust in the presence of estimation error. 
In Table \ref{table:robustness}, we report {the average percent change in the number of tasks completed in settings where we misestimate the actual arrival rates or match probabilities.} 
Specifically, in the first row, we consider cases when $\hat{p}_{v,\don}$ is uniformly distributed in the interval $[0.9p_{v,\don}, 1.1p_{v, \don}]$. In the second row, we consider cases when $\hat{\lambda}_{\don,t}$ is uniformly distributed in the interval $[0.9\lambda_{\don,t}, 1.1\lambda_{\don,t}]$. {We generate 10 different perturbations of each of the two model primitives, and for each perturbation, we simulate our policies 500 times. } In all cases, we follow the current practice at most FRUS locations and assume the inter-activity time is deterministically equal to seven days (as discussed further in Section \ref{sec:data}). 

As we observe in Table \ref{table:robustness}, the decrease in the number of completed tasks is never more than 1.14\% for either policy. In fact, in some settings (e.g. in Location (b) for the SDN policy), the expected number of tasks completed actually increases. These results suggest that even if the ex ante solution is computed based on misestimated model primitives, notifying volunteers based on such a solution (after modifications according to  our two policies) results in good performance.

\end{APPENDICES}

\end{document}